\newtheorem{theorem}{Theorem}
\newtheorem{lemma}{Lemma}
\newtheorem{proposition}[theorem]{Proposition}
\newtheorem{corollary}[theorem]{Corollary}
\newcommand{\wst}{w_{s\setminus t}}
\newcommand{\ie}{{\it i.e.}}
\def\epsilon{\varepsilon}
\def\xstar{x^\star}
\def\wstar{w^\star}
\def\xcirc{x^\circ}
\def\showauthornotes{1}
\newcommand{\Authornote}[2]{{\sf\small\color{red}{[#1: #2]}}}
\newcommand{\Authornote}[2]{}
\title{Bargaining for Revenue Shares on Tree Trading Networks}
\author{Arpita Ghosh\thanks{Work done while the author was at Yahoo! Research.}\\
{Dept. of Information Science}\\
{Cornell University}\\
{Ithaca, NY}\\
arpitaghosh@cornell.edu
\and
Satyen Kale$^*$\\
IBM Watson\\
Yorktown Heights, NY\\
sckale@us.ibm.com
\and 
Kevin Lang\\
{Yahoo! Research}\\
{Sunnyvale, CA}\\
langk@yahoo-inc.com
\and
Benjamin Moseley\thanks{Work done while the author was at Yahoo! Research and the University of Illinois. Partially supported by NSF grant CCF-1016684. }\\
{TTI-Chicago}\\
{Chicago, IL}\\
moseley@ttic.edu
}
\begin{document}

\maketitle

\begin{abstract}
We study trade networks with a {\em tree} structure, where a seller with a single indivisible good is connected to buyers, each with some value for the good, via a unique path of intermediaries. Agents in the tree make multiplicative revenue share offers to their parent nodes, who choose the best offer and offer part of it to their parent, and so on; the winning path is determined by who finally makes the highest offer to the seller. In this paper, we investigate how these revenue shares might be set via a natural bargaining process between agents on the tree, specifically, {\em egalitarian} bargaining between endpoints of each edge in the tree.
We investigate the fixed point of this system of bargaining equations and prove various desirable for this solution concept, including (i) existence, (ii) uniqueness, (iii) efficiency, (iv) membership in the core, (v) strict monotonicity, (vi) polynomial-time computability to any given accuracy.
Finally, we present numerical evidence that asynchronous dynamics with randomly ordered updates always converges to the fixed point, indicating that the fixed point shares might arise from decentralized bargaining amongst agents on the trade network.

\end{abstract}


\section{Introduction}

Motivated by applications to ad exchanges such as the Yahoo!'s Right Media Exchange~\cite{YahooRM}, we consider a theoretical model of trade networks which take the form of a rooted tree.  In this model, publishers selling impressions can be connected via a string of {\em intermediary} ad-networks \cite{FeldmanMMP10} to advertisers interested in buying these impressions at the leaf nodes. These intermediaries want a cut of the surplus generated when a trade facilitated by them occurs. Typically, these cuts are specified as multiplicative {\em revenue shares or cuts} on edges that link a pair of entities. 
In practice, the value of these revenue shares would be set by business negotiations between the entities.  A natural theoretical question, is what constitutes a reasonable set of values for these revenue shares. Of course, a complete solution to this problem would require analyzing a very complex setting with advertisers and intermediaries optimizing over multiple heterogeneous impressions and publishers in a network setting: in this paper, we take the first steps towards understanding this problem by analyzing the sale of a single impression.

In our model, each buyer makes an offer to pay its parent intermediary in the tree a revenue share in the form of some fraction of its value for the item, \ie, for being matched to the seller;  the parent intermediary chooses the highest offer from all the buyers it is connected to. Each such intermediary then makes an offer to its parent, who selects the highest offer, and so on. Finally, the seller selects the highest offer it receives from its children in the tree, which determines the winning buyer. 
Given the tree structure and buyer values, the revenue shares completely specify the winning path and all winners' payoffs. 

The two-player bargaining problem widely studied in cooperative game theory, where the seller and a single buyer with value $v$ must fairly divide the value $v$ generated from their trade, is a special case of this setting: in the simplest version, the bargaining solution is to split the value $v$ equally amongst the two agents. Now consider a tree network where the child and parent nodes on each edge bargain about the revenue share using two-player bargaining. Here, the child might want to offer a revenue share greater than $1/2$ for two reasons:  first, the parent node might have other children to trade with that this node needs to beat out.
Second, and more unique to our setting, even if the child does beat out its siblings, the {\em parent} may not be able to make an adequately large offer to beat out {\em its} siblings higher up in the tree, and so on--- if this happens, neither the parent nor the child belongs to the winning path, and the value actually {\em realized} by the child is zero. So how much the child offers its parent, accounting for both of these effects, will depend on shares elsewhere in the tree, \ie, the revenue share negotiated on an edge depends on shares elsewhere in the tree. The key question we consider in this paper is whether there is a set of {\em mutually consistent} shares on all edges, and if yes, what kinds of outcomes it generates.

While there has been plenty of work on network bargaining problems building on the seminal work of Kleinberg and Tardos~\cite{KleinbergT08}, the model considered in those papers is unsuitable to our problem since the values that are being bargained on the edges are {\em exogenous}. In our setting, the value being split on an edge is {\em endogenous}, depending on the splits elsewhere in the tree. In the language of bargaining games, in the Kleinberg-Tardos model for network bargaining the feasible set for the bargaining problem on each edge is independent of shares on other edges (although the disagreement point is not), whereas in our setting the feasible set for an edge in the tree bargaining problem changes with shares elsewhere in the tree.

Such endogenous values on edges arise naturally in bargaining networks arising from trading settings, where there is competition for goods being sold. We consider the simplest possible version of this new bargaining model, which is bargaining on a tree. There are certainly many possible generalizations, but the goal of this paper is to analyze the simplest setting fully.
Thus, while many ad-hoc schemes can be proposed to compute these values, in this paper we investigate a natural bargaining game motivated by the fact that entities in the trade network negotiate revenue shares, and show that the outcome corresponding to the {\em unique fixed point} of this bargaining game on the trade network has many desirable properties.

\paragraph{Overview of Conceptual Contributions and Technical Results.} Our key conceptual contribution in this paper is the formulation of a bargaining game on the trading tree and a new solution concept for the game based on fixed points of the bargaining game. In this bargaining game, the nodes at the endpoints of each edge in the tree negotiate about how to split the value arriving at the child node according to two-player {\em egalitarian}, or proportional, bargaining~\cite{kalai}, given the splits elsewhere in the tree (Section \ref{sec:bargaining}). Each such two-player bargaining game gives us a (non-linear) equation for the revenue share on that edge, and a fixed point of the game is a simultaneous solution to this system of equations.


We first show a reduction from tree bargaining to path bargaining (Section \ref{sec:reduction}). We prove that for any tree and any set of buyer values, there is a reduced {\em path} such that every fixed point of the system of bargaining equations on the path corresponds to a fixed point of the system of bargaining equations on the tree, and vice versa.
This reduction to path bargaining allows us to analyze a smaller set of simultaneous equations, with one variable for each edge in the {\em path}, which we use to
prove the following set of results: 
\begin{enumerate}
\itemsep-2pt
\item {\em Existence and uniqueness:} We show that a fixed point to the bargaining equations always exists; further, the fixed point is unique.
\item {\em Efficiency:} The winner is always a buyer with highest value. (This is not true with Nash bargaining.)
\item {\em Core:} The payoffs given by the fixed point of the bargaining equations belong to the {\em core} of the natural cooperative game corresponding to our setting.
\item {\em Strict Monotonicity:} If the maximum value in the subtree rooted at a node in the winning path increases, the node's final payoff strictly increases as well.
\item {\em Computability:} The fixed point of the system of bargaining equations can be efficiently computed to any accuracy in polynomial time by a centralized algorithm.
\end{enumerate}\vspace{-2pt}
Finally, in Section \ref{sec:dynamics}, we present exhaustive numerical simulations indicating that {\em asynchronous dynamics}, where at each step a random edge in the tree renegotiates the share $x_e$ given the current shares in the remaining edges, converges rapidly to the fixed point-- this suggests that decentralized bargaining on edges should lead to the shares specified by this fixed point.

The outcome corresponding to the fixed point of the bargaining equations can be thought of as a solution concept for the corresponding cooperative game. A natural question is the suitability of other solution concepts such as the Shapley value or the nucleolus for our setting, or using Nash bargaining instead to define the solution concept: all these candidates seem to have some deficiency compared to our concept. We refer the interested reader to the appendixes~\ref{app:other} and \ref{app:nash} for a discussion.

\paragraph{Techniques.} Our results are based on several analytical and combinatorial techniques. First, we prove several structural properties that {\em any} fixed point solution, if one exists, must satisfy, which allows us to reduce the general tree bargaining problem to a structurally simpler path bargaining problem (see Section \ref{sec:reduction}). Next, for the path bargaining problem, we use analytic techniques to deduce certain monotonicity properties of any fixed point solution. These properties  directly give us uniqueness of the fixed point, assuming it exists. To show existence, we appeal to Brouwer's fixed point theorem by constructing a continuous mapping that is closely related to the bargaining equations. Our proofs of the core and strict monotonicity properties of the fixed point are again based on analytic techniques, and the use of an optimal substructure result for the fixed point which follows from our uniqueness result. Finally, by refining our monotonicity arguments quantitatively, we give an algorithm based on binary search to compute the fixed point to any accuracy, with running time that is polynomial in the number of nodes and the {\em logarithms} of the accuracy parameter and the gap between the highest and second highest values.

\paragraph{Related work.} The problem we study relates to many well-studied branches of the economics and computer science literature. The question of how agents on the winning path should split the generated value can be thought of as a {\em fair division} or revenue-sharing problem on which there is an extensive literature, albeit in settings different from ours; for an overview, see~\cite{moulin}. 
The work of Blume et al. \cite{BlumeEKT07} is perhaps the most similar in spirit to ours from this literature, though it looks at a different setting where traders set prices strategically and buyers and sellers react to these offers in a general trade network, and investigates subgame perfect Nash equilibria.

There is much recent work on bargaining in social networks, starting with the work of Kleinberg and Tardos~\cite{KleinbergT08}.
This work extends the classic two-player bargaining problem to a network where pairs of agents, instead of bargaining in isolation, can choose which neighbor to bargain with\footnote{For a nice survey of the literature on network exchange theory as well as two-player bargaining, see ~\cite{KleinbergT08,ChakrabortyK08} respectively.}.
A number of papers since \cite{KleinbergT08} have addressed computational and structural aspects of the network bargaining problem, as well as extensions to the model and dynamics; see~\cite{ChakrabortyK08,ChakrabortyKK09,AzarBCDP09,CelisDP10,AzarDJR10,Kanoria10}. While there are similarities between the network bargaining and our model, there are also fundamental differences: there, an outcome is a {\em matching} on the network, whereas we seek a {\em path}. More importantly, the values that are being bargained over on the edges are {\em exogenous} in their model, while in ours the value being split on an edge itself {\em depends} on the splits elsewhere in the tree: in the language of bargaining games, the feasible set for the bargaining problem on each edge is {\em independent} of shares on other edges in the network bargaining problem (although the disagreement point is not), whereas the feasible set for an edge in the tree bargaining problem {\em changes} with shares elsewhere in the tree.

\section{Model}
\label{s-model}
There is a seller selling a single item, buyers, each of whom derives some value from the item, and a number of intermediaries who assist in connecting buyers to the seller.
The trade network between these agents is given by a {\em rooted tree} $T$: the leaf nodes in $T$ (denoted generically by $l$) are the buyers, the root $r$ is the seller, and the internal nodes (denoted generically by $i$) are the intermediaries. We use $v_l$ to denote leaf $l$'s value for the item. 
The tree structure of the trade network means that each buyer has a {\em unique} path to the seller.
An {\em instance} $(T, \vec{v})$ of the tree bargaining problem is specified by the tree topology $T$, and the values $v_l$ at the leaves of $T$.

We use $e$ to denote edges 
 and $p$ to denote paths connecting the seller and a buyer in $T$. Given a path $p = \{r, i_1, \ldots, i_k, l\}$, we define the value of the path $v(p) = v_l$. For any two nodes $t_1$ and $t_2$ let $p_{t_1t_2}$ denote the unique path from $t_1$ to $t_2$ in the tree $T$. A child node in the tree makes an offer to its parent, who chooses the highest of these and offers part of it to {\em its} parent, and so on, as described next.

The endpoints of each edge $e=(t,s)$ in $T$ {\em split} the value that arrives at the child node $t$, specifying what portion of this value $t$ retains and what portion it is willing to pass up to $s$. We use $x_e$, where $x_e \in [0,1]$, to denote the multiplicative split or `revenue share' on edge $e$: if the potential value\footnote{We say potential value because this value is realized only if these nodes belong to the winning path.} arriving at $t$ is $w_t$, $t$ keeps $w_t(1-x_e)$ and passes up $w_t x_e$ to $s$. We use the multiplicative split $x_e$ rather than an additive split for convenience in correctly writing the bargaining equations. Note that the value of $x_e$ can, of course, depend on $w_t$, as well as the splits $x_{e'}$ on other edges $e' \in T$.

Given an instance $(T, \vec{v})$, an outcome consists of a {\em winning buyer} $l^*$, which also specifies the {\em winning path} $p^* = p_{l^*r}$, and a split of the value $v_{l^*}$ amongst the nodes on the winning path (including the leaf and the root).

The set of revenue shares $x_e$ completely specifies the outcome for an instance $(T,v)$ as follows. Every node in the tree, when presented with multiple children offering different payoffs, chooses to transact with the child that gives her the highest payoff.
Define the value reaching a non-leaf node $s \in T$, $w_s$, recursively as follows. Set $w_l = v_l$ for all leaves $l$, and let $C_{s}$ be the set of children of $s$ in $T$. Then, we have $w_s = \max_{t\in C_s} x_{ts}w_t$.

Let $t^*(s) = \arg\max_{t\in C_s} x_{ts}w_t$, with ties broken arbitrarily, denote the `winning child' of the parent node $s$.
The path  $p^* = (r, t^*(r), t^*(t^*(r)), \ldots, l^*)$ from root $r$ to leaf $l^*$ is the {\em winning path}, and $l^*$ is the {\em winning buyer}. 
The value $v_{l^*}$, generated by matching $l^*$ to $r$ is split among the nodes on $p^*$ using the revenue shares on edges of $p^*$. For all other nodes in the tree, the payoff is zero.


This setting can also be modeled as a cooperative game; we do this in Section \ref{sec:prop}.


\section{Bargaining on Trees}
\label{sec:bargaining}
Given an instance $(T,\vec{v})$, the splits $x_e$ on the edges $e \in T$ completely specify the outcome, namely who the winning agents are, and what payoffs they receive. How might these splits $x_e$ be determined?

We consider a bargaining-based determination of the shares $x_e$. We suppose that the agents corresponding to the endpoint of each edge negotiate according to two-player bargaining about how to split the value arriving at that edge.
The trading tree structure affects the two-player bargaining that takes place on each edge in two ways: first, the disagreement point for the parent node is determined by the offers it negotiates with its other children, and second, the feasible set of splits depends on the revenue share on the edge connecting the parent node to {\em its} parent, because the parent node must pass up this fraction of the value that it receives from the split.
Note that the revenue shares on these edges all influence each other, since the split of the value on one edge influences the bargaining power and therefore the split of the value on a different edge.

A natural choice for $x_e$, then, would be a {\em fixed point} to the system of bargaining equations, that is, a set of splits that are mutually consistent in the following sense: given the shares $x_{e'}$ on all remaining edges $e'$, the solution $y_e$ to the two-player bargaining problem on any edge $e$ with parameters specified by the remaining $x_{e'}$ is precisely $x_e$. It is not clear if such a fixed point exits, and even if it does, whether the final winner in a fixed point is the buyer with highest value.


\paragraph{Bargaining equations.}
The egalitarian, or proportional, bargaining solution~\cite{kalai} for the two-player bargaining problem on the edge $e= (t,s)$, given the shares $x_{e'}$ on all other edges $e' \in T$, specifies that the parent and child node each receive an equal incremental benefit from participating in the transaction.


Let $C_{s}$ be the set of children of $s$ in $T$. Let $s'$ be the parent of $s$ (if $s = r$, we consider a fictitious parent $r'$ of $r$, with the revenue share on edge $(r, r')$ always set to $0$). Define $\wst = \max_{t'\in C_s\setminus t} w_{t'}x_{t's}$. This is the maximum value that would reach $s$ given a set of shares $x$ if $t$ did not exist as a child of $s$.
Then, the two-player egalitarian bargaining solution on $(t,s)$ specifies splitting $w_t$, the value reaching node $t$, according to $x_{ts}$ where $x_{ts} \in [0,1]$ satisfies
\begin{equation}
\label{eqn:edge}
(1-x_{ts})w_t = (1-x_{ss'})\left(\max(\wst, w_{t}x_{ts}) - \wst \right).
\end{equation}

The left-hand side is the incremental benefit to node $t$ from transacting with $s$: it receives a payoff of $(1-x_{ts})w_t$ if it retains the edge with $s$, and nothing if it cuts off the edge. The right-hand side is the incremental benefit to the parent node $s$: if it retains the edge $(t,s)$, $s$ can choose the highest payoff from $C_s$ of which it will keep a $(1-x_{ss'})$ share (since it needs to share this payoff with its parent); if it cuts off the edge $(t,s)$, it only gets $(1-x_{ss'})$ times the highest payoff from the set $C_s \setminus t$.

The system of bargaining equations is given by writing (\ref{eqn:edge}) for all edges in the tree. A solution to this system is a fixed point of the bargaining game on the tree. 

{\bf Note:} It may seem that Equation (\ref{eqn:edge}) implicitly assumes that the parent node $s$ indeed lies on the winning path because the payoff to $s$ is $(1-x_{ss'})(\max_{t \in C_s} x_{ts}w_t)$ {\em only if} $s$ lies on the winning path, and is $0$ otherwise.
However, we can show (see Lemma~\ref{lem:ones}) that when $x$ is {\em fixed point} of these equations as opposed to an arbitrary set of shares, and if $s$ does not lie on the winning path, then $x_{ss'} = 1$, so that the right hand side is indeed $s$'s payoff (viz., $0$). Thus, Equation (\ref{eqn:edge}) holds for a fixed point solution irrespective of whether or not $s$ lies in the winning path.

\section{Reduction to path bargaining}
\label{sec:reduction}
The fixed point computation on the tree can be reduced to finding a fixed point of bargaining equations on a {\em single path}-- the path from the least common ancestor of the highest value leaves to the root (if there is a unique leaf with highest value, this is the path from that leaf to the root). For want of space we omit this reduction.

We summarize the reduction as follows. Let $v^\star = \max_l v_l$ be the maximum value in $T$. Find the least common ancestor $s_0$ of the leaves $\{l_1, \ldots, l_k\}$ with
$v_{l_i} = v^*$.
Remove the entire subtree rooted at $s_0$, and replace it with a fictitious buyer with value $d_0 = v^*$ at $l^*=s_0$.

Let the path from $l^*$ to the root be of length $n$; call this path $P^*$. We relabel nodes from
$l^*$ to the root $0, 1, \ldots, n$ (so that $l^*$ is $0$ and the root is $n$).
For $i \in [n]$, $e_i$ is the edge connecting $i-1$ to $i$.  We can show (see Appendix~\ref{app:reduction}) that $x_e =
1$ for all other edges $e \in T$. So to each node $i = 1, \ldots, n$, we can add a single edge with $x_e = 1$, to a fictitious buyer--- this fictitious buyer's value is the largest value excluding $v^*$ in the subtree rooted at $i$.
Call this value $d_i$; this is node $i$'s disagreement
point, and we may also think of $d_i$ as node $i$'s bid for the item being sold. We refer to this reduced instance as a path because the only edges with unknown revenue shares $x_i$ lie on a path.
Denote this new path bargaining instance by $(P^*, \vec{d})$. Note that $d_i$ is {\em strictly} less than $d_0$ for $i = 1, \ldots, n$. The following theorem, proved in Appendix~\ref{app:reduction}, summarizes this reduction:
\begin{theorem}
\label{t-tree-to-path} Given an instance $(T,\vec{v})$ of the tree bargaining
problem, construct the path bargaining instance $(P^*,\vec{d})$ as described
above. Then, $\vec{x}$ is a fixed point for $T$ if and only if $x_e = 1$ for $e
\notin P^*$, and the shares $x_e$, $e \in p^*$ constitute a fixed point to the
path bargaining problem $(P^*, \vec{d})$.
\end{theorem}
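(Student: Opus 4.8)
The plan is to prove the two directions of the equivalence by first establishing the structural claims that justify the reduction, and then checking that the bargaining equations on $P^*$ and on $T$ agree. The key preliminary facts I would need are: (a) in \emph{any} fixed point $\vec{x}$ of the tree bargaining equations, the winner is (one of) the highest-value leaves, so the winning path passes through $s_0$; (b) for every edge $e\notin P^*$, a fixed point must have $x_e=1$; and (c) for each node $i$ on $P^*$, the quantity $\wst[i]$ appearing in Equation~(\ref{eqn:edge}) — the best value reaching $i$ from children other than $i-1$ — equals exactly the disagreement point $d_i$ defined in the reduction, once we know that all off-path edges carry share $1$. Fact (b) is the crux: I would argue it by considering an edge $e=(t,s)$ with $t$ not an ancestor of any highest-value leaf. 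If $x_e<1$, then by Equation~(\ref{eqn:edge}) node $t$ retains strictly positive value $(1-x_e)w_t$, which forces $t$ onto the winning path, contradicting that the winner is a highest-value leaf and that $t$ is not on that path; hence $x_e=1$. (This is presumably the content of Lemma~\ref{lem:ones} and the claim cited from Appendix~\ref{app:reduction}, which I am allowed to assume.)

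Granting (a)--(c), the forward direction is almost bookkeeping. Suppose $\vec{x}$ is a fixed point for $T$. By (a) and (b), $x_e=1$ for all $e\notin P^*$, and the winning path contains $P^*$. It remains to check that the shares on $P^*$ satisfy the path bargaining equations with disagreement points $\vec{d}$. Fix an edge $e_i=(i-1,i)$ on $P^*$. In Equation~(\ref{eqn:edge}) written for $e_i$ in the tree, the term $\wst[i]$ is the max over children $t'$ of $i$ with $t'\neq i-1$ of $w_{t'}x_{t'i}$; every such $t'$ lies off $P^*$, so by (b) the entire subtree hanging off $i$ contributes its raw maximum leaf value (shares are all $1$ along the way), which is precisely $d_i$ by construction — except that a highest-value leaf cannot lie in that subtree (it would have to be in the subtree rooted at $s_0\subseteq P^*$), so indeed $\wst[i]=d_i<d_0$. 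Also $w_{i-1}$, the value reaching node $i-1$ along $P^*$, is exactly the value $w$ that the path bargaining instance assigns to node $i-1$. Substituting, Equation~(\ref{eqn:edge}) becomes verbatim the path bargaining equation for $e_i$ in $(P^*,\vec{d})$. Doing this for every $i\in[n]$ shows the restriction of $\vec{x}$ to $P^*$ is a fixed point of the path problem.

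For the converse, suppose $\vec{x}$ restricted to $P^*$ is a fixed point of $(P^*,\vec{d})$, and set $x_e=1$ for every $e\notin P^*$; I must check this combined assignment satisfies Equation~(\ref{eqn:edge}) for \emph{every} edge of $T$. For edges on $P^*$ the computation above runs in reverse, using that $\wst[i]=d_i$ (which holds because off-path edges now have share $1$ by fiat). For an off-path edge $e=(t,s)$ with $x_e=1$: the left-hand side of~(\ref{eqn:edge}) is $(1-1)w_t=0$, and the right-hand side is $(1-x_{ss'})\bigl(\max(\wst,w_t x_{ts})-\wst\bigr)$; since $x_{ts}=1$ we have $w_t x_{ts}=w_t$, and I claim $\wst\ge w_t$, i.e.\ removing $t$ as a child of $s$ does not lower the value reaching $s$ — equivalently $w_t x_{ts}=w_t$ is not the unique maximizer, which is ensured because the highest-value leaf is \emph{not} in $t$'s subtree (it is reached via $P^*$, or via $s_0$), so some other child of $s$ (ultimately leading toward $s_0$ or a larger $d$) delivers at least $w_t$. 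Hence $\max(\wst,w_t)-\wst=0$ and both sides of~(\ref{eqn:edge}) vanish, so the equation holds. One subtlety: at branch points where $s$ itself is on $P^*$, "the highest-value leaf is reached via the on-path child" is what makes $\wst\ge w_t$; I would phrase this carefully using the definition of $s_0$ as the least common ancestor of all highest-value leaves, which guarantees that off \emph{every} node of $P^*$ the off-path subtrees have max value $d_i<d_0$. This case analysis at the branch nodes is the part I expect to require the most care; everything else is substitution into~(\ref{eqn:edge}).
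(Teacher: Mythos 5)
Your decomposition mirrors the paper's (the lemmas in Appendix~\ref{app:reduction}: $x_e>0$, propagation of shares equal to $1$ down a subtree, Lemmas~\ref{lem:maxwin}, \ref{lem:ones}, and \ref{l-ties}), and the forward direction is essentially correct modulo the ties caveat you already flag. However, there is a concrete error in your argument for the converse direction.

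You claim that for every off-path edge $(t,s)$ with $x_{ts}=1$ the other children of $s$ already supply at least as much value as $t$, i.e.\ $\max_{t'\in C_s\setminus t} w_{t'}x_{t's} \ge w_t$, so that the $\max$ on the right-hand side of~(\ref{eqn:edge}) is achieved without $t$ and the right-hand side vanishes. This is true when $s\in P^*$: $s$'s on-path child delivers $w_{i-1}x_i=w_i$, and a path fixed point satisfies $w_i\ge d_i\ge w_t$. But when $s$ itself is off $P^*$, the claim is generally false: $s$ has no child leading toward $s_0$, all of its children's subtrees contribute their raw maximum leaf value (since their shares are all $1$), and if $t$ is the child whose subtree contains the largest such leaf, then in fact $\max_{t'\ne t} w_{t'}x_{t's} < w_t$. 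What makes the equation hold at such an edge is the other factor: $s$ being off-path forces $x_{ss'}=1$, so $(1-x_{ss'})=0$ and the right-hand side vanishes regardless of the $\max$. Your proposal identifies the branch points ($s\in P^*$) as the subtle case, but that is precisely the case where your inequality does hold; the case where it fails --- $s$ strictly inside a side subtree --- is the one your argument silently skips, and it needs the $x_{ss'}=1$ reason instead.

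A smaller issue, which you partially hedge away by invoking the paper's lemmas: your sketch of why $x_e=1$ off-path only covers edges $(t,s)$ with $t$ not an ancestor of any highest-value leaf. When there are ties for $v^*$, there are also edges strictly below $s_0$ that lie outside $P^*$; handling those requires a separate argument (Lemma~\ref{l-ties}), since the ``$t$ would be forced onto the winning path'' reasoning is not a contradiction when $t$ actually does lead to a maximum-value leaf. The paper instead shows that at each tie node the losing siblings must pass up share $1$, hence the winning sibling does too since its offer then cannot exceed theirs.
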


\section{Existence and Uniqueness of Fixed Point}
\label{sec:fixedpoint}

We now investigate fixed points of the path bargaining problem, having shown
that every tree bargaining instance can be reduced to a path bargaining
instance. To maintain the flow for easier reading, all proofs in this section are deferred to Appendix~\ref{app:omit}.

Recall that the value at node $0$ is $d_0$ and the remaining values at the
leaves $d_1, d_2, \ldots, d_n$ are all strictly less than $d_0$. The share on edge $e_i$
is $x_i$. For notational convenience, we assume there is a fictitious edge
$e_{n+1}$ going up from the root to a fictitious node labeled $n+1$ with share
$x_{n+1} := 0$. For $i = 0, 1, \ldots, n$, define $w_i = d_0\prod_{j=1}^i x_j$,
i.e. the value that reaches node $i$. 

A fixed point solution $\vec{x} = \langle x_1, x_2, \ldots x_n \rangle$
satisfies the bargaining equations (\ref{eq:balance}) for all edges $i$, with
$x_i \in [0, 1]$: that is, it simultaneously solves the following system of equations,
one for each edge $e_i$:
\begin{equation} \label{eq:balance}
(1 - x_i)w_{i-1}\ =\ (1 - x_{i+1})(x_iw_{i-1} - d_i).
\end{equation}

We note that in replacing the $\max\{x_iw_{i-1}, d_i\}$ term by $x_iw_{i-1}$ on
the right-hand side of the bargaining equation, we have used the fact (shown in the proof of Lemma~\ref{lem:maxwin}) that we must have $w_{i-1} x_i \geq d_i$
in {\em any} fixed point $x_i$ since $d_i < d_0$.

We can rewrite each bargaining equation in two ways: the ``upward equation''
gives $x_{i+1}$ in terms of $x_i$:
\begin{equation} \label{eq:upward}
x_{i+1} = 1 - \frac{(1 - x_i)w_{i-1}}{x_iw_{i-1} - d_i} = 1 - \frac{w_{i-1} - w_i}{w_i - d_i}.
\end{equation}
The ``downward equation'' gives $x_i$ in terms of $x_{i+1}$:
\begin{equation} \label{eq:downward}
x_i = \frac{w_{i-1} + (1 - x_{i+1})d_i}{(2 - x_{i+1})w_{i-1}}.
\end{equation}

Now we show that a fixed point to the path bargaining equations always
exists, and is unique. The existence proof is via Brouwer's fixed point
theorem. We show that the mapping $f$ that is (essentially) obtained by
simultaneous updates to the shares on all edges using the downward
equations~(\ref{eq:downward}) is a continuous mapping from $[0, 1]^n$ to
itself. The uniqueness proof requires more effort. We write two equations for
$x_n$ in terms of $x_1$: one by using the upward equations~(\ref{eq:upward})
and one by using the downward equations~(\ref{eq:downward}). These equations
can be represented by two curves, and any intersection point of the two curves
leads to a fixed point. We next show that in the feasible range for the curves, one is strictly increasing, and the other strictly decreasing; thus there is a
unique intersection point. We now formalize this.

First, we use the upward equations to write $x_2, x_3, \ldots, x_n$ in terms of $x_1$ and $\vec{d}$. However, not every value of $x_1 \in [0, 1]$ will give us
to values of $x_i$ in $[0, 1]$ and $w_i > d_i$. We will say that $x_1$ is {\em
feasible} if it does lead to $x_i \in [0, 1]$ and $w_i > d_i$. The
following lemma characterizes some monotonicity properties of the $x_i$'s and $w_i$'s when written in terms of $x_1$.

\begin{lemma} \label{lem:increasing}
If $x_1' < 1$ is feasible, then for all $x_1 \in [x_1', 1)$, and for all $i =
1, 2, \ldots, n$:
\begin{enumerate*}
    \item $x_i \in [x_i', 1)$
    \item $w_i > d_i$.
    \item $\frac{dx_i}{dx_1} > 0$ (so $x_i$ is strictly increasing as a function of $x_1$).
    \item $\frac{dw_i}{dx_1} > 0$ (so $w_i$ is strictly increasing as a function of $x_1$).
\end{enumerate*}
Here, $x_i$, $w_i$ (resp. $x_i'$, $w_i'$) etc. are defined by
$x_1$ (resp. $x_1'$) using the upward equations (\ref{eq:upward}).
\end{lemma}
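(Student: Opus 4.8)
The plan is to prove the four assertions simultaneously by induction on $i$, with the induction hypothesis quantified over all $x_1 \in [x_1', 1)$ at once. To keep the notation clean I would write $\frac{dx_i}{dx_1}$ and $\frac{dw_i}{dx_1}$ for derivatives with respect to $x_1$, reserving the primes $x_i'$, $w_i'$ for the values at $x_1 = x_1'$ as in the statement, and I would use the conventions $w_0 = d_0$ (a constant, so $\frac{dw_0}{dx_1}=0$) and the chain-rule identity $\frac{dw_i}{dx_1} = \frac{dx_i}{dx_1}\,w_{i-1} + x_i\,\frac{dw_{i-1}}{dx_1}$ coming from $w_i = w_{i-1}x_i$. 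Differentiability is never an issue: each $x_i(x_1)$ is built from $x_1$ by composing rational maps whose denominators are exactly the quantities $w_j - d_j$, and these stay positive by assertion~2, which is being proved in the same induction.

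For the base case $i=1$ everything is immediate: $x_1 \in [x_1',1)$ is the hypothesis on the domain, $\frac{dx_1}{dx_1} = 1 > 0$, and $w_1 = d_0 x_1$ gives $\frac{dw_1}{dx_1} = d_0 > 0$; finally $w_1 = d_0 x_1 \ge d_0 x_1' = w_1' > d_1$ by feasibility of $x_1'$. For the inductive step, assume the four claims for indices $\le i$. Since the upward equation defining $x_{i+1}$ has denominator $w_i - d_i > 0$ (assertion~2 at $i$), the functions $x_{i+1}$ and $w_{i+1}$ are well defined and differentiable on the whole interval. Differentiating $x_{i+1} = 1 - \frac{w_{i-1}-w_i}{w_i - d_i}$, substituting $w_i = w_{i-1}x_i$ and the chain-rule expression for $\frac{dw_i}{dx_1}$, and simplifying, the terms of indeterminate sign cancel and one is left with
\begin{equation*}
\frac{dx_{i+1}}{dx_1} \;=\; \frac{\frac{dx_i}{dx_1}\,w_{i-1}(w_{i-1}-d_i) \;+\; \frac{dw_{i-1}}{dx_1}\,d_i\,(1-x_i)}{(w_i - d_i)^2}.
\end{equation*}
This is manifestly positive: by the induction hypothesis $\frac{dx_i}{dx_1} > 0$, $\frac{dw_{i-1}}{dx_1} \ge 0$, $x_i < 1$, $d_i \ge 0$, and $w_{i-1} > w_i > d_i \ge 0$, so $w_{i-1} > 0$ and $w_{i-1} - d_i > 0$. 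This gives assertion~3 at $i+1$; strict monotonicity of $x_{i+1}$ then yields $x_{i+1} \ge x_{i+1}' \ge 0$, while $x_{i+1} < 1$ follows directly from the upward equation since $w_{i-1} > w_i > d_i$ makes the subtracted fraction positive, which is assertion~1. Assertion~4 at $i+1$ is then $\frac{dw_{i+1}}{dx_1} = \frac{dx_{i+1}}{dx_1}\,w_i + x_{i+1}\,\frac{dw_i}{dx_1} > 0$, and assertion~2 follows because $w_{i+1}$ is therefore increasing, so $w_{i+1} \ge w_{i+1}' > d_{i+1}$ by feasibility of $x_1'$.

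The only nonroutine step is the cancellation identity for $\frac{dx_{i+1}}{dx_1}$: one has to make the substitution $\frac{dw_i}{dx_1} = \frac{dx_i}{dx_1}w_{i-1} + x_i\frac{dw_{i-1}}{dx_1}$ and then notice that the derivative, which a priori could have either sign, reorganizes into a ratio with a visibly nonnegative numerator whose first summand is strictly positive. The other point requiring care — more bookkeeping than difficulty — is that the recursion remains nonsingular as $x_1$ ranges over $[x_1',1)$; this is precisely assertion~2 ($w_i > d_i$), and it is kept alive at each stage by combining the feasibility of $x_1'$ with the monotonicity of $w_i$ established in the same induction. Everything else is routine propagation through the four statements in the order 3, 1, 4, 2.
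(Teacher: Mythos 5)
Your proof is correct and follows essentially the same route as the paper's: induction on $i$, base case at $i=1$ verified directly, and in the inductive step the chain rule (decomposing $x_{i+1}$ as a function of $x_i$ and $w_{i-1}$) gives exactly the same expression $\frac{dx_{i+1}}{dx_1} = \frac{w_{i-1}(w_{i-1}-d_i)}{(w_i-d_i)^2}\frac{dx_i}{dx_1} + \frac{(1-x_i)d_i}{(w_i-d_i)^2}\frac{dw_{i-1}}{dx_1}$, from which positivity and then assertions 1, 4, 2 follow in the same order. You are slightly more careful than the paper at the $i=1$ step: you correctly note $\frac{dw_{i-1}}{dx_1}\ge 0$ (to cover $w_0 = d_0$ constant) and rely on the strictly positive first summand, whereas the paper asserts $\frac{dw_{i-1}}{dx_1}>0$ which fails for $i=1$; this is a harmless slip on their side since the first term already forces strict positivity.
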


Since $x_i = 1$ for all $i$ is a feasible solution, in particular $x_1 = 1$ is feasible, and Lemma~\ref{lem:increasing} immediately implies the following structure of the feasible region:
\begin{lemma} \label{lem:feasible-region}
Let $x_1^\circ = \inf\{x_1:\ x_1 \text{ is feasible}\}.$ Then the
feasible region for $x_1$ is either the interval $[x_1^\circ, 1]$ or $(x_1^\circ, 1]$, depending on whether $x_1^\circ$ is feasible or not.
\end{lemma}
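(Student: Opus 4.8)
The plan is to derive this lemma as an almost immediate corollary of Lemma~\ref{lem:increasing}, treating that lemma as a black box. The only extra ingredient needed is a concrete feasible seed point: taking $x_1 = 1$ forces $x_i = 1$ for all $i$ via the upward equations~(\ref{eq:upward}) (or directly), hence $w_i = d_0 > d_i$ for every $i$, so $x_1 = 1$ is feasible. In particular the set of feasible $x_1$ is nonempty, $x_1^\circ$ is well-defined, and $x_1^\circ \le 1$.

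The key structural claim I would establish first is that the feasible set is \emph{upward closed} inside $[0,1]$: if $x_1'$ is feasible, then every $x_1 \in [x_1', 1]$ is feasible. For $x_1' = 1$ this is trivial. For $x_1' < 1$, Lemma~\ref{lem:increasing} states that for all $x_1 \in [x_1', 1)$ and all $i$ we have $x_i \in [x_i', 1) \subseteq [0,1]$ and $w_i > d_i$, which is precisely the definition of $x_1$ being feasible; combined with the feasibility of $x_1 = 1$ established above, this gives feasibility of all of $[x_1', 1]$.

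From upward closedness the lemma follows by a short case analysis on whether the infimum is attained. If $x_1^\circ$ is feasible, upward closedness gives $[x_1^\circ, 1] \subseteq \{x_1 : x_1 \text{ feasible}\}$, while by definition of the infimum no point strictly below $x_1^\circ$ is feasible; hence the feasible region is exactly $[x_1^\circ, 1]$. If $x_1^\circ$ is not feasible, take any $x_1 \in (x_1^\circ, 1]$; by definition of the infimum there is a feasible $x_1'$ with $x_1^\circ \le x_1' < x_1$, and since $x_1' \ne x_1 \le 1$ we have $x_1' < 1$, so upward closedness applied at $x_1'$ yields feasibility of $x_1$; again nothing at or below $x_1^\circ$ is feasible, so the feasible region is $(x_1^\circ, 1]$.

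I do not expect any real obstacle here: all the analytic content lives in Lemma~\ref{lem:increasing}, and the present statement is just the observation that a monotone one-parameter family whose feasible set is upward closed and contains $1$ must be an interval with right endpoint $1$. The only points requiring a little care are the half-open versus closed distinction at the left endpoint — which is exactly why the two cases on $x_1^\circ$ appear — and the boundary value $x_1 = 1$, which must be handled separately since Lemma~\ref{lem:increasing} is phrased for $x_1 \in [x_1', 1)$.
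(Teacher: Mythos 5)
Your proof is correct and takes essentially the same route as the paper: the paper states the lemma ``immediately'' follows from Lemma~\ref{lem:increasing} together with the observation that $x_1 = 1$ is feasible, and your write-up simply makes explicit the upward-closedness argument and the case split at the infimum that the paper leaves to the reader.
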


If $x_1$ is feasible, and $x_2, \ldots, x_n$, are computed using the upward equations, then the balance conditions for edges $e_1, e_2, \ldots, e_{n-1}$ are automatically satisfied. The equation for $e_n$ may not be satisfied, however. A fixed point is obtained precisely when $x_n$ satisfies the balance condition for $e_n$. Geometrically, equations (\ref{eq:upward}) and (\ref{eq:downward}) for $x_n$ define two curves, the upward curve, and the downward curve respectively. A fixed point is obtained at any intersection point of the two curves for $x_n$ in the feasible region of $x_1$. The following lemma gives monotonicity properties of the two curves:
\begin{lemma} \label{lem:curves}
In the feasible region for $x_1$, the upward curve for $x_n$ is strictly increasing, and the downward curve for $x_n$ is strictly decreasing.
\end{lemma}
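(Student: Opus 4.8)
The plan is to analyze the two expressions for $x_n$ — one coming from iterating the upward equations~(\ref{eq:upward}) starting from $x_1$, and one coming from the downward equation~(\ref{eq:downward}) for $e_n$ — as functions of $x_1$ over the feasible interval, and show the first is strictly increasing and the second strictly decreasing. For the upward curve this is essentially immediate from Lemma~\ref{lem:increasing}: applying part~(3) of that lemma with $i = n$ gives $\frac{dx_n}{dx_1} > 0$ directly, since the upward curve for $x_n$ is precisely $x_n$ viewed as a function of $x_1$ via the upward equations. So the real content is the downward curve.

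For the downward curve, I would first write $x_n$ via~(\ref{eq:downward}) with $i = n$:
\begin{equation*}
x_n \;=\; \frac{w_{n-1} + (1 - x_{n+1})d_n}{(2 - x_{n+1})w_{n-1}} \;=\; \frac{w_{n-1} + d_n}{2 w_{n-1}},
\end{equation*}
using $x_{n+1} = 0$. So the downward curve is the function $x_1 \mapsto \tfrac{1}{2}\bigl(1 + \tfrac{d_n}{w_{n-1}}\bigr)$. Since $w_{n-1} = d_0 \prod_{j=1}^{n-1} x_j$ and, by Lemma~\ref{lem:increasing}(4), $w_{n-1}$ is strictly increasing in $x_1$ on the feasible region, the quantity $d_n / w_{n-1}$ is strictly decreasing (as $d_n > 0$), hence $x_n$ as given by the downward curve is strictly decreasing in $x_1$. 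I should double check whether the fictitious-edge convention ($x_{n+1} := 0$) is what is intended here; if instead one keeps a general $x_{n+1}$ the same monotonicity argument goes through as long as $x_{n+1}$ is treated as fixed, because $\partial x_n/\partial w_{n-1} = -\,d_n(1-x_{n+1}) / ((2-x_{n+1}) w_{n-1}^2) < 0$ whenever $d_n > 0$ and $x_{n+1} < 2$, and $w_{n-1}$ is strictly increasing in $x_1$. In either reading the key input is monotonicity of $w_{n-1}$, which Lemma~\ref{lem:increasing} supplies.

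The one subtlety to be careful about — and the place I'd expect the only real friction — is that both curves must be compared over the \emph{same} domain, namely the feasible region for $x_1$ described in Lemma~\ref{lem:feasible-region}, and I must make sure that along this entire region the quantities $w_{n-1} - d_n$, etc., appearing in the denominators of~(\ref{eq:upward}) stay positive so that the upward iteration is well defined and Lemma~\ref{lem:increasing} applies; but this is exactly guaranteed by parts~(1) and~(2) of Lemma~\ref{lem:increasing} (each $x_i \in [x_i',1)$ and $w_i > d_i$). Given that, the proof is short: invoke Lemma~\ref{lem:increasing}(3) for the upward curve, and combine the closed form above with Lemma~\ref{lem:increasing}(4) for the downward curve. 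No delicate calculation beyond differentiating a one-variable rational function is needed.
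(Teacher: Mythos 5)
Your proof is correct and matches the paper's essentially exactly: the upward curve's monotonicity is read off from Lemma~\ref{lem:increasing}(3), and the downward curve is written in closed form as $\tfrac{1}{2} + \tfrac{d_n}{2w_{n-1}}$ and shown decreasing via Lemma~\ref{lem:increasing}(4). Your aside about needing $d_n > 0$ for \emph{strict} decrease is a fair observation, but it is implicit in the paper's argument as well, so there is no substantive difference.
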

We immediately get our uniqueness result:
\begin{theorem}[Uniqueness] \label{lem:uniqueness}
If a fixed point to the equations (\ref{eq:balance}) exists, then it is unique.
\end{theorem}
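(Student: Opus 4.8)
The plan is to deduce uniqueness essentially for free from the monotonicity of the two curves for $x_n$ established in Lemma~\ref{lem:curves}, after making precise the correspondence ``fixed point $\leftrightarrow$ intersection point of the upward and downward curves over a feasible $x_1$''.

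First I would reduce a fixed point to its first coordinate. Let $\vec{x} = \langle x_1, \ldots, x_n\rangle$ be any fixed point of the system (\ref{eq:balance}). Using the fact quoted from the proof of Lemma~\ref{lem:maxwin} that $w_{i-1}x_i \ge d_i$ holds at every fixed point (because $d_i < d_0$), each balance equation can be rewritten in the ``upward'' form (\ref{eq:upward}); hence $x_2, x_3, \ldots, x_n$ are exactly the values obtained by iterating the upward equations starting from $x_1$. Since $x_i \in [0,1]$ and $w_i > d_i$ for all $i$ at a fixed point, $x_1$ is feasible in the sense of Section~\ref{sec:fixedpoint}. Conversely, for a feasible $x_1$ the upward-generated values $x_2, \ldots, x_n$ automatically satisfy the balance equations for $e_1, \ldots, e_{n-1}$; such a configuration is a fixed point if and only if it also satisfies the balance equation for the top edge $e_n$. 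Rewriting this last equation via the downward equation (\ref{eq:downward}) with $x_{n+1} = 0$ shows it is precisely the condition that the value of $x_n$ read off the upward curve equals the value read off the downward curve at that $x_1$. Thus fixed points are in bijection with the points at which the upward and downward curves for $x_n$ meet, ranging over feasible $x_1$.

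Next I would invoke monotonicity. By Lemma~\ref{lem:feasible-region} the feasible set for $x_1$ is an interval, and by Lemma~\ref{lem:curves} on this interval the upward curve for $x_n$ is strictly increasing while the downward curve for $x_n$ is strictly decreasing. Their difference is therefore strictly monotone on the feasible interval, so it vanishes at most once. Hence at most one feasible value of $x_1$ can give rise to a fixed point, and since $x_1$ determines $x_2, \ldots, x_n$ through the upward equations, the fixed point (if it exists) is unique.

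The only place requiring genuine care — granting Lemmas~\ref{lem:increasing}, \ref{lem:feasible-region} and \ref{lem:curves} — is the bookkeeping in the first step: checking that an \emph{arbitrary} fixed point has a feasible first coordinate (so that it actually sits on the two curves), and that the balance equation at the topmost edge $e_n$ is equivalent to the curve-intersection condition with the fictitious share $x_{n+1} = 0$. All the analytic work is already packaged in Lemma~\ref{lem:curves}; once that is available, the theorem follows in a few lines.
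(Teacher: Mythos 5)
Your proof is correct and takes essentially the same route as the paper: the paper's proof of the theorem is the one-liner that a strictly increasing curve and a strictly decreasing curve meet at most once, with the correspondence between fixed points and curve-intersection points established in the surrounding exposition of Section~\ref{sec:fixedpoint}, which you have simply reproduced in more detail.
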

\begin{proof}
This is immediate from Lemma~\ref{lem:curves}: a strictly increasing and
strictly decreasing curve can intersect in at most 1 point.
\end{proof}

Finally, using Brouwer's fixed point theorem we can show that a fixed point always exists:
\begin{theorem}[Existence] \label{lem:existence}
A fixed point to the bargaining equations (\ref{eq:balance}) exists.
\end{theorem}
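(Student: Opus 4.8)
The plan is to apply Brouwer's fixed point theorem to a suitable continuous self-map of the compact convex set $[0,1]^n$, as the excerpt hints when it mentions the map ``obtained by simultaneous updates to the shares on all edges using the downward equations.'' Concretely, I would define $f : [0,1]^n \to [0,1]^n$ by letting $f(\vec{x})_i$ be the value $x_i$ dictated by the downward equation~(\ref{eq:downward}) when the right-hand side is evaluated at the \emph{current} iterate $\vec{x}$; that is, $f(\vec{x})_i = \frac{w_{i-1} + (1-x_{i+1})d_i}{(2-x_{i+1})w_{i-1}}$ where $w_{i-1} = d_0\prod_{j=1}^{i-1}x_j$ and $x_{n+1}:=0$. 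A fixed point of $f$ is, by construction, exactly a solution of all the bargaining equations~(\ref{eq:balance}), hence a fixed point of the path bargaining problem. Then, by Theorem~\ref{t-tree-to-path}, extending by $x_e=1$ off $P^*$ gives a fixed point for the original tree, so existence on the path is all that is needed.

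The first technical step is to check that $f$ is well-defined, i.e.\ that $f(\vec{x}) \in [0,1]^n$ for every $\vec{x} \in [0,1]^n$. Writing $a = x_{i+1} \in [0,1]$ and noting $0 \le d_i < d_0$ and $0 \le w_{i-1} \le d_0$, the expression $\frac{w_{i-1} + (1-a)d_i}{(2-a)w_{i-1}}$ is clearly nonnegative; the upper bound $\le 1$ reduces to $(1-a)d_i \le (1-a)w_{i-1}$, which holds provided $w_{i-1} \ge d_i$. This last inequality is the one subtle point: if some earlier coordinate $x_j$ is very small, $w_{i-1}$ can drop below $d_i$ and the formula would overshoot $1$. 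I would handle this exactly as the excerpt signals it handles the analogous issue in equation~(\ref{eq:balance})—namely by \emph{clipping}: redefine $f(\vec{x})_i := \min\bigl\{1, \frac{w_{i-1}+(1-x_{i+1})d_i}{(2-x_{i+1})w_{i-1}}\bigr\}$ (and interpret the $w_{i-1}=0$ case, which forces the clip to $1$, as the limit). Clipping preserves continuity and keeps the range in $[0,1]$. One must then argue that a fixed point of the clipped map is still a genuine fixed point of the bargaining equations: at a fixed point $\vec{x}$, if coordinate $i$ were clipped we would have $x_i = 1$, forcing $w_i = w_{i-1}$, and one propagates this—if $x_i = 1$ then the downward equation for $i$ reads consistently and one shows inductively (using $d_i < d_0$ and the structure already established in Lemma~\ref{lem:increasing}, or directly) that no coordinate is actually strictly clipped at a fixed point, so~(\ref{eq:balance}) holds exactly. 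Alternatively, and perhaps more cleanly, one restricts $f$ to the compact convex subset $\{\vec{x} \in [0,1]^n : x_1 \ge x_1^\circ\}$ where Lemma~\ref{lem:feasible-region} guarantees feasibility and no clipping is needed; but verifying $f$ maps this set to itself is itself work, so the clipping route is likely simpler.

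The second step is continuity of $f$ (clipped or not) on $[0,1]^n$. Each $w_{i-1}$ is a polynomial in $x_1,\dots,x_{i-1}$, hence continuous; the only possible discontinuity of the unclipped formula is at $w_{i-1}=0$, i.e.\ when some $x_j=0$ for $j \le i-1$. But when $w_{i-1} \to 0$ the clipped value tends to $1$ (the numerator tends to $(1-x_{i+1})d_i \ge 0$ and if this is positive the ratio blows up, so the $\min$ pins it at $1$; if $d_i = 0$ as well then the ratio is $\frac{1}{2-x_{i+1}} \le 1$ and stays bounded, and one checks the limit matches), so with the clip in place $f$ extends continuously across the coordinate hyperplanes. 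I would state this as a short lemma and verify the boundary behavior case-by-case ($d_i>0$ versus $d_i=0$). Once $f:[0,1]^n\to[0,1]^n$ is established to be continuous, Brouwer gives a fixed point $\vec{x}^*$, and the argument of the previous paragraph shows $\vec{x}^*$ solves~(\ref{eq:balance}) exactly, completing the proof.

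I expect the main obstacle to be precisely the interplay between the $\max$-versus-linear simplification in~(\ref{eq:balance}) and the well-definedness of $f$: making the self-map land in the cube forces some form of truncation, and then one must argue carefully that truncation is not active at the fixed point produced by Brouwer, so that the Brouwer fixed point of the modified map is a true fixed point of the original bargaining system. Everything else—nonnegativity, the algebraic upper bound $\le 1$ when $w_{i-1}\ge d_i$, continuity away from coordinate hyperplanes—is routine. A reasonable fallback if the truncation bookkeeping gets delicate is to invoke Lemmas~\ref{lem:increasing} and~\ref{lem:feasible-region} to work on the feasible sub-box directly, at the cost of proving invariance of that sub-box under $f$.
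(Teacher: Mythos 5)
Your proposal is essentially the paper's own proof: define the clipped map $f_i(x)=\min\left\{\frac{w_{i-1}+(1-x_{i+1})d_i}{(2-x_{i+1})w_{i-1}},\,1\right\}$ on $[0,1]^n$ (with $f_i(x)=1$ when $w_{i-1}=0$), verify continuity, apply Brouwer, and argue the clip is inactive at the resulting fixed point. To close the last step, the paper propagates $x_i=1$ both ways---downward to $x_j=1$ for all $j\le i$ (hence $w_{i-1}=d_0$), then upward using $d_i<d_0$ to force $x_{i+1}=1$ and so on to $x_n=1$---and then derives a contradiction from the $e_n$ downward equation with $x_{n+1}=0$, which gives $x_n=\frac{d_0+d_n}{2d_0}<1$; this is the concrete content of the inductive argument you sketch.
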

Briefly, we consider the following function $f: [0, 1]^n \rightarrow [0, 1]^n$, which
represents a simultaneous update of the shares vector $\vec{x}$ on all edges
using the downward equations:
$$f_i(x) = \min\left\{\frac{w_{i-1} + (1 - x_{i+1})d_i}{(2 - x_{i+1})w_{i-1}},\ 1\right\},$$
where $w_{i-1} = d_0\prod_{j=1}^{i-1} x_j$ as usual, $x_{n+1} := 0$, and we make
the convention that when $w_{i-1} = 0$, the first expression in the
minimum above is $+\infty$, so that $f_i(x) = 1$. The above function is continuous,
and its domain $[0, 1]^n$ is a convex, compact set. By Brouwer's Fixed Point
Theorem, $f$ has a fixed point. The main work in the proof of Theorem~\ref{lem:existence} then consists in showing that any fixed point of $f$ is a fixed point to the bargaining equations (\ref{eq:balance}).

\section{Properties of the Fixed Point}
\label{sec:prop}


\subsection{Core Property}
The setting we study is naturally modeled as a cooperative game $(T,V)$, where the agents are the nodes in the trading tree $T$, and the coalition values $V$ are defined as follows. The value of the coalition consisting of nodes on a path $p = (l, i_1, \ldots, i_k, r)$ is $V(p) = v_l$. A coalition cannot generate value unless it contains a path from a leaf to the root; if it does contain such paths, its value is the maximum value amongst these paths: $V(S) = \max_{p \in S} V(p)$, and $V(S) = 0$ if $S$ does not contain any such path $p$. Note specifically that $V(S) = 0$ for all sets that do not contain the seller $r$, and that $V(T) = V(p^*) = v^*$.

The {\em core}~\cite{Leytonbook} of a cooperative game $(N,V)$ is defined as a set of nonnegative payoff vectors $(u_1, \ldots, u_N)$ with $\sum u_i = V(N)$ such that every coalition's total payoff is at least as much as the value it generates: $\sum_{i \in N} u_i \geq V(S)\quad \forall S$.
The core consists of the set of payoff vectors that are {\em not blocked} by any coalition which can increase its total payoff by splitting from the grand coalition and playing amongst themselves --- an outcome not in the core is unlikely to occur in practice since there is a coalition that can benefit by deviating. In general, the core of a game can be empty, but our particular cooperative game does have a non-empty core, and in fact, our fixed point lies in the core. We can show the following theorem (proved in Appendix~\ref{app:core}):
\begin{theorem} \label{thm:core}
The payoff vector $u^*$ belongs to the core of $(T,V)$.
\end{theorem}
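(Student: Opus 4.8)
The plan is to exhibit, for the path bargaining instance $(P^*, \vec d)$, the payoff vector $u^*$ induced by the (unique) fixed point $\vec x$, and check the two core conditions directly: efficiency ($\sum_i u^*_i = V(T) = v^*$) and the no-blocking inequality $\sum_{i\in S} u^*_i \ge V(S)$ for every coalition $S$. Recall from the reduction (Theorem~\ref{t-tree-to-path}) that all edges off $P^*$ have share $1$, so every node outside the winning path $p^*$ receives payoff $0$; hence it suffices to work on the path $0,1,\dots,n$ with values $d_0 > d_1,\dots,d_n$ and shares $x_1,\dots,x_n$, where node $i$ on the path receives $u^*_i = w_i(1-x_{i+1})$ (with $w_i = d_0\prod_{j\le i}x_j$, $x_{n+1}=0$), and the leaf-side node $0$ receives $w_0(1-x_1) = d_0(1-x_1)$. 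Efficiency is a telescoping identity: $\sum_{i=0}^{n} u^*_i = \sum_{i=0}^{n}\bigl(w_i - w_i x_{i+1}\bigr) = \sum_{i=0}^n (w_i - w_{i+1}) = w_0 - w_{n+1} = d_0 = v^*$, using $w_{i+1} = w_i x_{i+1}$ and $x_{n+1}=0$.

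For the blocking condition, the only coalitions with $V(S) > 0$ are those containing the seller $r=n$ together with some root-to-leaf path; since paths through nodes off $P^*$ have value exactly $d_i$ for the appropriate $i$ (the disagreement/bid value absorbed into node $i$), it is enough to show two families of inequalities. First, for the winning path itself and its ``suffixes toward the root'': for each $k$, the total payoff of nodes $k, k+1,\dots,n$ is at least $d_k$ (the best outside option available to the sub-coalition $\{k,\dots,n\}$ is to trade with the fictitious buyer of value $d_k$ hanging off node $k$), and for $k=0$ the bound is $v^* = d_0$. Second, one must handle coalitions that take the root-side segment $\{k,\dots,n\}$ of $P^*$ but peel off into a non-winning subtree below node $k$; the value such a coalition can realize is again at most $d_k$, so the same inequality $\sum_{i=k}^n u^*_i \ge d_k$ suffices. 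So the crux reduces to proving, for all $k = 0,1,\dots,n$,
\begin{equation*}
\sum_{i=k}^{n} u^*_i \;=\; \sum_{i=k}^{n} \bigl(w_i - w_{i+1}\bigr) \;=\; w_k \;\ge\; d_k ,
\end{equation*}
and the inequality $w_k \ge d_k$ is exactly the fact already recorded in the excerpt (from the proof of Lemma~\ref{lem:maxwin}) that every fixed point satisfies $w_{i-1}x_i \ge d_i$, i.e. $w_i \ge d_i$, for $i=1,\dots,n$, while $w_0 = d_0 = v^*$. For coalitions that do \emph{not} form a contiguous root-side segment of $P^*$ but instead include node $r$ through some other path, we note the payoff of any superset of $\{k,\dots,n\}$ only increases the left side, so monotonicity of $\sum_{i\in S}u^*_i$ in $S$ together with $u^*_i \ge 0$ finishes those cases.

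The step I expect to be the main obstacle is the careful bookkeeping in the second family: arguing that for an arbitrary coalition $S$ containing the seller, $V(S)$ is bounded by $w_k$ where node $k$ is the deepest node of $P^*$ at which $S$'s realizable path branches off (or $k=0$ if $S$ contains all of $p^*$). This requires (a) using the reduction to argue that any leaf reachable from such an $S$ through a node of $P^*$ has value at most $d_k$ when it leaves $P^*$ at node $k$, hence at most $\max(d_k, \text{values strictly below }k\text{ on }P^*)$, and (b) observing $w_k \ge d_k$ and $w_k \ge w_j \ge d_j$ for $j \le k$ by monotonicity of the $w$'s along the path (since shares are in $[0,1]$), so $w_k$ dominates every value $S$ could hope to realize. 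Once this reduction of coalition values to the quantities $w_k$ is in place, the proof is just the telescoping computation above plus the already-established fixed-point inequality $w_k \ge d_k$; no new analytic machinery is needed beyond what Section~\ref{sec:fixedpoint} provides.
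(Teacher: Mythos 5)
Your proposal is correct and follows essentially the same route as the paper's proof: identify that the total payoff on the suffix $\{k,\dots,n\}$ of $P^*$ telescopes to $w_k$, invoke the fixed-point inequality $w_k\ge d_k$ (from the proof of Lemma~\ref{lem:maxwin}), and bound $V(S)$ by $d_k$ for the node $k$ where the coalition's realizable path branches off $P^*$. The paper phrases step three by taking $s$ to be the least common ancestor of the leaves of $p^*$ and the path $p$ in $S$ achieving $V(S)$, and treats $V(S)=v^*$ as a separate case via Lemma~\ref{l-ties}, but the decomposition and the key inequalities are the same as yours.
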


\subsection{Monotonicity}

Monotonicity, which means that increasing the bargaining power of an agent increases his payoff, is a desirable property for a solution concept to our game. We establish a {\em strict} monotonicity property for the payoff to all nodes on the winning path in terms of their bargaining power. Since we are only interested in nodes on the winning path, we can restrict ourselves to discussing reduced path instance $P^*$.\footnote{When there is more than one leaf with value $v^*$, $P^*$ does not contain all nodes on the winning path, but the strict monotonicity result extends easily to that case since an increase in bargaining power for a winning node not in $P^*$ means that there is now a leaf with value greater than $v^*$.} (We note that strict monotonicity cannot hold for nodes outside the winning path since the outcome itself must change for these nodes to receive a nonzero payoff; however, a weak monotonicity condition trivially holds.) We can prove the following strict monotonicity property (proved in Appendix~\ref{app:monotonicity}):
\begin{theorem} \label{thm:monotonicity}
Consider the path bargaining problem. If any $d_i$ is increased (but is still kept less than $d_0$) while the remaining $d_{i'}$ are unchanged, the payoff of $i$
strictly increases.
\end{theorem}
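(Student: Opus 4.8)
<br>

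The plan is to exploit the machinery already built in Section~\ref{sec:fixedpoint}, in particular the parametrization of any fixed point by the single feasible value $x_1$ together with the upward equations~(\ref{eq:upward}) and the monotonicity properties of Lemma~\ref{lem:increasing} and Lemma~\ref{lem:curves}. Fix an index $i$, and let $d_i$ increase by a small amount $\delta > 0$ while all other $d_{i'}$ stay fixed. I would track how the unique fixed point moves, and then compute the change in node $i$'s payoff. Node $i$'s payoff at the fixed point is $(1 - x_{i+1}) w_i = (1 - x_{i+1})(w_{i-1} x_i)$, which by the balance equation~(\ref{eq:balance}) for edge $e_{i+1}$ can also be written in the cleaner form involving $w_i$ and $d_i$; I would pick whichever form makes the differentiation transparent. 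Actually, it is cleanest to use node $i$'s payoff $= (1-x_{ss'})(w_i - d_i)$ in the path notation, i.e. $(1 - x_{i+1})(w_i - d_i)$, which equals $(1 - x_i) w_{i-1}$ by~(\ref{eq:balance}); so node $i$'s payoff is just $(1 - x_i) w_{i-1}$.

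The key structural step is to understand the direction in which the fixed point shifts. I would argue via the two-curves picture: the upward curve for $x_n$ as a function of $x_1$ and the downward curve for $x_n$ as a function of $x_1$. Increasing $d_i$ shifts these curves; since the upward curve is strictly increasing in $x_1$ and the downward curve strictly decreasing (Lemma~\ref{lem:curves}), a monotone shift of one curve produces a monotone shift of the intersection point $x_1^*$. Concretely, I expect that increasing $d_i$ forces the new fixed point to have a strictly larger $x_1^*$ (the nodes below $i$ must demand a larger share because node $i$'s disagreement value has gone up). By Lemma~\ref{lem:increasing}, all $x_j$ and all $w_j$ are strictly increasing functions of $x_1$; in particular $w_{i-1}$ strictly increases. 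But $x_i$ also strictly increases, so $(1 - x_i)$ decreases, and the payoff $(1-x_i) w_{i-1}$ is a product of one increasing and one decreasing factor — so I cannot conclude directly from $x_1$-monotonicity alone. Instead I would differentiate the payoff expression with respect to $d_i$ directly, using the chain of upward-equation derivatives, and show the total derivative is strictly positive; alternatively, express node $i$'s payoff via the \emph{downward} recursion from the root, where $d_i$ enters only through the terms at or below level $i$, and the payoff can be written as $d_i$ plus a positive ``premium'' that I would show is nondecreasing (or handle the two regimes of whether $w_i x_i > d_i$ strictly).

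The cleanest route, which I would try first, is an optimal-substructure / induction argument mirroring the one used for the core and strict-monotonicity claims: by the uniqueness theorem, the fixed point restricted to the subpath from node $i$ up to the root is itself the unique fixed point of a smaller path instance in which node $i$ is the bottom leaf with value equal to the value reaching it. Then increasing $d_i$ is essentially increasing the bottom leaf value $d_0$-analogue of that subinstance (modulo also respecting $d_i < d_0$), and I can set up an induction on $n - i$: the base case $i = n$ (node $i$ is the root's child) is a direct calculation from~(\ref{eq:balance}) and~(\ref{eq:upward}), and the inductive step propagates the strict increase of the value reaching node $i+1$ up the path. The main obstacle I anticipate is precisely the non-monotone product form of the payoff: I must be careful that although $x_i$ rises when $d_i$ rises, $w_{i-1}$ rises fast enough (or $w_i - d_i$ behaves well enough) that the net payoff strictly increases — this will require a quantitative inequality from the upward equations rather than a soft monotonicity argument, and I would isolate it as a small lemma (e.g. showing $\frac{\partial}{\partial d_i}\big[(1-x_i)w_{i-1}\big] > 0$ along the fixed-point manifold). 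A secondary subtlety is confirming that the feasibility region does not collapse when $d_i$ increases, which follows from Lemma~\ref{lem:feasible-region} together with the fact that $x_j = 1$ for all $j$ remains feasible.
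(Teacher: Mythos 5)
Your overall plan is close to the paper's, but there is a concrete error that derails the middle of the argument. You start from the correct payoff formula for node $i$, namely $(1-x_{i+1})\,w_i$, but then rewrite it as $(1-x_{i+1})(w_i - d_i)$ and hence, via the balance equation~(\ref{eq:balance}), as $(1-x_i)\,w_{i-1}$. That second expression is not node $i$'s payoff; it is node $(i-1)$'s payoff, and it equals node $i$'s \emph{incremental gain from trade} over its disagreement outcome, not its total payoff. Node $i$'s total payoff is $(1-x_{i+1})w_i = (1-x_{i+1})d_i + (1-x_i)w_{i-1}$, which differs from $(1-x_i)w_{i-1}$ by the disagreement share $(1-x_{i+1})d_i$. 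This slip is what manufactures the ``non-monotone product'' problem you then struggle with: you end up trying to show $(1-x_i)w_{i-1}$ increases, where $x_i$ goes up and $w_{i-1}$ goes up, with opposite effects.

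With the correct expression $(1-x_{i+1})w_i$, the paper's proof has no such tension, because it establishes that \emph{both} factors increase: Lemma~\ref{lem:below-increase} (the curve-shift argument you correctly anticipate — raising $d_i$ pushes the intersection of the upward and downward curves to the right, so $x_1^\star$ and hence every $x_j^\star$, $w_j^\star$ for $j\le i$ strictly increase) gives that $w_i^\star$ strictly increases; and Lemma~\ref{lem:above-decrease} gives that $x_{i+1}^\star$ strictly \emph{decreases}, so $(1-x_{i+1}^\star)$ strictly increases. The product of two strictly increasing positive quantities is strictly increasing, with no quantitative lemma needed. Showing that $x_{i+1}^\star$ decreases is indeed the subtle step, and the optimal-substructure route you sketch as your ``cleanest route'' is exactly the paper's: by uniqueness, the fixed point on edges above $i$ coincides with the fixed point of the truncated path whose bottom value is $w_i^\star$, so the question reduces to showing that in a path problem, raising the bottom value $d_0$ lowers $x_1$. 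So your instinct about where the extra work lies is right; the fix is to apply it to controlling $x_{i+1}$ rather than to patching the product form $(1-x_i)w_{i-1}$, and to use the correct payoff identity throughout.

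One small separate point: you should double-check the claimed direction of the shift in the downward curve when $d_i$ increases. With $d_i' > d_i$ and the same $x_1$, the upward recursion gives $w_{n-1}' < w_{n-1}$, so $D(x_1) = \tfrac12 + \tfrac{d_n}{2w_{n-1}}$ shifts \emph{up}, and together with the downward shift of $U$ this moves the intersection to the right. That is the computation underlying Lemma~\ref{lem:below-increase}, and it is worth spelling out rather than relying on ``I expect.''
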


\subsection{Computability}
\label{sec:compute}

We know that there exists a unique fixed point, $x^\star$, of the bargaining equations. We now turn to computability of the fixed point. Note that since the shares affect the bids multiplicatively, the fixed point solution is {\em
scale-free}: if we scale all bids by the same amount, the fixed point stays the same. So to simplify calculations, we assume that the maximum bid, $d_0$, is
normalized to $1$, and all other bids $d_i$ are less than $1$. We can give a
polynomial-time algorithm to compute an $\epsilon$-fixed point: i.e., a set of
shares such that all bargaining equations are satisfied within an additive
$\epsilon$ error. For the original unscaled bids where the maximum
bid may not be equal to $1$, the additive error gets scaled by the maximum bid
as well.

We now state our theorem (proved in Appendix~\ref{app:computability}) regarding computability of an approximate fixed point. It is given in terms of a parameter $\gamma = \min\{1 - \max_{i > 0} \{d_i\}, \frac{1}{n+2}\}$, which is essentially how close the second highest bid is to the maximum. Note that the dependence on the error parameter $\epsilon$ and $\gamma$ is only poly-logarithmic. In practice, the algorithm converges extremely fast.

\begin{theorem} \label{thm:eps-fixed point}
There is an algorithm that, for any given $\epsilon > 0$, computes an $\epsilon$-fixed point to the bargaining equations (\ref{eq:balance}) in $\text{poly}(n, \log(1/\gamma\epsilon))$ time.
\end{theorem}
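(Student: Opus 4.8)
\textbf{Proof proposal for Theorem~\ref{thm:eps-fixed point}.}

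The plan is to exploit the geometric picture from Section~\ref{sec:fixedpoint}: the fixed point corresponds to the unique intersection of the strictly increasing ``upward curve'' and the strictly decreasing ``downward curve'' for $x_n$ as functions of the feasible parameter $x_1$, and to locate this intersection by binary search on $x_1$. Concretely, given a candidate $x_1$, I would run the upward equations~(\ref{eq:upward}) to generate $x_2, \ldots, x_n$ (and the $w_i$'s), obtaining the upward value of $x_n$; separately I would run the downward equations~(\ref{eq:downward}) from $x_{n+1} = 0$ back down, but since the downward recursion needs the $w_i$'s, I instead compare, at the top edge, the value of $x_n$ predicted by the upward pass against the value forced by the balance equation~(\ref{eq:balance}) for $e_n$. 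Define $g(x_1)$ to be the (signed) discrepancy between these two; by Lemma~\ref{lem:curves}, $g$ is strictly monotone on the feasible interval, so a standard bisection converges. Each evaluation of $g$ is $O(n)$ arithmetic operations, and bisection on an interval of length at most $1$ to additive precision $\delta$ in $x_1$ takes $O(\log(1/\delta))$ iterations, for a total of $O(n \log(1/\delta))$ time.

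The technical heart is the translation between accuracy in $x_1$ and accuracy in the bargaining equations, i.e.\ choosing $\delta$ (and hence the iteration count) so that the resulting share vector is an $\epsilon$-fixed point. For this I would refine the monotonicity arguments of Lemma~\ref{lem:increasing} and Lemma~\ref{lem:curves} quantitatively: I need a \emph{lower} bound on how fast $x_i$, $w_i$, and the discrepancy $g$ move as $x_1$ varies, and an \emph{upper} bound (Lipschitz-type) on the same quantities, both uniform over the relevant portion of the feasible region. The parameter $\gamma = \min\{1 - \max_{i>0} d_i,\ \tfrac{1}{n+2}\}$ enters here: since every $d_i \le 1 - \gamma$ and (at the fixed point) $w_i \ge$ something bounded below in terms of $\gamma$, the denominators $w_i - d_i$ appearing in~(\ref{eq:upward}) and $(2 - x_{i+1}) w_{i-1}$ in~(\ref{eq:downward}) are bounded away from $0$, which keeps all the maps well-conditioned. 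I expect to show that it suffices to confine the search to $x_1 \in [1 - c, 1]$ for a constant $c$ depending polynomially on $\gamma$ (using that $x_1 = 1$ is always feasible and that near $x_1 = 1$ all $x_i$ are close to $1$ by Lemma~\ref{lem:increasing}), so that the conditioning bounds hold throughout, and that the derivatives of the two curves are bounded above and below by $\mathrm{poly}(n, 1/\gamma)$; then $\delta = \mathrm{poly}(\gamma\epsilon/n)$ gives an $\epsilon$-fixed point, and $\log(1/\delta) = O(\log(n/\gamma\epsilon))$ iterations suffice, yielding the claimed $\mathrm{poly}(n, \log(1/\gamma\epsilon))$ running time.

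A few subsidiary points need care. First, one must verify that if $x_1$ is within $\delta$ of the true $x_1^\star$ then \emph{every} downstream equation, not just the $e_n$ equation, is satisfied within $\epsilon$; this follows because the equations $e_1, \ldots, e_{n-1}$ are satisfied \emph{exactly} along the upward recursion by construction, so only the accumulated floating/analytic error in propagating a perturbation of $x_1$ through $n$ well-conditioned steps must be controlled, which is where the Lipschitz bounds above are used. Second, one should confirm feasibility is maintained: for $x_1$ in the restricted interval near $1$, Lemma~\ref{lem:increasing} guarantees $x_i \in [x_i', 1)$ and $w_i > d_i$, so no equation ever divides by zero during the search. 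The main obstacle, as anticipated, is the quantitative conditioning analysis: turning the qualitative ``strictly increasing / strictly decreasing'' statements of Lemma~\ref{lem:curves} into explicit $\mathrm{poly}(n, 1/\gamma)$ upper and lower bounds on the slopes, uniformly over the search region, and then propagating these through the $n$-fold composition to certify the $\epsilon$ guarantee. Everything else is routine bisection bookkeeping.
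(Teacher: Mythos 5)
Your high-level plan matches the paper's proof exactly: binary search on $x_1$ against the upward/downward curves for $x_n$ (Lemma~\ref{lem:computation} and Algorithm~\ref{alg:fixedpoint}), with the technical core being a quantitative conditioning analysis to translate precision in $x_1$ into an $\epsilon$-fixed point.

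However, there is a genuine quantitative gap in your conditioning estimate. You assert that all relevant derivatives/Lipschitz constants are bounded by $\mathrm{poly}(n, 1/\gamma)$, that the search can be confined to $x_1 \in [1-c,1]$ with $c = \mathrm{poly}(\gamma)$, and hence that $\delta = \mathrm{poly}(\gamma\epsilon/n)$ suffices with $O(\log(n/\gamma\epsilon))$ iterations. None of these are correct as stated. The paper's Lemma~\ref{lem:comp1} shows the true bounds are $x_i^\star \leq 1 - \gamma^{4n}$ and $w_i^\star - d_i \geq \gamma^{4n+1}$; since each upward step has a denominator like $(x_i w_{i-1} - d_i)^2$, a single step can amplify error by roughly $\gamma^{-8n-2}$, and propagating over $n$ steps gives a total amplification of order $\gamma^{-10n^2}$, forcing $\delta \approx \gamma^{10n^2}\epsilon$ rather than $\mathrm{poly}(\gamma\epsilon/n)$. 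This is \emph{exponentially} worse in $n$ than what you claim. The theorem's conclusion still holds because what matters is $\log(1/\delta) = O(n^2\log(1/\gamma) + \log(1/\epsilon)) = \mathrm{poly}(n,\log(1/\gamma\epsilon))$ iterations, but if you tried to actually prove the $\mathrm{poly}(n,1/\gamma)$ conditioning bound you claim, you would get stuck: the denominators in the upward recursion genuinely can shrink to $\gamma^{\Theta(n)}$ near the fixed point, and the feasible interval for $x_1$ near $1$ shrinks similarly. The fix is to use the exponentially-small-but-log-polynomial lower bounds from Lemma~\ref{lem:comp1} and then carry the error recursion $\eta_{i+1} \leq \eta_i/\gamma^{10n}$ through all $n$ edges, accepting the $\gamma^{-10n^2}$ blowup.
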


The algorithm essentially works by running a binary search to find the intersection point of the upward and downward curves for $x_n$. The parameter $\gamma$ is important in giving bounds on the number of iterations needed in the binary search to obtain the desired accuracy, essentially by obtaining quantitative versions of the arguments of Section \ref{sec:fixedpoint}. 


\subsection{Dynamics}
\label{sec:dynamics}

We have already seen in the previous sections that the solution prescribed by the fixed point of the bargaining equations has several desirable properties. A natural question is whether the agents on the tree would, without help from a centralized authority, be able to converge to this fixed point. We now present numerical evidence that this is indeed likely. 
Our experiments suggest that a natural dynamics consisting of {\em asynchronous} updates--- where in each step a random edge $e$ updates $x_e$ according to the two-player egalitarian bargaining equation (\ref{eqn:edge}), using the current values of $x_{e'}$ on other edges--- indeed converges to the fixed point.

We run $10,000$ tries of the following experiment: generate random bids at the leaves of a depth-$8$ balanced binary tree with $256$ leaves and $510$ edges; this is a convenient size that permits $10,000$ tries to be run in a few hours. The bids are drawn from the lognormal distribution $e^{(1+N)}$ where $N$ is the normal distribution with zero mean and unit variance. We initialize all $510$ edge multipliers to the arbitrary value $0.99$, and then repeatedly re-negotiate the edge multipliers one at a time in a random order: the negotiation for each edge consists of solving equation (\ref{eq:balance}) for that edge (while freezing the values of all other
multipliers). More specifically, binary search down to an tolerance of $1.0 \times 10^{-15}$ is used to solve the equation. The edge updates are organized into ``rounds'' during each of which every edge is individually updated in a random order specified by a different random permutation for every round.

We continue iterating until the solution is close enough to the fixed point
computed using the reduction to the path and the algorithm in Section \ref{sec:compute}. The efficient fixed point finding algorithm uses the reduction of Section \ref{sec:reduction} to convert the tree problem to a path problem. This path problem is then solved using a heuristic program (not described here) that uses the algorithm of Section \ref{sec:compute} as a subroutine and computes the $8$ multipliers to a nominal accuracy of $2.0 \times 10^{-16}$. The multipliers for the original tree are obtained by copying those $8$ values onto the winning path and then setting the $502$ multipliers lying on side branches to the value $1.0$.

{\em Every one} of the 10,000 tries converged to the desired tolerance. The plot in Figure \ref{fig:exp} shows the average convergence rate summarizing those 10,000 tries. It is clear that the shares always converge to within the desired accuracy at a reasonable rate. While we do not include the figures here, we also observed similar convergence behavior on trees with different structures and sizes, as well as for several different bid distributions.

\begin{figure}[ht]
\centering
\includegraphics[width=3in]{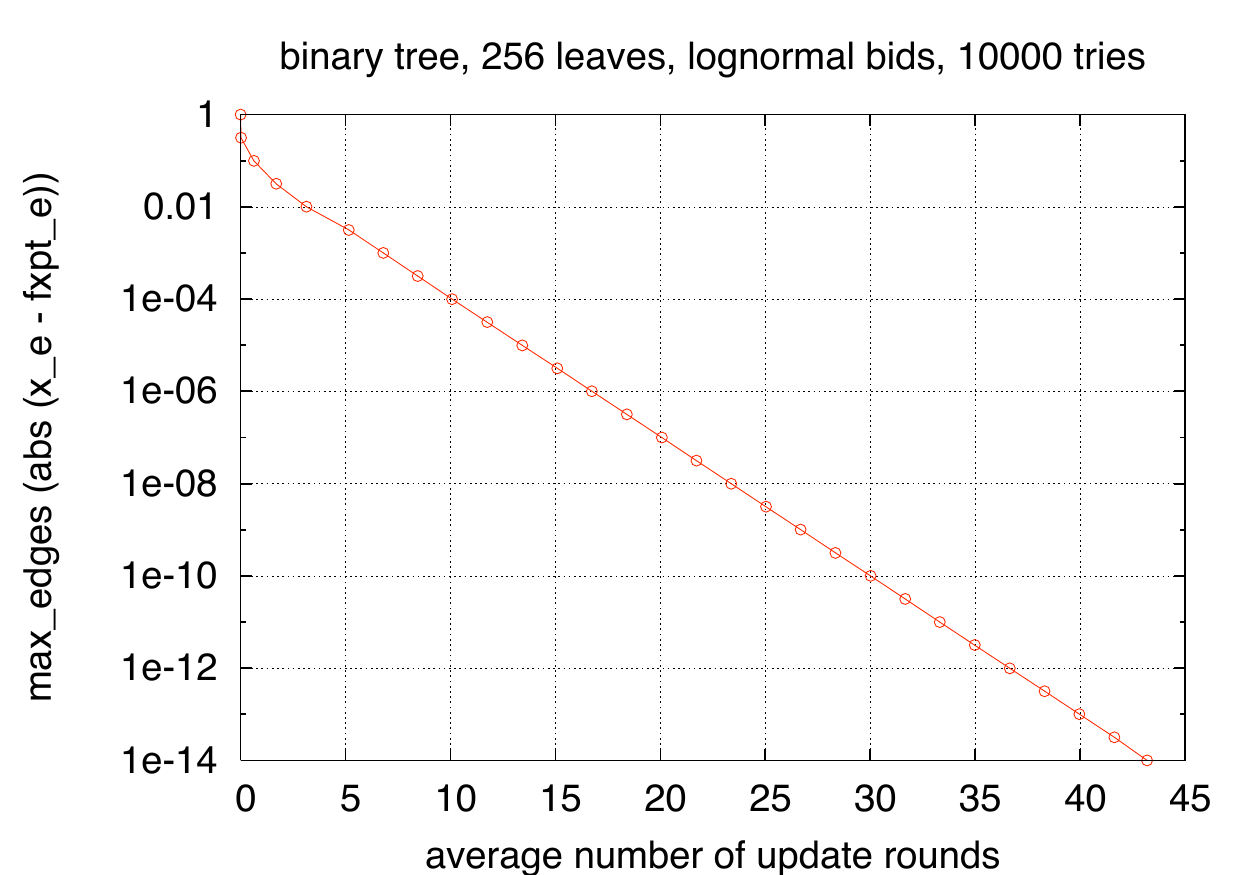}
\caption{Asynchronous dynamics convergence: accuracy vs. average number of rounds to achieve accuracy.}\label{fig:exp}
\end{figure}



\section{Further Directions}
In this paper, we defined a bargaining game on trees motivated by a fair division question in display ad exchanges, and investigated the properties of its fixed point. There are a number of interesting directions for further work.  The most interesting open question is proving the convergence of dynamics, since numerical simulations strongly suggest that even asynchronous dynamics converge to the fixed point. Another interesting direction is that of a Bayesian model for values--- suppose instead of values $v_i$ at the leaves, we had {\em distributions} of values. The problem of solving the bargaining equations to set the shares $x_e$ in this case is a very meaningful one, but also one that appears to be technically extremely challenging. 
Finally, there are questions related to extending the trade network model itself: for example, in this paper, we only consider a single seller and a tree topology. The question of how to model and solve for multiple sellers, and how the fixed point behaves if the underlying trade network is a directed acyclic graph instead of a tree, are also interesting directions for further work.

\paragraph{Acknowledgments.} We are very grateful to Nikhil Devanur and Mohammad Mahdian for insightful discussions on network bargaining and cooperative games, and to Matt Jackson, Preston McAfee, Herve Moulin, Michael Schwarz, and Mukund Sundararajan for helpful comments and pointers to relevant literature.

\bibliographystyle{alpha}
\bibliography{bargain}

\appendix
\section{Example of bargaining equations}
\label{app:example}

\begin{figure}[ht]
\centering
\includegraphics[width=1.8in]{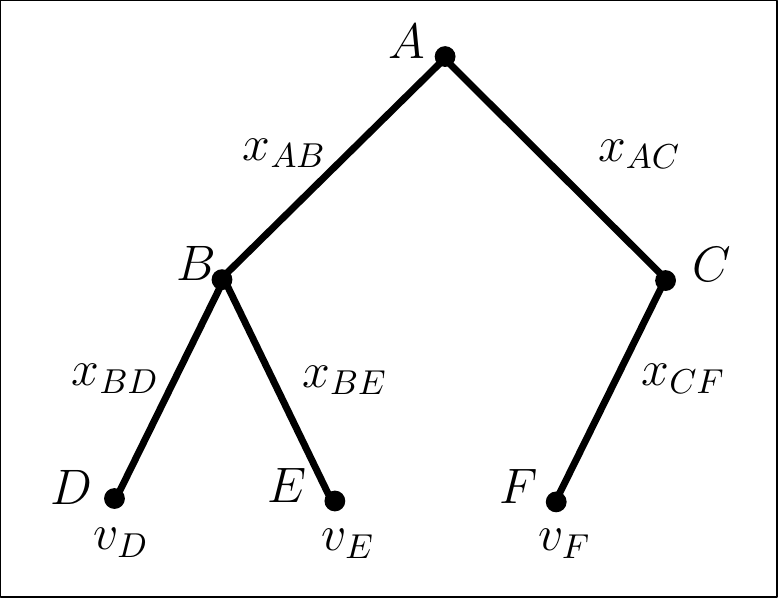}
\caption{A trading tree.}\label{fig:tree}\vspace{-.5cm}
\end{figure}

\begin{align*}
&\text{Equation for edge} (D, B):\\
&(1-x_{DB})v_{D}\ =\ (1-x_{BA}) ( \max \{ x_{DB} v_D,  x_{EB} v_E \} - x_{EB} v_E) \\
&\text{Equation for edge} (E, B):\\
&(1-x_{EB})v_{E}\ =\ (1-x_{BA}) ( \max \{ x_{DB} v_D, x_{EB} v_E \} - x_{DB} v_D) \\
&\text{Equation for edge} (F, C):\\
&(1-x_{FC})v_{F}\ =\ (1-x_{CA}) (  x_{FC} v_F  )  \\
&\text{Equation for edge} (B, A):\\
&(1- x_{BA})(\max\{ x_{DB} v_D, x_{EB}  v_E \})\ =\ \max\{ x_{BA}\max\{ x_{DB} v_D, x_{EB}  v_E \}, x_{CA} x_{FC}  v_F \} - x_{CA} x_{FC}  v_F \\
&\text{Equation for edge} (C, A):\\ 
&(1- x_{CA}) x_{FC}  v_F\ =\ \max\{x_{BA}\max\{ x_{DB} v_D, x_{EB}  v_E \}, x_{CA} x_{FC}  v_F \} - x_{BA}\max\{x_{DB} v_D, x_{EB} v_E\}  
\end{align*}

\section{Reduction to Path Instance}
\label{app:reduction}

We will assume henceforth that $T$ has been pruned to remove all buyers with value zero, \ie, $v_l > 0$ for all leaves $l$ in $T$. Recall also that when we write $e = (t,s)$, $t$ is the child and $s$ the parent.

Our first lemma shows that the revenue shares in any fixed point must be nonzero, that is, $x_e \in (0,1]$ for any fixed point $\vec{x}$.

\begin{lemma}
\label{lem:nonzero}
Let $\vec{x}$ be any fixed point to the bargaining equations on $T$. For any edge $e\in T$, $x_e >0$.
\end{lemma}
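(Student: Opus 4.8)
The plan is to show that the bargaining equation~\eqref{eqn:edge} forces the potential value reaching \emph{every} node to be strictly positive, and that an edge with zero share would make that equation fail. The single computation driving everything is this: for an edge $e = (t,s)$, substituting $x_{ts} = 0$ into~\eqref{eqn:edge} makes the left-hand side equal to $w_t$ and collapses the right-hand side to $(1-x_{ss'})\bigl(\max(\wst, 0) - \wst\bigr) = 0$, since $\wst$ is a maximum of nonnegative quantities and hence $\ge 0$. (This holds no matter whether $x_{ss'} = 1$, and also when $s = r$, using the convention that the share on the fictitious edge $(r,r')$ is $0$.) So I record the base observation: \emph{if $x_{ts} = 0$ in a fixed point, then $w_t = 0$.}

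Next I would prove the auxiliary claim that in any fixed point $\vec{x}$ of the bargaining equations on the pruned tree $T$, $w_t > 0$ for every node $t$. Argue by contradiction: let $Z = \{t : w_t = 0\}$ and suppose $Z \neq \emptyset$; pick $t^\dagger \in Z$ of maximum depth. If $t^\dagger$ is a leaf, then $w_{t^\dagger} = v_{t^\dagger} > 0$ because $T$ has been pruned of zero-value leaves, a contradiction. If $t^\dagger$ is internal it has a child $t'$, and from $0 = w_{t^\dagger} = \max_{u \in C_{t^\dagger}} x_{u t^\dagger} w_u \ge x_{t' t^\dagger} w_{t'} \ge 0$ we get $x_{t' t^\dagger} w_{t'} = 0$. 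Either $w_{t'} = 0$, impossible since $t'$ is strictly deeper than $t^\dagger$ and $t^\dagger$ was a deepest element of $Z$; or $x_{t' t^\dagger} = 0$, in which case the base observation applied to the edge $(t', t^\dagger)$ yields $w_{t'} = 0$, the same contradiction. Hence $Z = \emptyset$.

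The lemma is then immediate: if some edge $e = (t,s)$ had $x_e = 0$, the base observation would force $w_t = 0$, contradicting the auxiliary claim. Therefore $x_e > 0$ for every $e \in T$.

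I do not expect a real technical obstacle here, since the argument is short; the only things to be careful about are the bookkeeping around conventions — that $\wst$ is a max over nonnegative terms so $\max(\wst,0) = \wst$, that the factor $(1 - x_{ss'})$ may vanish without changing the conclusion, and the fictitious parent of the root — and the structural choice to induct on a \emph{deepest} node with zero incoming value, which is exactly what lets the bargaining equation of a child edge deliver the contradiction instead of looping back on itself.
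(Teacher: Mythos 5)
Your argument is correct and rests on the same core idea as the paper's proof: substituting $x_{ts}=0$ into the bargaining equation kills the right-hand side, so the equation forces $w_t=0$, and a deepest counterexample is ruled out because leaf values and shares beneath are positive. The paper packages this a bit more tightly — it picks a deepest edge with $x_e=0$ (so $w_t>0$ is immediate from positive leaf values and positive shares below) rather than first establishing the auxiliary claim that $w_t>0$ for every node — but the minimal-counterexample structure and the contradiction with the bargaining equation on that edge are the same, so this is essentially the paper's argument with slightly different bookkeeping.
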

\begin{proof}
Suppose not. Choose edge $e=(t,s)$ with $x_{ts} = 0$ such that every edge in the subtree rooted at the child node $t$ has share $x_{e'} > 0$. Since $x_{ts} = 0$, the marginal benefit to the parent node $s$ from the edge $(t,s)$ is zero, \ie, $\max(w_tx_{ts}, \wst)  -  \wst= 0$.

Since $v_l > 0$ for all $l \in T$ and $x_{e'} > 0$ for all $e'$ in the subtree rooted at $t$, we must have $w_t > 0$. But then $(1-x_{ts})w_t = w_t > 0$, a contradiction to $\vec{x}$ satisfying the bargaining equation on edge $e=(t,s)$.
\end{proof}

Our second lemma allows us to show that if the share $x_e$  on an edge $e$ is $1$ in a fixed point (\ie, the child passes up all the value to the parent), the share on {\em every} edge in the subtree below $e$ must be $1$ as well in that fixed point.
\begin{lemma}
Consider edges $(t,s),(s,s')$ such that $s$ is the parent of $t$ and $s'$ is the parent of $s$, and let $\vec{x}$ be any fixed point.  If $x_{ss'} = 1$ then $x_{ts} =1$ also.
\end{lemma}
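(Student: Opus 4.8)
The statement to prove is: for edges $(t,s)$ and $(s,s')$ with $s$ the parent of $t$ and $s'$ the parent of $s$, if $\vec{x}$ is a fixed point and $x_{ss'} = 1$, then $x_{ts} = 1$. The natural approach is to substitute $x_{ss'} = 1$ into the bargaining equation~(\ref{eqn:edge}) for the edge $(t,s)$ and read off the consequence. With $x_{ss'} = 1$, the right-hand side of~(\ref{eqn:edge}), namely $(1 - x_{ss'})\left(\max(\wst, w_t x_{ts}) - \wst\right)$, vanishes identically. Hence the left-hand side must vanish too: $(1 - x_{ts}) w_t = 0$.

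From here I need to argue $w_t > 0$ so that I can cancel it and conclude $x_{ts} = 1$. This is where Lemma~\ref{lem:nonzero} does the work: every edge $e'$ in the subtree rooted at $t$ has $x_{e'} > 0$, and we have assumed (per the remark at the start of Appendix~\ref{app:reduction}) that $v_l > 0$ for all leaves $l$. Since $w_t = v_{l}\prod x_{e'}$ along the path from $t$ down to whichever leaf $l$ realizes the max defining $w_t$ — more precisely $w_t = \max_{t' \in C_t} x_{t't} w_{t'}$, unwound recursively down to the leaves — and all these factors are strictly positive, we get $w_t > 0$. (If $t$ is itself a leaf, $w_t = v_t > 0$ directly.) Therefore $(1 - x_{ts}) w_t = 0$ forces $x_{ts} = 1$.

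The proof is genuinely short; the only thing to be careful about is the justification that $w_t > 0$, which is exactly the reuse of the positivity argument already appearing in the proof of Lemma~\ref{lem:nonzero}. I expect no real obstacle — the main (minor) point is simply to invoke that $v_l > 0$ holds after pruning and that $x_{e'} > 0$ on all edges below $t$ by Lemma~\ref{lem:nonzero}, rather than re-deriving it. One could also phrase it without explicitly naming Lemma~\ref{lem:nonzero}: since $w_t > 0$ (value reaching $t$ is positive as all buyer values are positive and, by the previous lemma, all shares are positive), cancellation is legitimate.
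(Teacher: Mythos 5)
Your proposal is correct and follows exactly the paper's own argument: substitute $x_{ss'}=1$ into the bargaining equation for $(t,s)$ so the right-hand side vanishes, then use Lemma~\ref{lem:nonzero} together with $v_l>0$ (from pruning) to conclude $w_t>0$ and hence $x_{ts}=1$. No difference in approach worth noting.
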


\begin{proof}
By definition, the fixed point satisfies the bargaining equations on edge $(t,s)$:
\[
(1-x_{ts})w_t = (1-x_{ss'})(\max(\wst,x_{ts}w_t) - \wst).
\]
Since $x_{ss'} = 1$, the right-hand side is zero. Also, $w_t >0$, since $x_e >0$ on every edge $e$ in a fixed point by Lemma \ref{lem:nonzero} and $v_l > 0$ for all leaves $l$.  Therefore, to satisfy  $(1-x_{ts})w_t = 0$, we must have $x_{ts} = 1$.
\end{proof}

The following corollary follows immediately.
\begin{corollary}
\label{cor:onessubtree}
Consider any child node $t$ with parent $s$ in $T$. If $x_{ts} =1$ in a fixed point, then $x_{e} = 1$ for every edge $e$ in the subtree rooted at $t$ in this fixed point.
\end{corollary}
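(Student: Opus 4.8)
The plan is to derive the corollary from the preceding lemma by a straightforward induction on the height of the subtree rooted at $t$. The only content is to correctly match up the roles of the three nodes appearing in the lemma with the nodes in the subtree as we descend.

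First I would set up the induction on the height $h$ of the subtree rooted at $t$. For the base case $h = 0$, the node $t$ is a leaf and the subtree rooted at $t$ contains no edges, so there is nothing to prove. For the inductive step, suppose the claim holds for all nodes whose subtrees have height less than $h$, and let $t$ have subtree of height $h$ with $x_{ts} = 1$ in the fixed point $\vec{x}$. Every edge in the subtree rooted at $t$ is either an edge of the form $(t', t)$ for some child $t'$ of $t$, or an edge lying in the subtree rooted at some child $t'$ of $t$; so it suffices to handle these two types.

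For an edge of the form $(t', t)$ with $t'$ a child of $t$: apply the preceding lemma with the triple of nodes $(t', t, s)$ playing the role of $(t, s, s')$ there (so $t$ is the parent of $t'$ and $s$ is the parent of $t$). Since $x_{ts} = 1$, the lemma gives $x_{t't} = 1$. Now, for any child $t'$ of $t$, we have just shown $x_{t't} = 1$, and the subtree rooted at $t'$ has height strictly less than $h$; hence the inductive hypothesis applied to $t'$ yields $x_e = 1$ for every edge $e$ in the subtree rooted at $t'$. Combining the two cases over all children $t'$ of $t$ shows $x_e = 1$ for every edge in the subtree rooted at $t$, completing the induction.

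I do not expect any genuine obstacle here: the argument is a routine propagation of the $x_e = 1$ condition downward using the already-established lemma, and the only thing to be careful about is the bookkeeping of which node is parent of which when invoking the lemma. (An alternative non-inductive phrasing would be: take any edge $e = (a,b)$ in the subtree rooted at $t$, consider the path $t = b_0, b_1, \ldots, b_k = b$ from $t$ down to $b$, and apply the lemma repeatedly along consecutive triples $(b_{j+1}, b_j, b_{j-1})$ to conclude $x_{b_{j+1}b_j} = 1$ for all $j$; this gives $x_e = 1$ directly. Either formulation works.)
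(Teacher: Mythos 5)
Your proof is correct and matches the paper's intent: the paper simply states that the corollary "follows immediately" from the preceding lemma, and your induction on subtree height (or the alternative iterated application along a root-to-node path that you sketch in parentheses) is exactly the routine downward propagation being alluded to. The role-matching of $(t', t, s)$ with the lemma's $(t, s, s')$ is handled correctly.
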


Next we show that the path corresponding to a buyer with the highest value will also offer the highest final value to the seller, and therefore win: that is, the outcome corresponding to any fixed point is always efficient.

\begin{lemma}
\label{lem:maxwin}
Let $v^*= \max_{l \in T} v_l$ be the highest value on the leaves, and let $l^*$ be a highest value leaf with corresponding path $p^*$ to the root.  Let $l$ be any leaf with value $v_l < v^*$ and corresponding path $p$ to the root. Then, in any fixed point $\vec{x}$,
\[
v^* \left( \prod_{e \in p^*} x_e \right) > v \left( \prod_{e \in p} x_e \right).
\]
\end{lemma}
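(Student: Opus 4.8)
The plan is to argue by contradiction, supposing that some non-maximal leaf $l$ (with $v_l < v^*$) attains a final value reaching the root that is at least as large as the final value along a maximal path $p^*$. First I would establish the auxiliary claim stated in the text: in any fixed point, $w_{i-1}x_i \ge d_i$ for every edge on the relevant path, \ie{} the value actually flowing up along a path beats the local disagreement point at each node. This follows from the bargaining equation~(\ref{eqn:edge}): if we had $w_tx_{ts} < \wst$ for some edge, the right-hand side of~(\ref{eqn:edge}) would be $(1-x_{ss'})(\wst - \wst) = 0$, forcing $(1-x_{ts})w_t = 0$, hence $x_{ts}=1$; then by Corollary~\ref{cor:onessubtree} the entire subtree below $t$ has all shares equal to $1$, so $w_t$ equals the true value at the leaf below it — but a share of $1$ means the child retains nothing, which combined with the equation gives a consistent picture only when the child's value is ``dominated'' by a sibling, and I would turn this into the statement that $x_e=1$ edges sit exactly on side branches carrying values strictly below the maximum. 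The cleanest route is probably to invoke the reduction (Theorem~\ref{t-tree-to-path}): off the winning-relevant path $P^*$ all shares are $1$, so a path $p$ through a leaf with $v_l<v^*$ either coincides with $P^*$ for a prefix and then branches off through an $x_e=1$ edge (contributing a strictly smaller value), or — if $l$ itself is a maximal leaf — is handled by the LCA construction.

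Next I would do the core comparison. Let the two paths $p^*$ and $p$ share a common prefix from the root down to some node $s$ (their least common ancestor), after which they diverge: $p^*$ continues through child $t^*$ of $s$, and $p$ through a different child $t \ne t^*$. Write $w_s$ for the value reaching $s$ along either path (it is the same up to $s$). Along $p^*$, the value reaching the root is $w_s \cdot \prod_{e \in p^*, e \text{ above } s} x_e$ times nothing further — wait, more carefully: the value that $s$ passes up is $(1-x_{ss'})$-retained, but what matters is the product of shares from the leaf up. So I would instead track $\prod_{e\in p}x_e$ directly: along $p$, the quantity $v_l \prod_{e\in p} x_e = w_s^{(p)}$ where $w_s^{(p)}$ is the value reaching $s$ via the branch through $t$, namely $w_t x_{ts}$; and along $p^*$ it is $v^* \prod_{e\in p^*} x_e = w_{t^*} x_{t^*s}$. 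Since $s$'s winning child is the one maximizing $x_{t's}w_{t'}$ over children $t'$, and $t^*$ lies on $p^*$ which is the winning path by Lemma (the efficiency direction) — actually that is what we are trying to prove, so I cannot assume it. Instead I would use the bargaining equation at edge $(t^*, s)$: the disagreement term $\wst[s\setminus t^*] \ge x_{ts}w_t$ would, if the branch through $t$ were larger, make $t^*$'s equation read $(1-x_{t^*s})w_{t^*} = (1-x_{ss'})(\wst[s\setminus t^*] - \wst[s\setminus t^*]) = 0$ only if $x_{t^*s}w_{t^*}$ is itself dominated — so the interesting case is $x_{t^*s}w_{t^*} \ge x_{ts}w_t$ already, i.e.\ $s$ does prefer $t^*$. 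Combining this with $v^* > v_l$ and the claim $w_i > d_i$ (strict, since all $d_i < d_0$), I would derive the strict inequality $v^*\prod_{e\in p^*}x_e = x_{t^*s}w_{t^*} \ge x_{ts}w_t = v_l\prod_{e\in p}x_e$, and then upgrade $\ge$ to $>$ using strictness of $v^* > v_l$ propagated through the fact that shares are in $(0,1]$ (Lemma~\ref{lem:nonzero}) and the reduction forcing $x_e=1$ precisely on the dominated branches, so the product along $p$ loses a strict factor.

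The main obstacle I anticipate is the circularity: the lemma is the efficiency statement itself, so I must not presuppose that $p^*$ is the winning path when applying the bargaining equation at the branch node $s$. The right way around this is to use the structural lemmas already proved (Lemma~\ref{lem:nonzero} giving $x_e > 0$, Corollary~\ref{cor:onessubtree} giving the ``all-ones subtree'' structure, and ideally the reduction Theorem~\ref{t-tree-to-path} which tells us that in any fixed point every edge off the path $P^*$ through the LCA of the maximal leaves has share exactly $1$) to pin down the combinatorial structure of any fixed point \emph{first}, and only then run the numerical comparison. Concretely: because $v_l < v^* = d_0$ but along the reduced path every node $i$ has $w_i > d_i$ with $d_i$ the best off-path value in its subtree, the leaf $l$ must hang off an $x_e = 1$ edge somewhere, and traversing that edge multiplies the running product by exactly the share $1$ while the corresponding maximal branch multiplies by a strictly smaller share — no, that is backwards; rather, $l$'s path includes an edge carrying share $1$ whose \emph{child} value is $\le d_i < d_0 = v^*$, and since $v_l \prod_{e \in p} x_e \le d_i < v^* \le v^*\prod_{e\in p^*} x_e$ (the last because on $P^*$ all shares would need to be argued $\le 1$ trivially but the product telescopes against the $w_i > d_i$ chain), strictness follows. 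I would make this precise by showing $v^*\prod_{e \in p^*}x_e = w_n \ge $ (something $> d_i$) for the node $i$ where $p$ branches off, which is exactly guaranteed by part~2 of Lemma~\ref{lem:increasing} ($w_i > d_i$), completing the contradiction-free argument.
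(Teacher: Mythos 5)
Your core idea is the same as the paper's: locate the least common ancestor $s$ of $l$ and $l^*$, examine the bargaining equation at the edge $(t^*,s)$ where $t^*$ is the child of $s$ on $p^*$, argue that if $t$'s branch offered at least as much as $t^*$'s then the right-hand side of that equation would vanish, forcing $x_{st^*}=1$ (via Lemma~\ref{lem:nonzero}) and hence by Corollary~\ref{cor:onessubtree} all shares below $t^*$ equal to $1$, so that $v^*$ arrives intact at $s$ — contradicting the assumed domination since $v_l < v^*$. That is exactly the paper's argument, and your ``core comparison'' paragraph essentially has it.

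However, two problems keep this from being a clean proof.  First, there is a real circularity in the tools you reach for.  You propose to invoke Theorem~\ref{t-tree-to-path} (``off the winning-relevant path $P^*$ all shares are $1$'') and Lemma~\ref{lem:increasing} (the $w_i>d_i$ monotonicity on the reduced path).  In the paper's development both of these sit \emph{downstream} of Lemma~\ref{lem:maxwin}: the reduction theorem is proved using Lemmas~\ref{lem:maxwin}, \ref{lem:ones} and \ref{l-ties}, and Lemma~\ref{lem:increasing} is about the reduced path instance that only exists once the reduction is in place.  You cannot use them here.  Similarly, the ``auxiliary claim'' $w_{i-1}x_i\ge d_i$ is not a lemma you get to assume first — the paper's text explicitly notes it is a \emph{consequence} established inside the proof of Lemma~\ref{lem:maxwin}.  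The only structural results available are Lemma~\ref{lem:nonzero} and Corollary~\ref{cor:onessubtree}, and fortunately those suffice.  Second, your final chain of inequalities compares the wrong quantities: you write $v^*\prod_{e\in p^*}x_e=x_{t^*s}w_{t^*}$ and later $v^*\le v^*\prod_{e\in p^*}x_e$, neither of which is correct (the full product $\prod_{e\in p^*}x_e$ also includes shares \emph{above} $s$, and it is at most $1$, not at least).  The comparison must be made on the \emph{subpaths} from the leaves down to $s$: from $v^*\prod_{e\in p^*}x_e \le v_l\prod_{e\in p}x_e$ and the fact that $p$ and $p^*$ coincide above $s$ one deduces $v^*\prod_{e\in p^*_{l^*s}}x_e \le v_l\prod_{e\in p_{ls}}x_e$, i.e.\ $x_{st^*}w_{t^*}\le x_{st}w_t \le w_{s\setminus t^*}$; after forcing $x_e=1$ on $p^*_{l^*s}$ this reads $v^*\le v_l\prod_{e\in p_{ls}}x_e\le v_l<v^*$, the desired contradiction.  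Fixing those two points — stripping out the circular appeals and replacing full-path products by subpath products — turns your proposal into the paper's proof.
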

\begin{proof}
Suppose not. Then, there exists a path $p = \{l, i, \ldots, r\}$ with $v_l < v^*$ and $$v^* \left( \prod_{e \in p^*} x_e \right) \leq v \left( \prod_{e \in p} x_e \right).$$

Let $s$ be the least common ancestor of $l$ and $l^*$ in $T$. Let $t$ be the child of $s$ such that $(t,s) \in p$ and let $t^*$ be the child of $s$ such that $(t^*,s) \in p^*$. Let $p^*_{l^*s}$ be the portion of path $p^*$ from $l^*$ to $s$, and similarly $p_{ls}$ be the portion of path $p$ from $l$ to $s$. Let $s'$ be the parent of $s$, and consider the bargaining equation for the edge $(s,t^*)$,
\[
(1-x_{st^*})w_{t^*} = (1-x_{ss'})(\max(w_{s\setminus t^*}, x_{st^*}w_{t^*}) - w_{s\setminus t^*}).
\]
Since $v^* \left( \prod_{e \in p^*} x_e \right) \leq v_l \left( \prod_{e \in p} x_e \right)$, and the paths $p^*$ and $p$ differ only 'below' $s$ (by choice of $s$), it must be that  $v^* \left( \prod_{e \in p^*_{l^*s}} x_e \right) \leq v_l \left( \prod_{e \in p_{ls}} x_e \right)$ as well. Therefore, the value offered to $s$ by $t^*$ is no larger than the value offered by $t$, or $w_{s\setminus t*} \geq x_{st^*}w_{t^*}$. Therefore, the right hand side of the bargaining equation above is zero.

As before, $w_{t^*} > 0$ in any fixed point from Lemma~\ref{lem:nonzero} and because $v_l > 0$ for all leaves $l$.  Therefore, it must be the case that $x_{st^*} = 1$.  But then $x_e = 1$ for all $e \in p^*_{l^*s}$ by Corollary~\ref{cor:onessubtree}, which says that all edges 'below' $t^*$ must have a share of $1$ also. But then $v^* \left( \prod_{e \in p^*_{l^*s}} x_e \right) = v^* > v \geq v\left( \prod_{e \in p_{ls}} x_e \right)$ since $v < v^*$ and $x_e \leq 1$, a contradiction.
\end{proof}
The following corollary follows immediately from the result above, since every parent node chooses the highest offer from amongst its children.
\begin{corollary}
The outcome corresponding to any fixed point $\vec{x}$ is efficient, \ie, the winner is a leaf with highest value.
\end{corollary}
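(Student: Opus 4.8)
The plan is to derive the corollary directly from Lemma~\ref{lem:maxwin} together with the recursive definition of the winning path. Recall that the value reaching a node is built bottom-up by $w_l = v_l$ at the leaves and $w_s = \max_{t \in C_s} x_{ts} w_t$, and that the winning path is traced out by following the argmax child $t^*(s)$ from the root down to a leaf $l^*$. So the first (and essentially only substantive) step is to unfold this recursion: for every node $s$,
\[
w_s \;=\; \max_{l \in L(s)} \Big( v_l \prod_{e \in p_{ls}} x_e \Big),
\]
where $L(s)$ is the set of leaves in the subtree rooted at $s$, and moreover the winning leaf $l^*$ attains this maximum at $s = r$, i.e. $w_r = v_{l^*}\prod_{e \in p_{l^*r}} x_e = \max_{l}\big(v_l \prod_{e \in p_{lr}} x_e\big)$. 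This is a routine induction from the leaves upward: the identity for $s$ follows from the identity for each child $t \in C_s$ and the relation $\prod_{e \in p_{ls}} x_e = x_{ts}\prod_{e \in p_{lt}} x_e$ for $l \in L(t)$; and since $t^*(s)$ is chosen to be the child whose incoming value is largest, tracing $t^*$ from the root lands on a leaf attaining the global maximum.

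Given this, the corollary is immediate. Let $v^* = \max_l v_l$ and suppose, for contradiction, that the winning leaf $l^*$ has $v_{l^*} < v^*$. Pick any highest-value leaf $\hat l$ with $v_{\hat l} = v^*$ and let $\hat p = p_{\hat l r}$. Applying Lemma~\ref{lem:maxwin} with the highest-value leaf $\hat l$ and the strictly lower-value leaf $l^*$ gives
\[
v^* \prod_{e \in \hat p} x_e \;>\; v_{l^*} \prod_{e \in p_{l^*r}} x_e \;=\; w_r,
\]
while the identity above forces $w_r \ge v^* \prod_{e \in \hat p} x_e$, a contradiction. Hence $v_{l^*} = v^*$, so the winner is a leaf of highest value.

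I do not expect a real obstacle here: the content is carried entirely by Lemma~\ref{lem:maxwin}, and the only thing requiring care is the elementary induction showing that the local maxima in the $w_s$ recursion compose into a global maximum over leaves and that the winning path tracks a maximizing leaf. A minor bookkeeping point is the handling of ties: when several leaves share the value $v^*$, Lemma~\ref{lem:maxwin} only compares a $v^*$-leaf against strictly smaller-value leaves, which is exactly what is needed, since we only need to exclude a strictly-smaller-value leaf from winning and make no claim about \emph{which} highest-value leaf wins.
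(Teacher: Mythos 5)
Your proof is correct and takes essentially the same route as the paper: the paper also derives this corollary as an immediate consequence of Lemma~\ref{lem:maxwin}, noting in one line that each parent chooses its highest-offering child. The only difference is that you make explicit the routine induction showing $w_r$ equals the maximum over leaves of the discounted value, which the paper treats as obvious.
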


Recall that our goal is to investigate fixed points $\vec{x}$ of the system of bargaining equations, one bargaining equation for each edge $e$ in $T$. Next, we identify the value of $x_e$ in any fixed point on all but one (sub)path in the tree. Our first lemma tells us that the values of $x_e$ for all edges $e \notin p^*$, where $p^*$ is a path from a highest value buyer to the root, must be $x_e = 1$ in {\em any} fixed point, if at all one exists.

\begin{lemma}
\label{lem:ones}
Let $e$ be any edge that is not on a path from a highest value leaf to the root. Then, in {\em any} fixed point $\vec{x}$, $x_e =1$.
\end{lemma}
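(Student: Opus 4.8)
The plan is to prove the statement by contradiction with Lemma~\ref{lem:maxwin} (efficiency of any fixed point). Call a node or edge \emph{on-skeleton} if it lies on a path from a highest-value leaf to the root, and \emph{off-skeleton} otherwise; the lemma asserts precisely that every off-skeleton edge has share $1$. Two easy facts will be used throughout: an edge $(t,s)$ is off-skeleton exactly when its child endpoint $t$ is off-skeleton (a path through $t$ automatically continues through $s$), and the subtree rooted at an off-skeleton node contains no highest-value leaf, whereas the root is always on-skeleton. So suppose, for contradiction, that $\vec{x}$ is a fixed point and that some off-skeleton edge $e = (t_0, t_1)$ has $x_e < 1$. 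Let $t_0, t_1, \dots, t_m$ be the path from $t_0$ upward to the first on-skeleton node $t_m$ (which exists since the root is on-skeleton, so $m \ge 1$); then $t_0, \dots, t_{m-1}$ are all off-skeleton, as are all the edges $(t_j, t_{j+1})$ for $0 \le j \le m-1$.

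The heart of the argument is a bottom-up induction along this path. I would show, for $j = 0, 1, \dots, m-1$, that $x_{t_j t_{j+1}} < 1$, that $t_j$ is the \emph{strictly} winning child at $t_{j+1}$, and hence that $w_{t_{j+1}} = x_{t_j t_{j+1}} w_{t_j}$. The base case $x_{t_0 t_1} < 1$ is the hypothesis. For the step: once $x_{t_j t_{j+1}} < 1$ is known, the left-hand side $(1 - x_{t_j t_{j+1}}) w_{t_j}$ of the bargaining equation~(\ref{eqn:edge}) for edge $(t_j, t_{j+1})$ is strictly positive (since $w_{t_j} > 0$ by Lemma~\ref{lem:nonzero}), which forces both $x_{t_{j+1} t_{j+2}} < 1$ (feeding the next step) and $\max(w_{t_{j+1} \setminus t_j}, x_{t_j t_{j+1}} w_{t_j}) - w_{t_{j+1} \setminus t_j} > 0$, i.e. $x_{t_j t_{j+1}} w_{t_j} > w_{t_{j+1} \setminus t_j}$, which says that $t_j$ strictly beats every other child of $t_{j+1}$ and so $w_{t_{j+1}} = x_{t_j t_{j+1}} w_{t_j}$. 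Telescoping up the chain yields $w_{t_m} = (\prod_{j=0}^{m-1} x_{t_j t_{j+1}}) w_{t_0}$ together with the strict inequality $w_{t_m \setminus t_{m-1}} < w_{t_m}$.

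Now I cash this in against efficiency. Since $t_m$ is on-skeleton and, being a proper ancestor of $t_0$, is an internal node, some highest-value leaf $l^\star$ lies in the subtree rooted at $t_m$, reached through a child $t^\star$ of $t_m$; moreover $t^\star \ne t_{m-1}$ because $t_{m-1}$ is off-skeleton. Hence $w_{t^\star} x_{t^\star t_m} \le w_{t_m \setminus t_{m-1}} < w_{t_m}$, while also $w_{t^\star} \ge v^\star \prod_{e \in p_{l^\star t^\star}} x_e$. Choose the leaf $l_t$ in the subtree rooted at $t_0$ attaining $w_{t_0} = v_{l_t} \prod_{e \in p_{l_t t_0}} x_e$; then $v_{l_t} < v^\star$, because $t_0$ is off-skeleton. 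Let $p$ and $p^\star$ be the root-paths of $l_t$ and $l^\star$; they share the tail from $t_m$ to the root, whose product of shares $\pi$ is positive by Lemma~\ref{lem:nonzero}. A direct computation along the two paths (using the telescoping above) gives $v_{l_t} \prod_{e \in p} x_e = w_{t_m}\, \pi$ and $v^\star \prod_{e \in p^\star} x_e \le w_{t^\star} x_{t^\star t_m}\, \pi < w_{t_m}\, \pi$, so $v^\star \prod_{e \in p^\star} x_e < v_{l_t} \prod_{e \in p} x_e$ with $v_{l_t} < v^\star$ --- contradicting Lemma~\ref{lem:maxwin}. Hence no off-skeleton edge can have share below $1$.

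The step I expect to demand the most care is bookkeeping rather than any new idea: arranging the bottom-up induction to deliver \emph{both} ``$t_j$ wins at $t_{j+1}$'' (so that $w_{t_m}$ telescopes to $w_{t_0}$ times a product of shares) \emph{and} the strict loss $w_{t_m \setminus t_{m-1}} < w_{t_m}$ of the competing efficient branch, and then tracking the share-products along $p$ and $p^\star$ and their common tail correctly. The only genuine special case is $t_m = r$: then the common tail is empty and $\pi = 1$, and $t_m$ necessarily has a child besides $t_{m-1}$, for otherwise every leaf --- including $l^\star$ --- would lie below the off-skeleton node $t_{m-1}$.
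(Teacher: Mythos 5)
Your argument is correct, and it reaches the paper's conclusion from the same raw ingredients (the bargaining equation on each edge, Lemma~\ref{lem:nonzero}, and the efficiency result Lemma~\ref{lem:maxwin}); but you organize it as the contrapositive of what the paper does, and you re-derive the propagation step inline rather than quoting Corollary~\ref{cor:onessubtree}. The paper works ``top-down'': it goes directly to the least common ancestor $s$ of the highest-value leaf and the child endpoint of $e$, invokes (a step inside) the proof of Lemma~\ref{lem:maxwin} to conclude that the branch toward $e$ strictly loses at $s$, reads off $x_{ts}=1$ from the bargaining equation at $(t,s)$, and then cites Corollary~\ref{cor:onessubtree} to push $x=1$ down the whole subtree, in particular to $e$. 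You instead work ``bottom-up'': you assume $x_e<1$ and use the bargaining equations to push $x<1$ and the ``strictly winning child'' property up the chain to the first on-skeleton ancestor $t_m$ (this bottom-up induction is precisely the contrapositive of Corollary~\ref{cor:onessubtree} plus the computation of $w$ along the way), and then you derive a contradiction with the statement of Lemma~\ref{lem:maxwin} by exhibiting a low-value leaf whose weighted path product exceeds that of a highest-value leaf. A modest advantage of your version is that it only appeals to the \emph{statement} of Lemma~\ref{lem:maxwin} rather than to a claim extracted from inside its proof, and it is self-contained in not relying on Corollary~\ref{cor:onessubtree}; the paper's version is shorter because it reuses that corollary. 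Both are sound.
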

\begin{proof}
Let $p^*$ denote the path from a highest value leaf to the root. Consider an edge $e$ on a path from a leaf $l$ with value $v_l < v^*$ to the root, with $x_{e} < 1$. Let $s$ be the least common ancestor of $l^*$ and the child endpoint of $e$. Let $t$ be the child of $s$ that is on the path from $s$ to the edge $e$ (note that $s$ could be the parent endpoint of $e$, in which case $t$ is the other (child) endpoint of $e$).

Let $s'$ be the parent of $s$, and consider the bargaining equation for edge $ts$,
\[
(1-x_{ts})w_{t} = (1-x_{ss'})(\max(w_{s\setminus t}, x_{ts}w_{t}) - w_{s\setminus t}).
\]

Since $v_l< v^*$, we have from the proof of Lemma~\ref{lem:maxwin} and by choice of $s$ that $\wst >  x_{ts}w_t$. Therefore, the right-hand side of the bargaining equation is $0$ and so $(1-x_{ts})w_{t} = 0$ as well; as before, this can only happen with $x_{ts} =1$. Since $e$ belongs to the subtree rooted at $s$,  by Corollary \ref{cor:onessubtree}, $x_e = 1$ as well.
\end{proof}

In general, there could be more than one such path $p^*$ if there is more than one leaf with value $v_l = v^*$; so far, we do not know how to deal with these multiple paths. Our next lemma shows us how to deal with this. The proof is very similar to the previous lemmas, and is in Appendix C.
\begin{lemma}
\label{l-ties}
Suppose there is more than one leaf with $v_l = v^*$. Let $\{l_1, \ldots, l_k\}$  be leaves with $v_{l_i} = v^*$, and let $s$ be the least common ancestor of $\{l_1, \ldots, l_k\}$ in $T$. Then, in any fixed point, $x_e = 1$ for all edges in the subtree rooted at $s$.
\end{lemma}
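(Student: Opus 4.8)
The plan is to show that $x_{ts}=1$ for \emph{every} child $t$ of $s$; Corollary~\ref{cor:onessubtree} then propagates this down, giving $x_e=1$ on every edge of the subtree rooted at $s$. Write $w_s=\max_{u\in C_s}x_{us}w_u$ for the value reaching $s$. Two facts will be used repeatedly: (i) $w_t>0$ for every child $t$ of $s$, since $v_l>0$ at every (pruned) leaf and $x_e>0$ on every edge of a fixed point by Lemma~\ref{lem:nonzero}; and (ii) the value $x_{us}w_u$ arriving at $s$ through any single child $u$ equals $\max_{l}v_l\prod_{e\in p_{ls}}x_e\le v^*$, since $v_l\le v^*$ and all shares are at most $1$ --- in particular $w_s\le v^*$.

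First consider the case where at least two children of $s$ attain the maximum $w_s$. Then for \emph{any} child $t$ of $s$, removing $t$ leaves a maximizer in $C_s\setminus t$, so $\wst=w_s\ge x_{ts}w_t$; hence $\max(\wst,x_{ts}w_t)-\wst=0$, the right-hand side of the bargaining equation~(\ref{eqn:edge}) for $(t,s)$ is $0$, and $w_t>0$ forces $x_{ts}=1$. This settles every child of $s$ at once. It remains to rule out the complementary case, in which a \emph{unique} child $t^*$ of $s$ attains $w_s$. Here the same reasoning still applies to every other child $t\ne t^*$ (since $t^*\in C_s\setminus t$): we get $\wst=w_s\ge x_{ts}w_t$, hence $x_{ts}=1$, and by Corollary~\ref{cor:onessubtree} every edge in the subtree rooted at $t$ has share $1$ too.

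Now I would invoke the hypothesis that $s$ is the least common ancestor of $l_1,\dots,l_k$ with $k\ge2$: no single child of $s$ can have all of $l_1,\dots,l_k$ in its subtree, so at least two children of $s$ contain one of these maximum-value leaves, and hence at least one such child $t'$ is different from $t^*$. Pick a leaf $l'\in\{l_1,\dots,l_k\}$ in the subtree of $t'$. By the previous paragraph every edge on the path $p_{l's}$ has share $1$ (the edge $(t',s)$ together with all edges below $t'$), so the value reaching $s$ through $t'$ is at least $v_{l'}\prod_{e\in p_{l's}}x_e=v^*$; but it is also at most $w_s=x_{t^*s}w_{t^*}\le v^*$, so it equals $w_s$, contradicting the uniqueness of $t^*$. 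Thus only the first case occurs and the lemma follows. The main obstacle is precisely this last case: unlike in Lemmas~\ref{lem:maxwin} and~\ref{lem:ones}, where a strictly lower-value side makes the bargaining equation collapse at once, here two sibling subtrees under $s$ may carry equal maximal value, so no single application of~(\ref{eqn:edge}) closes the argument. The remedy is the bootstrap above: first kill every child edge of $s$ except the putative unique maximizer, then push share $1$ down the other top-value subtree via Corollary~\ref{cor:onessubtree}, which forces that subtree to deliver value $v^*$ to $s$ and so recreates a tie at $s$.
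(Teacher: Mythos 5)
Your proof is correct and follows essentially the same path as the paper's: both force $x_{ts}=1$ on every non-maximizing child of $s$ by noting that the right-hand side of the bargaining equation~(\ref{eqn:edge}) collapses to zero, propagate with Corollary~\ref{cor:onessubtree} so those subtrees deliver full value $v^*$ to $s$, and then exploit the LCA hypothesis (at least two distinct children of $s$ carry a $v^*$-leaf) to conclude that the last remaining child's share must also be $1$. Your dichotomy on whether the maximum at $s$ is attained uniquely, together with the contradiction in the unique case, is a mild repackaging of the paper's choice of $i^*=\arg\max_j x_{st_j}w_{t_j}$ and its observation that once the other $v^*$-subtrees are forced to share $1$ the tie is recreated, so the same collapse applies to $t_{i^*}$ as well.
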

In particular, this lemma tells us that if there are two buyers with maximum value in different subtrees rooted at the seller, the seller extracts full value.

\begin{proof}
Let $(t_i,s)$ be the child of $s$ on path $p_{l_is}$ from leaf  $l_i$ to $s$, and let $s'$ be the parent of $s$.
First, recall from the proof of Lemma \ref{lem:maxwin} that at every internal node $i$, $v^*\prod_{e \in p_{l_j i}	} x_e > v\prod_{e \in p_{li}} x_e$ for all leaves  $l$ with value $v< v^*$ in any fixed point. Therefore, the maximum value at every internal node must come from one of the paths corresponding to $\{l_1, \ldots, l_k\}$.
So the bargaining equation for the edges $(t_i,s)$ can be written as
\[
(1-x_{st_i})w_{t} = (1-x_{ss'})(\max_{j\in\{1, \ldots, k\}}(x_{st_j}w_{t_j}) -\max_{j\neq i}(x_{st_j}w_{t_j})).
\]
Let $i^* = \arg\max_{j}(x_{st_j}w_{t_j})$. Then, for any $i \neq i^*$, the right-hand side of the bargaining equation above is zero (note that this is true even if the $\arg\max$ is not unique). As before, $w_{t_i} > 0$ in the right-hand side, so we must have $x_{st_i} = 1$ for all $i \neq i^*$. By Corollary \ref{cor:onessubtree}, this means that $x_e = 1$ for all edges in the subpath from $s$ to $l_i$. Therefore $w_{t_i} = v^*$ and $x_{st_i}w_{t_i} = v^*$ as well. But this is the maximum possible value for $x_{st_j}w_{t_j}$ since all the $x_e \leq 1$. Therefore, the right-hand side of the bargaining equation is zero for $i = i^*$ as well, and $x_{st_{i^*}} =1$ also; again by Corollary \ref{cor:onessubtree}, $x_e= 1$ for the remaining edges in this subpath as well. Since we already know that $x_e = 1$ for edges on all other paths in $T$, this proves our claim.
\end{proof}

Together, these lemmas will be adequate to reduce the tree bargaining problem to path bargaining problem.

We make a brief aside before summarizing the reduction to a path from a tree. Recall that given a set of revenue shares $x$, we defined nodes' payoffs in our model by computing the winner recursively at every parent node until the root, and then setting the payoffs of nodes not on the winning path to $0$, and the payoffs of nodes on the winning path to be those given by $x$. The following proposition states that the payoffs to nodes when the revenue shares $x$ constitute a {\em fixed point} of the bargaining equations can be written {\em directly} in terms of the $x$; that is, we do not need to `manually' set the payoffs to zero for nodes off the winning path.
\begin{proposition}
\label{prop:payoff}
Suppose $\vec{x}$ is a fixed point of the bargaining equations. Then each node's payoff can be computed recursively as $u_t = (1-x_{ts})w_t$.
\end{proposition}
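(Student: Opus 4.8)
The plan is to reduce the proposition to a single fact that has essentially already been established, namely that at a fixed point every edge lying off the winning path carries share $1$; once this is in hand, the identity $u_t = (1-x_{ts})w_t$ falls out of the definition of the outcome. The one trivial ingredient I would record first: by Lemma~\ref{lem:nonzero} together with the standing pruning assumption $v_l > 0$ at every leaf, the value reaching any node $t$ is a positive leaf value times a product of positive shares, so $w_t > 0$ for every $t$. Hence $(1-x_{ts})w_t = 0$ if and only if $x_{ts} = 1$, which is the only non-obvious case that will need attention.

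Next I would split on whether $t$ lies on the winning path $p^*$. If $t \in p^*$, then by the definition of the outcome in Section~\ref{s-model} the value $v_{l^*}$ is split along $p^*$ using the shares on its edges: node $t$ receives the value $w_t$ that reaches it, keeps the fraction $1-x_{ts}$ of it and passes $x_{ts}w_t$ up, so $u_t = (1-x_{ts})w_t$ by construction (for the root this reads $u_r = (1-x_{r,r'})w_r = w_r$ under the fictitious-edge convention $x_{r,r'} = 0$). If $t \notin p^*$, then $u_t = 0$ by definition; since $t \notin p^*$ the edge $(t,s)$ joining $t$ to its parent is also not on $p^*$, so it suffices to show every edge $e \notin p^*$ has $x_e = 1$, which yields $u_t = 0 = (1-x_{ts})w_t$ and finishes the proof.

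It remains to argue $x_e = 1$ for all $e \notin p^*$. By the efficiency corollary proved right after Lemma~\ref{lem:maxwin}, the winner $l^*$ is a highest-value leaf, so $p^*$ is a path from a highest-value leaf to $r$. If there is a unique highest-value leaf, then $p^*$ is precisely the path from a highest-value leaf to the root, and Lemma~\ref{lem:ones} says directly that every edge not on it has share $1$. If several leaves attain $v^*$, with least common ancestor $s_0$, I would split $e \notin p^*$ into two cases: if $e$ lies in the subtree rooted at $s_0$ then $x_e = 1$ by Lemma~\ref{l-ties}; otherwise neither endpoint of $e$ lies in that subtree, so $e$ lies on no path from a highest-value leaf to $r$ (every such path runs through $s_0$ and then follows the unique $s_0$–$r$ path, which is contained in $p^*$ and hence avoids $e$), whence $x_e = 1$ by Lemma~\ref{lem:ones}. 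Either way $x_e = 1$.

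The only delicate point I anticipate is the tie case in the last paragraph: when several leaves attain $v^*$, the portion of the winning path below $s_0$ is pinned down only up to arbitrary tie-breaking, so one must check carefully that every edge off $p^*$ is covered by exactly one of Lemma~\ref{l-ties} (edges inside the subtree of $s_0$) or Lemma~\ref{lem:ones} (edges outside it). The case split above does this, and everything else is a direct appeal to the definitions in Section~\ref{s-model}.
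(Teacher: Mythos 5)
Your proof is correct and follows the same route as the paper's: the key step in both is to invoke Lemma~\ref{lem:ones} and Lemma~\ref{l-ties} to conclude that at a fixed point every edge off the winning path carries share $1$, so $(1-x_{ts})w_t$ reproduces the zero payoff of off-path nodes and the on-path case is definitional. The paper states this in a single sentence; your version simply spells out the on/off-path case split, the $w_t>0$ preliminary, and the careful partition of off-$p^*$ edges into those inside and outside the subtree rooted at $s_0$ in the tie case, none of which departs from the paper's argument.
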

The easy proof follows from Lemmas \ref{lem:ones} and \ref{l-ties} which tell us that $x_e = 1$ for all edges not on a path from a highest value leaf to the root; therefore $(1-x_{ts})w_t= 0$ corresponding to the zero payoff received by these nodes in our model.

Using these lemmas, the reduction described in Section \ref{sec:reduction} follows. An example of this reduction is shown in Figure~\ref{fig:reduce}.

\begin{figure}[ht]
\centering
\includegraphics[width=3in]{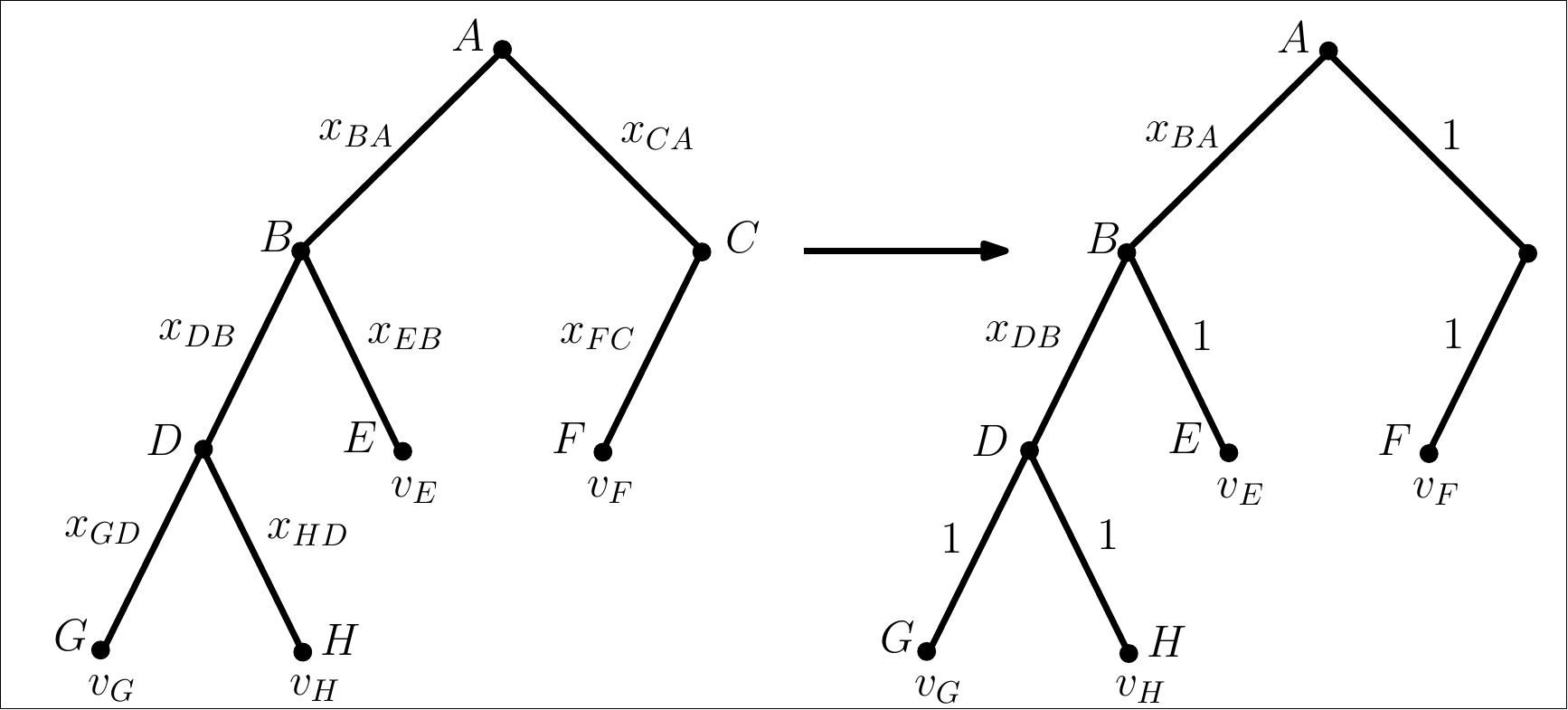}
\caption{Reduction from tree to path.  The largest value is $v_G = v_H$.}\label{fig:reduce}\vspace{-.5cm}
\end{figure}

\section{Other Solution Concepts}
\label{app:other}
Here we discuss other possible ways to choose a winner and distribute the payoff amongst nodes in the tree.  While we use a bargaining game to determine the outcome many other allocations might seem natural as well. For instance, one possible outcome is to choose the winner as the leaf with a highest value, and allocate this maximum amongst the nodes on $p^*$ by starting at the leaf, and assigning to each node the incremental value it adds to the tree.  in the special case with no intermediaries and only buyers and the seller ,  this leads to the same outcome as a second price auction. There are different variants of the scheme, one of which we will detail next --- however, as the example shows, such allocations have the undesirable property of possibly assigning zero payoffs to nodes on the winning part, even though they are crucial to generating the value $v^*$.

Given an instance $(T,v)$,  compute an outcome as follows: the path to the highest buyer is chosen as the winning path and only nodes on this path receive nonzero payoffs.  For a node $s$ on the winning path, let $r(s)$ be the remaining unallocated value after assigning value to all nodes in the path from the leaf to $s$.  Let $\alpha(s)$ be the largest value at a buyer if the subtree rooted at $s$ is removed from the tree.  Then, the value given to $s$ is $r(s) - \alpha(s)$.  For an example, consider Figure \ref{fig:tree} and let $v_D$ be the highest value at a buyer.  The value allocation would be $D: v_D - \max\{v_E, v_F\}$; $B:v_D - \max\{v_E, v_F\} -v_F$; $A: v_F$.   Intuitively, this allocation assigns each player $s$ the additional value she brings to the network through her subtree (\ie, the subtree rooted at $s$) {\em excluding} those nodes in the winning path.  Notice that if $v_F > v_E$ then $B$ receives zero value or, similarly, if $v_F = 0$ then the seller $A$ receives zero value.

A possible alternative model is to impose a fixed cost plus revenue share price structure--- each parent node charges a fixed cost to a child node for the connection that it provides the child node, and in addition keeps some share of the value that it passes up towards the seller. The question of how to choose such a fixed cost, and how the shares should be chosen given the fixed costs, is beyond the scope of this paper.

Next we discuss the two standard solution concepts in cooperative game theory, the Shapley value and the nucleolus. Recall that we prove that the fixed point leads to a payoff vector in the core of the cooperative game.

\subsection{Shapley value not in core}
The Shapley value for our cooperative game $(T,V)$ can award nonzero payoffs to nodes not in the winning path, which is clearly an undesirable property. The simplest example for this is when the tree has no intermediaries, and simply the seller and two buyers, with values, say, $12$ and $6$ respectively. In this case, the Shapley value of the game  awards a payoff of $7$ to the seller, and $4$ and $1$ to the two buyers respectively. It is easy to see that the example extends to larger trees with intermediaries, again giving nonzero payoffs to nodes not on $p^*$.

By Proposition \ref{prop:offpath}, such a payoff vector cannot belong to the core; therefore, the Shapley value need not belong to the core in our cooperative game.

\subsection{Nucleolus is not strictly monotone}
The nucleolus is a standard solution concept in cooperative game theory~\cite{Leytonbook}, and always belongs to the core when the core is nonempty. However, as we will see next, the nucleolus does not possess the strict monotonicity property possessed by the fixed point of the bargaining equations. This is analogous to the lack of strict monotonicity of the nucleolus in the network bargaining game as shown in~\cite{AzarDJR10} (we note though that the alternative solution concept we consider can be computed in polynomial time in contrast to the other solution concepts discussed in~\cite{AzarDJR10}).

The example below shows that the nucleolus need not be strictly monotone with $d_i$.

Consider a path with three nodes with $d_0 = 5$, $d_1 = 1$ and $d_2 = 3$. It can be verified that the nucleolus for the cooperative game $(T,V)$ corresponding to the tree awards payoffs $x_0 = 2/3$, $x_1 = 2/3$ and $x_2 = 11/3$ (recall that node $0$ is the buyer and node $2$ is the seller).

Now suppose $d_1$ increases to $d_1 = 2$, while all other values remain the same. Again, it can be verified that the nucleolus remains $x_0 = 2/3$, $x_1 = 2/3$ and $x_2 = 11/3$. That is, the payoff awarded to node $1$ by the nucleolus does not increase even though the bargaining power, quantified by $d_1$ increased. This is in contrast to the payoffs resulting from the bargaining solution, which are {\em strictly} monotone in the $d_i$.

\section{Nash bargaining}
\label{app:nash}

Suppose the players at the endpoints of each edge were to decide on a split using Nash bargaining instead of egalitarian bargaining, taking the shares on other edges as given, as before. As we show below, using Nash bargaining instead of egalitarian bargaining can actually lead to an inefficient solution--- the winner need not be the buyer with the highest value. The reason this happens is that these bargaining equations do not allow the child node to factor in the outcomes at nodes higher up in the tree: the value $w_t$ being bargained about is {\em realized only if} the parent node receives an adequately high portion of $w_t$ to beat out other competitors at its parent node. In our formulation of the bargaining equations where the players engage in egalitarian bargaining,  the child node equates its payoff with the {\em net} payoff of the parent, accounting for the fact that the parent must pass up some of the value from bargaining with this child. Thus, if the net payoff to the parent is zero, corresponding to losing, the child receives zero payoff too--- roughly speaking, a child with adequately high value will therefore pass up as much value as necessary (or possible) to its parent to ensure a nonzero payoff which the egalitarian bargaining equations. When we write the two player Nash bargaining equation however, the child only bargains `myopically' without accounting for the fact that it needs to actually be on the winning path to receive any payoff. That is, it bargains for a fair share {\em assuming} its parent will win, rather than for a fair share that still gives the parent node enough value to win higher up in the tree.

We note that essentially the same problem occurs when an additive, rather than multiplicative, form is used for the splits, since the amount that the parent must pass up is subtracted from both terms on the RHS, namely the parent's benefit with or without this child---  when this term does not influence the solution to the bargaining, inefficiencies can arise.

To be precise, consider the Nash bargaining solution for an edge $e= (t,s)$ is given by the following optimization,
\begin{eqnarray*}
\max & (u_1 - d_1)(u_2-d_2)\\
& (u_1, u_2) \in S.
\end{eqnarray*}
where $u_1 = (1-x_{ts})w_t$ is the value obtained by $t$, $d_1 = 0$ is $t$'s disagreement point, $u_2 = (1-x_{ss'})(\max(x_{ts}w_t, w_{s\setminus t})$ is the value obtained by $s$, and $d_2 = (1-x_{ss'})w_{s\setminus t}$ is $s$'s disagreement point, and the feasible set $S$ is parameterized by constraining the variables $x_{ts}$ and $x_{ss'}$ to lie in $[0, 1]$. Assume that $w_t > w_{s\setminus t}$ without loss of generality.

Note that the amount that the parent needs to pass up, $(1-x_{ss'})$, factors out of the objective and therefore the solution to the maximization, and does not affect the final split at all. Solving this maximization problem we see that the solution has $x_{ts} = \frac{1}{2} + \frac{w_{s\setminus t}}{2w_t}$. Thus, the child node $t$ keeps $(w_t- w_{s\setminus t})/2$, and passes up the remainder, namely $(w_t + w_{s\setminus t})/2$ to the parent node $s$.

Since the share on the edge above does not figure in the bargaining equation for this edge, the only set of interdependent equations are the ones corresponding to different children of the same parent. It is easy to check that by the same reasoning as before, the child with the highest $w_t$ will `win' at the parent in any fixed point, and the shares on edges with lower $w_t$ will be $1$. The value that the parent receives will be $(w_t+w_{t'})/2$, where $w_t$ and $w_{t'}$ are the highest and second-highest values at the child nodes respectively. This can lead to an inefficient outcome: for example, consider the trading tree in Figure~\ref{fig:tree} with leaf node $F$ removed (so that $C$ becomes a leaf node). The values at the leaves are $v_D = 1$, $v_E = 0.1$ and $v_C = 0.6$. By the reasoning above, the value reaching $B$ in a fixed point solution is $(1 + 0.1)/2 = 0.55$. Thus, $B$ cannot beat $C$ even if $B$ passes their entire value $0.55$ to the seller $A$, and hence the winner is $C$ with value $0.6$, rather than $D$, with value $1$. This is thus an inefficient solution.

\section{Proof of Lemmas from Section \ref{sec:fixedpoint}}
\label{app:omit}

We restate and prove Lemma~\ref{lem:increasing}.
\begin{lemma}
If $x_1' < 1$ is feasible, then for all $x_1 \in [x_1', 1)$, and for all $i =
1, 2, \ldots, n$:
\begin{enumerate*}
    \item $x_i \in [x_i', 1)$
    \item $w_i > d_i$.
    \item $\frac{dx_i}{dx_1} > 0$ (so $x_i$ is strictly increasing as a function of $x_1$).
    \item $\frac{dw_i}{dx_1} > 0$ (so $w_i$ is strictly increasing as a function of $x_1$).
\end{enumerate*}
Here, $x_i$, $w_i$ (resp. $x_i'$, $w_i'$) etc. are defined by
$x_1$ (resp. $x_1'$) using the upward equations (\ref{eq:upward}).
\end{lemma}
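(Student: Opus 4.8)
The plan is to proceed by induction on $i$, using the upward equation (\ref{eq:upward}) to pass from $i$ to $i+1$. The base case $i=1$ is trivial: statements 1 and 3 hold since $x_1 \in [x_1', 1)$ and $\frac{dx_1}{dx_1} = 1 > 0$, while statements 2 and 4 for $i=1$ follow because $w_1 = d_0 x_1$ is strictly increasing in $x_1$ and $w_1 > d_1$ at $x_1 = x_1'$ by feasibility of $x_1'$, hence at all larger $x_1$ too (using monotonicity of $w_1$, which here is immediate). The heart of the argument is the inductive step.

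For the inductive step, assume statements 1--4 hold for index $i$; I want to derive them for $i+1$. Rewrite (\ref{eq:upward}) as $x_{i+1} = 1 - \frac{w_{i-1} - w_i}{w_i - d_i}$, where $w_i = x_i w_{i-1}$, so that $w_{i-1} - w_i = w_{i-1}(1-x_i)$ and $w_i - d_i = x_i w_{i-1} - d_i > 0$ by the inductive hypothesis (statement 2). Thus $x_{i+1}$ is well-defined, and since $x_i < 1$ gives $w_{i-1} - w_i > 0$, we get $x_{i+1} < 1$; feasibility of $x_1'$ ensures $x_{i+1} \geq 0$ at $x_1 = x_1'$, and I will want to check $x_{i+1} \geq x_{i+1}'$ directly from the monotonicity in statement 3 for $i+1$ once established (so statement 1 for $i+1$ follows from statement 3 for $i+1$ plus the feasibility of $x_1'$). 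For statement 3 at $i+1$, differentiate $x_{i+1}$ with respect to $x_1$: since $x_{i+1}$ is a function of $w_{i-1}$ and $w_i$ (equivalently of $w_{i-1}$ and $x_i$), apply the chain rule. The cleanest route is to write $x_{i+1} = 1 - \frac{w_{i-1} - w_i}{w_i - d_i} = \frac{2w_i - w_{i-1} - d_i}{w_i - d_i}$ and compute $\frac{\partial x_{i+1}}{\partial w_i}$ and $\frac{\partial x_{i+1}}{\partial w_{i-1}}$ separately, then combine with $\frac{dw_{i-1}}{dx_1} > 0$ (inductive hypothesis, statement 4 at $i-1$) and $\frac{dw_i}{dx_1} > 0$ (statement 4 at $i$). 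A short computation gives $\frac{\partial x_{i+1}}{\partial w_i} = \frac{w_{i-1} - d_i}{(w_i - d_i)^2} > 0$ and $\frac{\partial x_{i+1}}{\partial w_{i-1}} = -\frac{1}{w_i - d_i} < 0$, so I cannot conclude positivity of $\frac{dx_{i+1}}{dx_1}$ term by term; instead I treat $x_{i+1}$ as a function of $x_1$ through $w_{i-1}$ and $x_i$ (both increasing) and use that $w_i = x_i w_{i-1}$ so the dependence on $w_{i-1}$ enters with a compensating factor. The right bookkeeping is: $\frac{dx_{i+1}}{dx_1} = \frac{\partial x_{i+1}}{\partial x_i}\frac{dx_i}{dx_1} + \frac{\partial x_{i+1}}{\partial w_{i-1}}\frac{dw_{i-1}}{dx_1}$, and a direct calculation of $\frac{\partial x_{i+1}}{\partial x_i}$ (holding $w_{i-1}$ fixed) and $\frac{\partial x_{i+1}}{\partial w_{i-1}}$ (holding $x_i$ fixed), both evaluated using $w_i = x_i w_{i-1}$, should show the first term dominates because $\frac{\partial x_{i+1}}{\partial x_i} > 0$ and $\frac{\partial x_{i+1}}{\partial w_{i-1}} \geq 0$ in this parametrization — the key algebraic fact to verify is that $\frac{\partial x_{i+1}}{\partial w_{i-1}}\big|_{x_i \text{ fixed}} = \frac{(x_i-1)(-d_i)}{(x_i w_{i-1}-d_i)^2} \cdot (\text{positive}) \geq 0$ since $x_i < 1$ and $d_i \geq 0$. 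I expect this sign computation to be the main obstacle: getting the parametrization right so that all partial derivatives have a definite sign, rather than fighting cancellations.

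Once statement 3 at $i+1$ is in hand, statement 1 at $i+1$ follows: $x_{i+1}$ is increasing in $x_1$ on $[x_1',1)$, so $x_{i+1} \geq x_{i+1}'$, and $x_{i+1} < 1$ was already shown, giving $x_{i+1} \in [x_{i+1}', 1)$. Statement 4 at $i+1$ follows from $w_{i+1} = x_{i+1} w_i$ with both factors positive and increasing in $x_1$ (statements 3 and 4 at $i+1$ and $i$), so $\frac{dw_{i+1}}{dx_1} = \frac{dx_{i+1}}{dx_1} w_i + x_{i+1}\frac{dw_i}{dx_1} > 0$. Finally, statement 2 at $i+1$, i.e. $w_{i+1} > d_{i+1}$, follows because it holds at $x_1 = x_1'$ by feasibility of $x_1'$, and $w_{i+1}$ is strictly increasing in $x_1$ (statement 4 at $i+1$), so $w_{i+1} \geq w_{i+1}' > d_{i+1}$ for all $x_1 \in [x_1', 1)$. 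This closes the induction. The only subtlety to flag is handling the boundary: $d_i$ may be zero for interior nodes, so the signs above are claimed as $\geq 0$ where $d_i$ could vanish, but the strict inequalities driving statements 3 and 4 come from the $\frac{dx_i}{dx_1} > 0$ term, which is strictly positive by the inductive hypothesis, so strictness propagates regardless.
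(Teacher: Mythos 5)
Your proposal is correct and takes essentially the same route as the paper: induction on $i$, differentiating the upward equation via the chain rule in the parametrization by $(x_i, w_{i-1})$ so that both $\frac{\partial x_{i+1}}{\partial x_i} = \frac{w_{i-1}(w_{i-1}-d_i)}{(x_iw_{i-1}-d_i)^2} > 0$ and $\frac{\partial x_{i+1}}{\partial w_{i-1}} = \frac{(1-x_i)d_i}{(x_iw_{i-1}-d_i)^2} \geq 0$ (you correctly diagnose why the $(w_{i-1},w_i)$ parametrization fails), with statements 1, 2, 4 at $i+1$ then following from 3 and the inductive hypothesis exactly as you describe. The only blemishes are cosmetic: you leave $\frac{\partial x_{i+1}}{\partial x_i}$ unverified even though it is the term supplying strict positivity (so it must be shown $>0$, not just $\geq 0$), and the stray ``$\cdot(\text{positive})$'' factor tacked onto your $\frac{\partial x_{i+1}}{\partial w_{i-1}}$ expression is spurious.
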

\begin{proof}[Proof of Lemma \ref{lem:increasing}]
We prove this by induction on $i$. For $i = 1$, parts 1 and 3 are trivially
true part 4 follows from $w_1 = d_0x_1$. Finally, for part 2, we have $w_1 =
d_0x_1 \geq d_0x_1' = w_1' > d_1$, since $x_1'$ is feasible.

Assume that the statement is true for some $i \geq 1$, now we show it for $i+1$. First, by the induction hypothesis, we have $x_i < 1$, and $w_i > d_i$, and so $x_{i+1} = 1 - \frac{(1 - x_i)w_{i-1}}{w_i - d_i} < 1$. This proves half of part 1. We now prove parts 3 and 4. We have

\begin{eqnarray*}
&&\hspace{-.4cm}\frac{dx_{i+1}}{dx_1}\ =\ \frac{\partial x_{i+1}}{\partial x_i}\cdot \frac{dx_{i}}{dx_1} + \frac{\partial x_{i+1}}{\partial w_{i-1}}\cdot \frac{dw_{i-1}}{dx_1}\ \\
&&\hspace{-.4cm}=\ \frac{w_{i-1}(w_{i-1} - d_i)}{(x_iw_{i-1} - d_i)^2} \cdot \frac{dx_{i}}{dx_1} +  \frac{(1 - x_i)d_i}{(x_iw_{i-1} - d_i)^2} \cdot \frac{dw_{i-1}}{dx_1}\ >\ 0,
\end{eqnarray*}

since, by the induction hypothesis, we have $x_i < 1$, $w_{i-1} \geq w_i > d_i$, $\frac{dx_{i}}{dx_1} > 0$ and $\frac{dw_{i-1}}{dx_1} > 0$. Similarly, we have
\begin{eqnarray*}
&&\frac{dw_{i+1}}{dx_1}\ =\ \frac{\partial w_{i+1}}{\partial x_{i+1}}\cdot \frac{dx_{i+1}}{dx_1} + \frac{\partial w_{i+1}}{\partial w_i}\cdot \frac{dw_i}{dx_1}\ \\
&&=\ w_i\cdot \frac{dx_{i+1}}{dx_1} + x_i \cdot \frac{dw_i}{dx_1}\ >\ 0,
\end{eqnarray*}

since, by the induction hypothesis, we have $w_i > d_i > 0$, $x_i \geq x_i' \geq 0$, $\frac{dw_i}{dx_1} > 0$, and we just proved that $\frac{dx_{i+1}}{dx_1} > 0$.

Now the other half of part 1, that $x_{i+1} \geq x_{i+1}'$ is immediate: this is because $x_{i+1}$ is a strictly increasing function of $x_1$. Similarly, part 2, that $w_{i+1} > d_{i+1}$ is also immediate: this is because $w_{i+1}$ is a strictly increasing function of $x_1$, and so $w_{i+1} \geq w_{i+1}' > d_{i+1}$ since $x_1'$ is a feasible point.
\end{proof}

We restate and prove Lemma~\ref{lem:curves}.
\begin{lemma}
In the feasible region for $x_1$, the upward curve for $x_n$ is strictly increasing, and the downward curve for $x_n$ is strictly decreasing.
\end{lemma}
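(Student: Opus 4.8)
The plan is to obtain both halves of Lemma~\ref{lem:curves} as essentially immediate consequences of Lemma~\ref{lem:increasing}, which already contains all the analytic work. First I would pin down precisely what the two curves are, as functions of the single variable $x_1$ ranging over the feasible region (which, by Lemma~\ref{lem:feasible-region}, is an interval with right endpoint $1$). For a feasible $x_1$, applying the upward equations (\ref{eq:upward}) in turn produces $x_2,\dots,x_n$, and hence $w_1,\dots,w_{n-1}$; the resulting value of $x_n$ is by definition the \emph{upward curve}, which I will denote $x_n^{\uparrow}(x_1)$. The \emph{downward curve} $x_n^{\downarrow}(x_1)$ is instead the value of $x_n$ forced by the downward equation (\ref{eq:downward}) for $i=n$ together with the convention $x_{n+1}=0$; substituting $x_{n+1}=0$ there gives
\[
x_n^{\downarrow}(x_1)\ =\ \frac{w_{n-1}+(1-x_{n+1})d_n}{(2-x_{n+1})w_{n-1}}\ =\ \frac{w_{n-1}+d_n}{2\,w_{n-1}}\ =\ \frac12+\frac{d_n}{2\,w_{n-1}},
\]
where $w_{n-1}=d_0\prod_{j=1}^{n-1}x_j$ and $x_2,\dots,x_{n-1}$ are again the values obtained from $x_1$ via the upward equations. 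The point to isolate is that, after fixing $x_{n+1}=0$, the entire dependence of $x_n^{\downarrow}$ on $x_1$ runs through the single quantity $w_{n-1}$.

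The upward curve is then handled in one step: part~3 of Lemma~\ref{lem:increasing} applied with $i=n$ gives $\frac{d}{dx_1}x_n^{\uparrow}(x_1)>0$ at every $x_1<1$ in the feasible region, so $x_n^{\uparrow}$ is strictly increasing there (and, by continuity, on the whole feasible interval).

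For the downward curve, I would observe that $d_n$ is a fixed positive constant, that $w_{n-1}>d_{n-1}\ge 0$ on the feasible region (part~2 of Lemma~\ref{lem:increasing} with $i=n-1$), and that $w_{n-1}$ is strictly increasing in $x_1$ there (part~4 of Lemma~\ref{lem:increasing} with $i=n-1$). Since $t\mapsto \tfrac12+\tfrac{d_n}{2t}$ is strictly decreasing for $t>0$, composing it with the strictly increasing, strictly positive function $w_{n-1}(x_1)$ shows $x_n^{\downarrow}$ is strictly decreasing; explicitly,
\[
\frac{d}{dx_1}x_n^{\downarrow}(x_1)\ =\ -\,\frac{d_n}{2\,w_{n-1}^2}\cdot\frac{dw_{n-1}}{dx_1}\ <\ 0 .
\]

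I do not expect a real obstacle here — Lemma~\ref{lem:increasing} is doing the heavy lifting — and the only things that need care are bookkeeping: (i) unwinding the definition of the downward curve so that, once $x_{n+1}=0$ is substituted, its $x_1$-dependence is seen to factor through $w_{n-1}$ alone; (ii) using $d_n>0$ so that the downward curve is \emph{strictly}, not merely weakly, decreasing (in the degenerate cases $n=1$, or more generally $d_n=0$, the downward curve is a horizontal line, so the statement should be read for $d_n>0$; uniqueness in Theorem~\ref{lem:uniqueness} is unaffected, since a constant still meets a strictly increasing curve in at most one point); and (iii) passing from the open-interval derivative bounds supplied by Lemma~\ref{lem:increasing} to strict monotonicity on the closed feasible interval, which follows by continuity.
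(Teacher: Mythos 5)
Your argument is essentially identical to the paper's proof: the upward curve's strict monotonicity is read off directly from part~3 of Lemma~\ref{lem:increasing} with $i=n$, and the downward curve is rewritten (after substituting $x_{n+1}=0$) as $\tfrac12+\tfrac{d_n}{2w_{n-1}}$, which is strictly decreasing because $w_{n-1}$ is strictly increasing in $x_1$ by part~4 of Lemma~\ref{lem:increasing}. No new ideas are needed, and you correctly identified that Lemma~\ref{lem:increasing} carries all the weight.

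One small but legitimate refinement you add that the paper elides: the downward curve is strictly decreasing only when $d_n>0$; if $d_n=0$ (which can occur in the reduced path instance when the root has no side branches) the downward curve is the constant $\tfrac12$. As you note, the lemma as literally stated would then fail, but the downstream use in Theorem~\ref{lem:uniqueness} is unaffected since a strictly increasing curve meets a constant in at most one point. That is a fair observation and worth the sentence you spend on it.
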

\begin{proof}
Lemma~\ref{lem:increasing} establishes the fact that the upward curve for $x_n$ is strictly increasing.

Let $x_n' = \frac{w_{n-1} + d_n}{2w_{n-1}} = \frac{1}{2} + \frac{d_n}{2w_{n-1}}$ be the downward curve. Since $w_{n-1}$ is a strictly increasing function of $x_1$, we get that $x_n'$ is a strictly decreasing function of $x_1$.
\end{proof}

Finally, we restate and prove Theorem~\ref{lem:existence}.
\begin{theorem}[Existence]
A fixed point to the bargaining equations (\ref{eq:balance}) exists.
\end{theorem}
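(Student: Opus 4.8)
The plan is to use Brouwer's fixed point theorem applied to the map $f: [0,1]^n \to [0,1]^n$ already defined in the excerpt, namely
$$f_i(x) = \min\left\{\frac{w_{i-1} + (1 - x_{i+1})d_i}{(2 - x_{i+1})w_{i-1}},\ 1\right\},$$
with $w_{i-1} = d_0\prod_{j=1}^{i-1}x_j$, $x_{n+1} := 0$, and the convention that $f_i(x) = 1$ when $w_{i-1} = 0$. First I would check the hypotheses of Brouwer: the domain $[0,1]^n$ is convex and compact, so the only thing to verify is that $f$ is continuous, including at the ``boundary'' points where some $w_{i-1}$ vanishes. Away from such points each $f_i$ is a ratio of continuous functions with nonvanishing denominator (note $2 - x_{i+1} \geq 1 > 0$ always), capped by a $\min$ with $1$, hence continuous. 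At a point where $w_{i-1} = 0$: the first argument of the $\min$ behaves like $\frac{w_{i-1} + (1-x_{i+1})d_i}{(2-x_{i+1})w_{i-1}} \to +\infty$ as $w_{i-1} \to 0^+$ (since $(1-x_{i+1})d_i \geq 0$ and in the reduced instance $d_i > 0$), so $f_i \to 1$, matching the stated convention; thus $f_i$ is continuous everywhere on $[0,1]^n$. Brouwer then yields a fixed point $x^\star$ of $f$.

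The main work — and the main obstacle — is then to show that any fixed point $x^\star$ of $f$ is actually a fixed point of the bargaining equations (\ref{eq:balance}), i.e.\ that at $x^\star$ the $\min$ is never active (the first argument is $\leq 1$) and that each $w_{i-1}^\star x_i^\star \geq d_i$ so that replacing $\max\{x_i w_{i-1}, d_i\}$ by $x_i w_{i-1}$ was legitimate. The subtlety is that $f$ was designed only from the downward equations (\ref{eq:downward}) with a safety $\min$; a fixed point of $f$ could a priori sit at a spurious point where some $x_i^\star = 1$ because the unclamped downward expression exceeded $1$. The strategy to rule this out: argue by contradiction, taking the \emph{largest} index $i$ at which the clamp is active, i.e.\ $x_i^\star = 1$ but the downward expression $\geq 1$; since it is the largest such index, the edges above behave ``genuinely.'' From $x_i^\star = 1$ we get $w_i^\star = w_{i-1}^\star$. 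I would then propagate this upward: using the fixed-point relation $x_{i+1}^\star = f_{i+1}(x^\star)$ with $w_i^\star = w_{i-1}^\star$ and the downward equation, one shows $x_{i+1}^\star = 1$ as well, and inductively $x_j^\star = 1$ for all $j \geq i$; in particular $x_n^\star = 1$, which forces (via the $e_n$ equation, since $x_{n+1} = 0$) the relation $\tfrac{w_{n-1}^\star + d_n}{2 w_{n-1}^\star} = 1$, i.e.\ $w_{n-1}^\star = d_n$. But $w_{n-1}^\star = d_0 \prod_{j<n} x_j^\star$, and because the clamp is active from index $i$ upward while $d_n < d_0$, one can trace a contradiction — either $w_{n-1}^\star < d_n$ (if some earlier $x_j^\star < 1$) or $w_{n-1}^\star = d_0 > d_n$ (if all are $1$), either way contradicting $w_{n-1}^\star = d_n$. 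This pins down that no clamp is active at $x^\star$, so $x^\star$ satisfies the downward equations (\ref{eq:downward}) exactly, which are algebraically equivalent to (\ref{eq:balance}).

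Finally, I would record the remaining small check: the downward equations (\ref{eq:downward}), once the clamp is inactive, are an exact rearrangement of (\ref{eq:balance}) \emph{provided} $w_{i-1}^\star x_i^\star - d_i \geq 0$ for each $i$, which is exactly the fact quoted from the proof of Lemma~\ref{lem:maxwin} that $w_{i-1} x_i \geq d_i$ in any fixed point because $d_i < d_0$; so once we know $x^\star$ solves (\ref{eq:downward}) we can invoke that lemma's argument to conclude the $\max$ in the original bargaining equation indeed equals $x_i^\star w_{i-1}^\star$, and hence $x^\star$ is a genuine fixed point of the bargaining game. Combined with Theorem~\ref{lem:uniqueness}, this also establishes that the fixed point is unique. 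The delicate part throughout is the boundary/clamp analysis in the second paragraph; the Brouwer application and the final algebraic equivalence are routine.
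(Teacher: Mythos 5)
Your Brouwer setup is identical to the paper's: same map $f$, same continuity argument at $w_{i-1}=0$, same invocation of compactness and convexity of $[0,1]^n$. The difference, and the gap, is in the ``main work'' step of showing that a fixed point of $f$ has no active clamp, i.e.\ no $x_i^\star = 1$.

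The paper propagates \emph{downward} first. From $x_i^\star = 1$ the fixed-point equation for $x_{i-1}^\star$ becomes $f_{i-1}(x^\star) = \min\{w_{i-2}^\star/w_{i-2}^\star,1\}=1$, because the $(1-x_i^\star)d_{i-1}$ term vanishes and $2-x_i^\star = 1$; so $x_{i-1}^\star = 1$, and inductively $x_j^\star = 1$ for all $j \le i$, hence $w_{i-1}^\star = d_0$. \emph{Only then} does the upward step work: the fixed-point equation $1 = x_i^\star = \frac{d_0 + (1-x_{i+1}^\star)d_i}{(2-x_{i+1}^\star)d_0}$ forces $(1-x_{i+1}^\star)(d_i - d_0) = 0$, and since $d_i < d_0$ this gives $x_{i+1}^\star = 1$. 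Iterating upward to $x_n^\star = 1$ then contradicts the $e_n$ equation, since $\frac{d_0 + d_n}{2d_0} < 1$.

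Your version picks the largest $i$ with $x_i^\star = 1$ and tries to propagate upward directly, claiming this forces $x_{i+1}^\star = 1$. But the fixed-point condition $x_i^\star = 1$ only yields $(1-x_{i+1}^\star)(d_i - w_{i-1}^\star) \ge 0$: to conclude $x_{i+1}^\star = 1$ you need $w_{i-1}^\star > d_i$, and that is exactly the fact that the downward propagation ($w_{i-1}^\star = d_0 > d_i$) supplies and your argument does not. Relatedly, the fact $w_i^\star = w_{i-1}^\star$ that you extract from $x_i^\star=1$ carries no information about $x_{i+1}^\star$; I don't see any route from it, plus the fixed-point relation for $x_{i+1}^\star$, to $x_{i+1}^\star = 1$. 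Your final dichotomy is also not quite right: $x_n^\star = 1$ gives $w_{n-1}^\star \le d_n$ (not $w_{n-1}^\star = d_n$), so the branch ``$w_{n-1}^\star < d_n$ if some earlier $x_j^\star < 1$'' is not itself a contradiction. Fixing the argument essentially requires adding the downward propagation; once you have $w_{i-1}^\star = d_0$, picking the largest index $i$ is no longer needed and you recover the paper's proof.
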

\begin{proof}
Consider the following function $f: [0, 1]^n \rightarrow [0, 1]^n$, which
represents a simultaneous update of the shares vector $\vec{x}$ on all edges
using the downward equations. Define $f$ as
$$f_i(x) = \min\left\{\frac{w_{i-1} + (1 - x_{i+1})d_i}{(2 - x_{i+1})w_{i-1}},\ 1\right\},$$
where $w_{i-1} = d_0\prod_{j=1}^{i-1} x_j$ as usual, and $x_{n+1} := 0$. We
make the convention that when $w_{i-1} = 0$, the first expression in the
minimum above is $+\infty$, so that $f_i(x) = 1$. This choice is consistent
with the limit as $w_{i-1} \rightarrow 0^+$, even in the case when $x_{i+1} =
1$. Thus, the first expression is a continuous function on the entire domain
$[0, 1]^n$. So $f$, being the minimum of two continuous functions, is also
continuous. Since $[0, 1]^n$ is a convex, compact set, by Brouwer's Fixed Point
Theorem, $f$ has a fixed point.

We show now that any fixed point of $f$ is a fixed point to the bargaining equations (\ref{eq:balance}). Suppose $x$ is a fixed point, i.e. $f(x) = x$. We
first show that for all $i$, we have $x_i < 1$. Suppose some $x_i = 1$. Then
$x_{i-1} = f_{i-1}(x) = \min\left\{\frac{w_{i-1}}{w_{i-1}}, 1\right\} = 1$.
Continuing inductively, we get that $x_j = 1$ for all $j \leq i$. This means
that $w_{i-1} = d_0\prod_{j=1}^{i-1} x_j = d_0$. Then
$$1\ =\ x_i\ =\ f_i(x)\ =\ \min\left\{\frac{d_0 + (1 - x_{i+1})d_i}{(2 - x_{i+1})d_0},\ 1\right\}\ =\ \frac{d_0 + (1 - x_{i+1})d_i}{(2 - x_{i+1})d_0}$$
since $d_i < d_0$. This implies that $x_{i+1} = 1$. Continuing inductively, we
get that $x_j = 1$ for all $j > i$, and hence, all $x_j = 1$, which implies
that all $w_j = 1$. But then, we get a contradiction for $x_n$, because using
the convention that $x_{n+1} = 0$ we get
$$1\ =\ x_n\ =\ f_n(x)\ =\ \min\left\{\frac{d_0 + d_n}{2d_0},\ 1\right\}\ \leq\ \frac{d_0 + d_n}{2d_0}\ <\ 1,$$
a contradiction. Hence no $x_i$ can equal $1$. So the fixed point satisfies
$x_i = \frac{w_{i-1} + (1 - x_{i+1})d_i}{(2 - x_{i+1})w_{i-1}}$ for all $i$, i.e. all the downward equations~(\ref{eq:downward}), and hence we
have a fixed point of the bargaining equations (\ref{eq:balance}).
\end{proof}

\section{Proof of Core Property}
\label{app:core}

The following easy result follows immediately for our game from the definition of the core:
\begin{proposition} \label{prop:offpath}
If $u$ is a payoff vector with $u_t > 0$ for some $t \notin p^*$, $u$ does not belong to the core.
\end{proposition}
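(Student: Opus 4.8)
The plan is to exhibit an explicit blocking coalition, namely the winning path $p^*$ itself. Suppose for contradiction that $u$ lies in the core. By definition of the core for the game $(T,V)$, we then have $u_i \geq 0$ for every node $i$, the efficiency condition $\sum_i u_i = V(T) = v^*$, and the no-blocking condition $\sum_{i \in S} u_i \geq V(S)$ for every coalition $S$.

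Now apply the no-blocking condition to the coalition $S = p^*$. Since $p^*$ is a path from a highest-value leaf to the root, it contains a leaf-to-root path, so $V(p^*) = v^*$. Hence the core condition forces $\sum_{i \in p^*} u_i \geq v^*$. On the other hand, since $t \notin p^*$ and all payoffs are nonnegative, we can bound
\[
\sum_{i \in p^*} u_i \ \leq\ \sum_i u_i \ -\ u_t \ =\ v^* - u_t \ <\ v^*,
\]
using efficiency for the middle equality and $u_t > 0$ for the strict inequality. This contradicts $\sum_{i \in p^*} u_i \geq v^*$, so $u$ cannot belong to the core.

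There is no real obstacle here: the only thing to observe is that the coalition $p^*$ already generates the full value $v^*$, so any payoff that leaks a positive amount to a node outside $p^*$ necessarily under-pays $p^*$ and is therefore blocked by it. This is exactly the statement we use later to rule out the Shapley value, since the Shapley value assigns positive payoff to off-path nodes.
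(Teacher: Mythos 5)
Your proof is correct and takes essentially the same approach as the paper: both use the coalition $S = p^*$ (which has value $V(p^*) = v^*$), combine the core inequality $\sum_{i \in p^*} u_i \geq v^*$ with efficiency $\sum_i u_i = v^*$ and nonnegativity, and observe that a strictly positive payoff to an off-path node forces $\sum_{i \in p^*} u_i < v^*$, a contradiction.
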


\begin{proof}
The value generated by the coalition $p^*$ is $v^*$, so any payoff vector $u$ in the core must satisfy $\sum_{t \in p^*}u_t = v^*$. But $\sum_{s \in T} u_s = V(T) = v^*$. So if $u_t > 0$ for some $t \notin p^*$, we must have $\sum_{t \in p^*}u_t < v^*$ since all payoffs are nonnegative. So $u$ cannot belong to the core.
\end{proof}

This result immediately tells us that the {\em Shapley value} for this cooperative game does not belong to the core: the Shapley value can assign nonzero payoffs to nodes not on the winning path (Appendix A), and therefore need not belong to the core in our game.

Let $u^*$ denote the payoff vector resulting from the fixed point $\vec{x}$ to the system of bargaining equations. Recall that $u^*_t = 0$ for all nodes not on the path from $s_0$, the least common ancestor of the highest value leaves, to the root, and $u_t = (1-x_{ts})w_t$ where the $w_t$ are defined recursively as in Section \ref{s-model}.

Next we show the easy result that this payoff vector $u^*$, defined by the fixed point $\vec{x}$ of our bargaining problem, indeed belongs to the core. Note that here we need to consider the original tree-bargaining instance, and cannot simply argue about the reduced path bargaining instance, since the cooperative game is defined on the {\em tree} $T$ and the reduced path instance does not contain all the agents present in $T$.

We now restate and prove Theorem~\ref{thm:core}.
\begin{theorem}
The payoff vector $u^*$ belongs to the core of $(T,V)$.
\end{theorem}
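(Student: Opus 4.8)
The plan is to verify the two defining conditions of the core directly for $u^*$: (i) $u^*$ is nonnegative with $\sum_{t \in T} u^*_t = V(T) = v^*$, and (ii) $\sum_{t \in S} u^*_t \geq V(S)$ for every coalition $S$. Condition (i) is essentially bookkeeping: $u^*_t = (1-x_{ts})w_t \geq 0$ since $x_e \in (0,1]$ in any fixed point (Lemma~\ref{lem:nonzero}), and the sum telescopes along the winning path $p^*$ — the value $v^*$ that reaches $s_0$ is split among $s_0$ and its ancestors, with node $t$ keeping $(1-x_{ts})w_t$ and passing $x_{ts}w_t = w_s$ upward, so the contributions sum to $w_{s_0} = v^*$; all other nodes get $0$ by Lemmas~\ref{lem:ones} and~\ref{l-ties}. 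For condition (ii), I only need to consider coalitions $S$ that contain a leaf-to-root path (otherwise $V(S) = 0$ and nonnegativity suffices), and moreover $S$ must contain the root $r$, so it suffices to lower-bound $\sum_{t \in S \cap p^*} u^*_t$ (the nodes of $S$ off $p^*$ contribute $0$ to $u^*$ but only help). So the real content is: for any leaf $l$ with value $v_l$ and path $p = p_{lr}$, the nodes of $p^*$ that also lie on $p$ — which is exactly the segment of $p^*$ from the least common ancestor of $l$ and $l^*$ up to the root — must together receive at least $v_l$ under $u^*$.

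The key step is therefore a lemma of the form: for every node $s$ on $p^*$, the total payoff $\sum$ of $u^*$ over $s$ and its ancestors on $p^*$ is at least $\alpha(s)$, where $\alpha(s)$ is the largest leaf value in the whole tree outside the subtree rooted at $s$ (equivalently, $d_i$ in the reduced path notation, pushed up appropriately). I would prove this by induction going \emph{up} the path $p^*$, from $s_0$ toward the root, using the bargaining equation~(\ref{eqn:edge}) at each edge. Writing $e = (s, s')$ with $w_s$ the value reaching $s$ and $w_{s \setminus t^*(s)} \ge \alpha(s)$ the best competing offer at $s'$'s relevant sibling structure, equation~(\ref{eqn:edge}) gives $u^*_s = (1-x_{ss'})w_s = (1-x_{s's''})(w_s - w_{s\setminus \cdot})$ — I would rearrange this to show that $u^*_s$ plus (the remaining value that $s$ passes to its parent, suitably discounted) dominates the competing bid. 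Concretely, the cleanest route is to work in the reduced path instance $(P^*, \vec d)$: using equation~(\ref{eq:balance}), $(1-x_i)w_{i-1} = (1-x_{i+1})(x_i w_{i-1} - d_i)$, so $u_i = (1-x_i)w_{i-1}$ and the value continuing upward from node $i$ is $x_i w_{i-1} = w_i$, and one checks inductively that $u_i + u_{i+1} + \cdots + u_n + (\text{root's retained value}) \ge d_i$. Since $d_i$ is precisely the maximum leaf value outside the subtree at node $i$, and $V(S) = \max_{l \in S} v_l$ is realized by some leaf whose path meets $p^*$ at some node $i$, this gives $\sum_{t \in S} u^*_t \ge \sum_{j \ge i} u_j + u_r \ge d_i \ge V(S)$. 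I would need to also handle the case where the witnessing leaf $l$ for $V(S)$ is itself on $p^*$ (then $v_l \le v^*$ and the full telescoped sum $v^*$ bounds it) and the tie case from Lemma~\ref{l-ties}.

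The main obstacle is making the inductive inequality $\sum_{j \ge i} u_j + u_r \ge d_i$ come out cleanly from~(\ref{eq:balance}) — the $(1-x_{i+1})$ factors mean the relationship between consecutive levels is not a plain telescope but a discounted one, so I expect to need the auxiliary observation (from the proof of Lemma~\ref{lem:maxwin}) that $w_i \ge d_i$ at every node in a fixed point, together with the fact that $u_i = w_{i-1} - \tfrac{w_i - d_i}{w_i}\,w_{i-1}$-type identities, to telescope correctly. I'd also want to double-check the boundary at the root (where $x_{n+1} = 0$ so $u_n = (1-x_n)w_{n-1}$ and there is no further discounting). Everything else — nonnegativity, the sum-to-$v^*$ condition, reducing to coalitions containing $r$ and a full path — is routine given the structural lemmas already established.
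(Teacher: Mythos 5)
Your plan reaches the right structure: restrict to coalitions containing a leaf-to-root path, note that only nodes of $S \cap p^*$ contribute, reduce to the path above the LCA $s$ of $l$ and $l^*$, compare against $d_s$, invoke $w_i \geq d_i$ from the proof of Lemma~\ref{lem:maxwin}, and treat ties via Lemma~\ref{l-ties}. That is exactly the paper's decomposition, and those are the right structural ingredients.

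The one place you anticipate a difficulty is not actually there. You worry that the sum $\sum_{j \geq i} u_j$ is ``not a plain telescope but a discounted one'' because of the $(1-x_{j+1})$ factors, and so you plan an induction that re-invokes the bargaining equation~(\ref{eq:balance}) at each level. But with correct indexing, node $j$'s payoff is $u_j = (1-x_{j+1})\,w_j = w_j - x_{j+1}w_j = w_j - w_{j+1}$, and $w_{n+1} = 0$ by the convention $x_{n+1}=0$. So
\[
\sum_{j=i}^{n} u_j \;=\; \sum_{j=i}^{n} (w_j - w_{j+1}) \;=\; w_i,
\]
an exact telescope that follows from the definition of the payoffs alone --- no bargaining equation is needed at this stage. (Your formula $u_i = (1-x_i)w_{i-1}$ has the index shifted by one: that quantity is the payoff to node $i-1$, not node $i$.) Once you have $\sum_{j \geq i} u_j = w_i$, the bound $w_i \geq d_i$ (already established in Lemma~\ref{lem:maxwin}, as you noted) together with $V(S) \leq d_s$ for $s$ the LCA closes the case $V(S) < v^*$ immediately, and the tie case $V(S) = v^*$ follows from Lemma~\ref{l-ties} giving $w_{s_0} = v^*$. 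So the argument is complete and, in fact, shorter than you expected; the bargaining equations enter only through the pre-established fact $w_i \geq d_i$, not through a fresh induction in this proof.
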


\begin{proof}
By definition of the core, we only need to argue that $\sum_{i \in S} u_i \geq v(S)$ for all subsets of agents $S$. First, if $S$ does not contain a path $p$ from a leaf to the root, $V(S) = 0$. So trivially $\sum_{i \in S} u_i \geq v(S)$ for such $S$, since $u_i \geq 0$. Now if the set $S$ does contain such paths, it has value $V(S) = \max_{p\in S} V(p)$. Let $p$ be a path in $S$ with value $V(S)$.

First suppose that $V(S) < v^*$. Let $s$ be the least common ancestor of the leaves corresponding to $p^*$ and $p$. It is enough to show that the payoff to the nodes in the subpath of $p^*$ from $s$ to $r$ (\ie, the nodes in the intersection between $p$ and $p^*$) is at least $V(S)$.

Note that the sum of the payoffs to all nodes in the subpath from $s$ to the root is exactly the value $w_s$ that arrives at $s$, \ie, $\sum_{t\in p_{sr}} u_t = w_s$, since this subpath belongs to $p^*$.

Now, we must have $V(S) \leq d_s$, where the $d_i$'s are as defined in the reduction from the tree to the path. We have already shown that a fixed point satisfies $w_i \geq d_i$ for all nodes $i$ on the winning path $p^*$ in the proof of Lemma \ref{lem:maxwin}. Therefore,
\[
\sum_{i \in S} u_i \geq \sum_{i \in p_{sr}} u_i =  \sum_{i \in p^*_{sr}} u_i  = w_s \geq d_s \geq v(S),
\]
so that the core property is satisfied for the set $S$.

If $V(S) = v^*$, then $S$ must contain a path from a leaf $l$ with $v_l = v^*$. Let $s_0$ be the least common ancestor of all such leaves in $T$; from Lemma \ref{l-ties}, we know that $x_e = 1$ for all edges in the subtree rooted at $s_0$, and therefore $w_{s_0} = v^*$. As in the previous case, again, we have $\sum_{i \in S} u_i  \geq w_s = v^* = V(S)$, so that the core property is satisfied for such sets $S$ as well.
\end{proof}

\section{Proof of Strict Monotonicity}
\label{app:monotonicity}

We restate and prove Theorem~\ref{thm:monotonicity}:
\begin{theorem}
If any $d_i$ is increased (but is still kept less than $d_0$) while the remaining $d_{i'}$ are unchanged, the payoff of $i$
strictly increases.
\end{theorem}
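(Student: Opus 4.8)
## Proof Plan for Theorem~\ref{thm:monotonicity}

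The plan is to exploit the curve-intersection picture developed in Section~\ref{sec:fixedpoint}, parametrizing everything by $x_1$ and tracking how the unique fixed point moves when a single $d_i$ increases. Recall the fixed point is the unique intersection of the strictly increasing upward curve $x_n^{\uparrow}(x_1)$ (obtained by iterating the upward equations~(\ref{eq:upward}) from $x_1$) and the strictly decreasing downward curve $x_n^{\downarrow}(x_1) = \frac12 + \frac{d_n}{2w_{n-1}(x_1)}$. First I would handle the simplest subcase: increasing $d_n$ itself. Here the upward curve is entirely unchanged (equations~(\ref{eq:upward}) for $e_1,\dots,e_{n-1}$ do not involve $d_n$), while the downward curve shifts strictly upward pointwise. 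Since the upward curve is strictly increasing and the downward curve strictly decreasing, the intersection $x_1^\star$ strictly increases, hence by Lemma~\ref{lem:increasing} every $x_j^\star$ and every $w_j^\star$ strictly increases; in particular $w_{n-1}$ strictly increases, and then from $(1-x_n)w_{n-1} = (1-x_{n+1})(x_n w_{n-1} - d_n)$ with $x_{n+1}=0$ one reads off that node $n$'s payoff $(1-x_{n+1})\cdot(\text{something})$ — more directly, node $n$'s payoff equals $w_n = x_n w_{n-1}$, and I would check this is strictly increasing by a short direct computation using the downward equation $x_n w_{n-1} = \frac{w_{n-1}+d_n}{2}$, so $w_n = \frac{w_{n-1}+d_n}{2}$ which strictly increases in both $w_{n-1}$ and $d_n$. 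Node $n-1$'s payoff is $(1-x_n)w_{n-1} = \frac{w_{n-1}-d_n}{2}$ — wait, this one I need not worry about, since the theorem only asserts the payoff of node $i$ (the one whose $d_i$ changed) strictly increases.

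For the general case, increasing $d_i$ for some $0 < i < n$: the upward equations for edges $e_1,\dots,e_{i-1}$ are unaffected, so $x_1,\dots,x_{i-1}$ as functions of $x_1$ are unchanged, and $w_{i-1}(x_1)$ is unchanged. The equation for $e_i$, namely $x_{i+1} = 1 - \frac{w_{i-1}-w_i}{w_i - d_i}$ with $w_i = x_i w_{i-1}$, does change. The cleanest route is to re-parametrize the whole upward branch by $x_i$ rather than $x_1$ from edge $e_i$ onward (legitimate since $x_i$ is a strictly increasing function of $x_1$ on the feasible region by Lemma~\ref{lem:increasing}), and observe that for fixed $x_i$, increasing $d_i$ strictly increases $x_{i+1}$ (as $\frac{w_{i-1}-w_i}{w_i-d_i}$ strictly increases in $d_i$ when $w_i > w_{i-1}$... no: $w_i \le w_{i-1}$, numerator $\geq 0$, denominator $w_i - d_i > 0$ shrinks as $d_i$ grows, so the fraction grows, so $x_{i+1}$ shrinks). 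I need to be careful about the sign here; the correct statement, which I would verify by explicit differentiation, is that the whole upward curve $x_n^{\uparrow}$ moves monotonically in one direction as $d_i$ increases, and the downward curve $x_n^{\downarrow}$ (which depends on $d_i$ only through $w_{n-1}$, which depends on $d_i$ only through the shape of the upward branch) — actually the downward curve as a function of $x_1$ is $\frac12 + \frac{d_n}{2w_{n-1}(x_1)}$ and $w_{n-1}(x_1)$ for $x_1$ fixed is unchanged by changing $d_i$ (it's $d_0 \prod_{j<n} x_j$ and the $x_j$ are determined by the upward recursion which does change). So both curves move. I would instead argue directly about the fixed point: let $\vec{x}^\star$ and $\vec{\tilde x}^\star$ be the fixed points before and after. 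Suppose for contradiction node $i$'s payoff $(1-x_i^\star)w_{i-1}^\star$ does not strictly increase. Since $w_{i-1}^\star = d_0\prod_{j<i}x_j^\star$ depends only on $x_1^\star,\dots,x_{i-1}^\star$, and using the chain of upward equations below $e_{i-1}$ being identical in both instances, I would pin down the relationship between $x_1^\star$ and $\tilde x_1^\star$ and derive a contradiction with the intersection-uniqueness of the two curves in the perturbed instance.

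The key structural fact I expect to be the main obstacle, and the step I would spend the most care on, is establishing the correct monotone direction in which the upward curve $x_n^{\uparrow}(\cdot)$ shifts as $d_i$ increases, \emph{uniformly} over the feasible region, together with simultaneously controlling the downward curve — because both curves genuinely move and a naive "one shifts up, the other is fixed" argument is not available here as it was for $d_n$. The resolution I anticipate: reparametrize by $x_i$ and show that (a) the map from $x_i$ to the payoff $(1-x_i)w_{i-1}$ is, at the fixed point, pinned by the balance equation for $e_i$ rewritten as $(1-x_i)w_{i-1} = (1-x_{i+1})(x_i w_{i-1} - d_i)$, and (b) a separate monotonicity analysis of the subpath $e_{i+1},\dots,e_n$ (which is itself a path bargaining instance with "root value" $w_i$) combined with the fact that increasing $d_i$ forces, through this equation, the pair $(x_i^\star, x_{i+1}^\star)$ to move so that node $i$'s share of $w_{i-1}^\star$ strictly grows. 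Essentially this is an "optimal substructure" induction: the subpath from $i$ to $n$ behaves like a smaller path bargaining problem whose input is $w_i$, the uniqueness/monotonicity results apply recursively, and the balance equation at $e_i$ is the coupling that transmits the increase in $d_i$ into an increase in node $i$'s realized payoff. I would set up this induction on $n - i$, with the $d_n$ case above as the base case.
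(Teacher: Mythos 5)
Your plan's skeleton — factor the payoff, show each factor strictly increases using curve-monotonicity and optimal substructure — matches the paper, which proves that $x_j^\star$ strictly increases for all $j\le i$ (Lemma~\ref{lem:below-increase}), hence $w_i^\star$ increases, and that $x_{i+1}^\star$ strictly decreases (Lemma~\ref{lem:above-decrease}), so that $u_i^\star = (1-x_{i+1}^\star)w_i^\star$ strictly increases. But there is a concrete error that would derail your argument: you repeatedly identify node $i$'s payoff as $(1-x_i^\star)w_{i-1}^\star$. That is node $(i-1)$'s payoff; node $i$ shares $w_i$ with its parent along edge $e_{i+1}$, so its payoff is $(1-x_{i+1}^\star)w_i^\star$. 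The slip is not harmless. On the two-node path with $d_0=1$ one gets $x_1^\star=(1+d_1)/2$, so $(1-x_1^\star)w_0^\star=(1-d_1)/2$ strictly \emph{decreases} as $d_1$ grows, while node $1$'s true payoff $w_1^\star=(1+d_1)/2$ strictly increases. And in the misfactored product, both $x_i^\star$ and $w_{i-1}^\star$ increase (all $x_j^\star$ with $j\le i$ go up), so the two factors move in opposite directions and the product has no a priori sign — the contradiction you are hoping to extract would never arrive.

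Two further gaps would remain after fixing the formula. First, you worry about both curves shifting simultaneously and propose an induction, but no induction is needed: for fixed $x_1$, increasing $d_i$ strictly decreases $x_{i+1}$ (you reached the right sign after your self-correction) and then, by chaining Lemma~\ref{lem:increasing} from that point upward, strictly decreases $x_j$ and $w_j$ for all $j>i$; hence the upward curve $U$ shifts strictly down while the downward curve $D(x_1)=\tfrac12+d_n/(2w_{n-1}(x_1))$ shifts strictly up, and the intersection $x_1^\star$ moves strictly right, which gives the increase in $w_i^\star$. Second, your induction on $n-i$ is the wrong shape for the other factor: after pruning at node $i$, the residual path instance sees its \emph{top} bid raised (from $w_i^\star$ to the larger ${w_i^\star}'$), not one of the lower $d_j$'s, so the inductive hypothesis (about perturbing a $d_j$ with $j>0$) simply does not apply. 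The paper instead proves a separate auxiliary claim — raising $d_0$ on a path strictly decreases $x_1^\star$ — by yet another curve-shift argument (this time the upward curve shifts up and the downward curve shifts down); you would need to state and prove that lemma explicitly to close the argument.
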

\begin{proof}
We prove this theorem in two parts. In the first part (see Lemma~\ref{lem:below-increase}), we show that
increasing $d_i$ increases all $x_j$ for $j \leq i$. This implies that the
revenue reaching $i$ increases. In the second part (see Lemma~\ref{lem:above-decrease}), we show that increasing
$d_i$ decreases $x_{i+1}$, which implies that the fraction of revenue retained
at $i$, viz. $(1 - x_{i+1})$ goes up. We thus obtain our monotonicity
condition.
\end{proof}

We now state and prove Lemmas~\ref{lem:below-increase} and \ref{lem:above-decrease}. We fix some notation first. Let the new value of $d_i$ be $d_i' > d_i$ (but less than $1$). Let $\xstar_j$, ${\xstar_j}'$ be the fixed point shares on edge $e_j$ for the two sets of values respectively, and $\wstar_j$, ${\wstar_j}'$
the fixed point revenues reaching node $j$. Let $\xcirc_1$ and ${\xcirc_1}'$ be
the right end points for the feasible interval for $x_1$ for the two sets of
values. Let $x_n = U(x_1)$ and $x_n = D(x_1)$ denote the upward and downward
curves for $x_n$ computed as a function of $x_1$, and let $x_n' = U'(x_1')$ and
$x_n' = D'(x_1')$ be the corresponding curves for the second set of values.

\begin{lemma} \label{lem:below-increase}
${\xstar_j}' > \xstar_j$ for all $j \leq i$.
\end{lemma}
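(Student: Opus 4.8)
The plan is to compare the two fixed points by tracking how the upward and downward curves for $x_n$ move when $d_i$ is increased to $d_i'$, and then to propagate the comparison down the path. First I would observe that the downward curve for $x_n$ is unchanged except possibly through its dependence on $d_n$; since only $d_i$ changes and the downward curve $x_n = \frac12 + \frac{d_n}{2w_{n-1}}$ depends on $\vec d$ only via $d_n$ when expressed as a function of $w_{n-1}$, but $w_{n-1}$ itself is computed from $x_1,\ldots,x_{n-1}$ via the upward equations which now involve $d_i'$ — so it is cleanest to argue on the upward side. The key monotonicity claim I would establish is: for a fixed value of $x_1$ (in the common feasible region), increasing $d_i$ to $d_i'$ strictly increases $x_j(x_1)$ for every $j \ge i+1$ and strictly increases $w_j(x_1)$ for every $j \ge i$, while leaving $x_j, w_j$ for $j \le i$ unchanged (they do not depend on $d_i$). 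This is because the upward equation $x_{i+1} = 1 - \frac{w_i - d_i}{\,w_i - d_i\,}$ — more precisely $x_{i+1} = 1 - \frac{w_{i-1}-w_i}{w_i - d_i}$ — is strictly increasing in $d_i$ holding $w_{i-1}, w_i$ fixed (since $w_i > d_i$), and then the same inductive argument as in Lemma~\ref{lem:increasing} shows this perturbation propagates upward: $\frac{\partial x_{j+1}}{\partial d_i} > 0$ and $\frac{\partial w_{j+1}}{\partial d_i} > 0$ for all $j \ge i$, using that $x_{j+1}$ is increasing in $x_j$ and $w_{j-1}$, and $w_{j+1}$ is increasing in $x_{j+1}$ and $w_j$.

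Having established that the upward curve $x_n = U(x_1)$ shifts strictly upward pointwise (i.e. $U'(x_1) > U(x_1)$ for every $x_1$ in the common feasible range) while the downward curve, viewed as a function of $x_1$, shifts strictly \emph{downward} pointwise — because $D(x_1) = \frac12 + \frac{d_n}{2 w_{n-1}(x_1)}$ and $w_{n-1}(x_1)$ has strictly increased (as $n - 1 \ge i$) — I would conclude that the intersection point of $U'$ and $D'$ occurs at a strictly smaller value of $x_1$ than the intersection of $U$ and $D$. Indeed, at the old fixed point's $x_1 = \xstar_1$ we have $U'(\xstar_1) > U(\xstar_1) = D(\xstar_1) > D'(\xstar_1)$, so $U' - D'$ is strictly positive at $\xstar_1$; since $U' - D'$ is strictly increasing (Lemma~\ref{lem:curves} applied to the perturbed instance: $U'$ strictly increasing, $D'$ strictly decreasing), its unique zero ${\xstar_1}'$ satisfies ${\xstar_1}' < \xstar_1$. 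That gives ${\xstar_1}' < \xstar_1$, which is the \emph{opposite} of what the lemma claims for $j = 1$ — so I must be careful about the direction; I would recheck the monotonicity of the downward curve and instead argue the inequality in whichever direction Lemma~\ref{lem:curves} forces, and then verify against the $n=2$ or $n=3$ base case to fix orientation. The correct statement should be that ${\xstar_1}'$ moves so that $\wstar_1{}' = d_0 {\xstar_1}'$ and the chain $\wstar_j{}'$ for $j \le i$ end up ordered as the lemma requires.

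Concretely, the steps in order: (1) note $x_1,\ldots,x_i$ and $w_0,\ldots,w_{i-1}$ do not depend on $d_i$, so for a \emph{fixed} $x_1$ nothing below changes; (2) prove by induction on $j \ge i$ that $\partial x_{j+1}/\partial d_i > 0$ and $\partial w_j / \partial d_i > 0$, mimicking the derivative computation in the proof of Lemma~\ref{lem:increasing}; (3) deduce the upward curve for $x_n$ shifts one way and the downward curve the other, uniformly over the common feasible region; (4) use Lemma~\ref{lem:curves}'s increasing/decreasing structure together with the value at the old fixed point to locate the new $\xstar_1{}'$ on the correct side of $\xstar_1$; (5) translate the shift in $x_1$, via the monotonicity in Lemma~\ref{lem:increasing} (the map $x_1 \mapsto x_j$ is strictly increasing for each $j$), into the claimed inequality ${\xstar_j}' > \xstar_j$ for all $j \le i$. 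The main obstacle I anticipate is bookkeeping the two competing effects correctly — increasing $d_i$ both shifts the upward curve and, through the new $\wstar$ values, changes where the fixed point lands — and getting the \emph{direction} of the final inequality right; the cleanest safeguard is to verify the $n = 2$ case by hand (where everything is a one-variable calculation) and use it to pin down the sign conventions before invoking the general curve-intersection argument.
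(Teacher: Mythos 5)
You correctly identify the paper's overall strategy — fix $x_1$, show the upward and downward curves for $x_n$ shift when $d_i$ increases, use Lemma~\ref{lem:curves} to locate the new intersection, then propagate back to the $x_j$ with $j \le i$ via Lemma~\ref{lem:increasing}. But your execution contains a sign error that you notice but do not fix, so the argument as written proves the \emph{negation} of the lemma and then stops at ``I would recheck.''

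The erroneous step is your claim that $x_{i+1} = 1 - \frac{w_{i-1}-w_i}{w_i - d_i}$ is \emph{strictly increasing} in $d_i$ holding $w_{i-1}, w_i$ fixed. It is strictly \emph{decreasing}: increasing $d_i$ shrinks the denominator $w_i - d_i > 0$, so the subtracted fraction $\frac{w_{i-1}-w_i}{w_i-d_i}$ (which has fixed nonnegative numerator, strictly positive when $x_i < 1$) \emph{increases}, hence $x_{i+1}$ \emph{decreases}. Once this is corrected, every subsequent direction flips: for fixed $x_1$ the $x_j$ and $w_j$ with $j > i$ all strictly decrease (apply Lemma~\ref{lem:increasing} to the sub-path starting at node $i$, with $x_{i+1}$ playing the role of $x_1$); hence the upward curve $U$ shifts strictly \emph{down}, and the downward curve $D = \tfrac12 + \tfrac{d_n}{2w_{n-1}}$ shifts strictly \emph{up} because $w_{n-1}$ decreases. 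Then at $\xstar_1$ one has $U'(\xstar_1) < U(\xstar_1) = \xstar_n = D(\xstar_1) < D'(\xstar_1)$, so $U' - D' < 0$ there, and since $U' - D'$ is strictly increasing its zero ${\xstar_1}'$ lies strictly to the \emph{right}: ${\xstar_1}' > \xstar_1$, matching the lemma. Your proposal arrives at the opposite conclusion, flags it as suspicious, and defers to ``verify $n=2$ by hand'' — that deferral is exactly the content of the proof you have not supplied.

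A second, smaller gap: you wave at a ``common feasible region'' but never handle the possibility that $\xstar_1$ fails to be feasible for the perturbed instance (raising $d_i$ can only raise the left endpoint of the feasible interval for $x_1$, since feasibility requires $w_i > d_i$). The paper disposes of this with an easy case split: if $\xstar_1 < {\xcirc_1}'$ the conclusion ${\xstar_1}' \ge {\xcirc_1}' > \xstar_1$ is immediate; otherwise the curve-shift argument applies. Your proposal needs this case analysis to be complete even after the sign is fixed.
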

\begin{proof}
We now show that $x_1^\star < {x_1^\star}'$, which implies the lemma, since all
values $d_j$ for $j < i$ are unchanged, and by Lemma~\ref{lem:increasing},
increasing $x_1$ increases all shares above it up to $x_i$.

Let $x_1$ be a value that is feasible for both sets of values. Let $x_j$ and
$x_j'$ denote the shares on $e_j$ computed by the upward equations from $x_1$
with the value at node $i$ being $d_i$ and $d_i'$ respectively. Let $w_j$ and
$w_j'$ denote the corresponding revenues reaching node $j$. Since all values
$d_j$ for $j < i$ are unchanged, we have $x_j' = x_j$ (and consequently, $w_j'
= w_j$) for all $j \leq i$.

Now, since $d_i' > d_i$, the upward equation (\ref{eq:upward}) shows that $x_{i+1}' < x_{i+1}$. Since $w_i' = w_i$, Lemma~\ref{lem:increasing} implies that $x_j' < x_j$ for all $j > i$, and consequently, all $w_j' < w_j$ for $j > i$ as well. In particular, the upward curve $x_n = U(x_1)$ shifts down, and the downward curve, $x_n = D(x_1) = \frac{1}{2} + \frac{d_n}{2w_{n-1}}$, shifts up since $w_{n-1}$ increases.

Now, we have the following cases:
\begin{enumerate}
    \item $\xstar_1 < {\xcirc_1}'$. In this case, we are done since ${\xstar_1}' \geq {\xcirc_1}' > \xstar_1$.

    \item $\xstar_1 > {\xcirc_1}'$. Then $\xstar_1$ is feasible for both sets of values, and hence, by the above analysis, we have $U'(\xstar_1) < U(\xstar_1) = \xstar_n$ (because the upward curve shifts up), and $D'(\xstar_1) < D(\xstar_1) = \xstar_n$ (because the downward curve shifts down). Hence, the intersection point of the $U'$ and $D'$ curves must lie to the right of the intersection point of the $U$ and $D$ curves, which implies that ${\xstar_1}' > \xstar_1$.

    \item $\xstar_1 = {\xcirc_1}'$. We can fold this case into one of the above two cases, depending on whether ${\xcirc_1}'$ is feasible for the second set of values (then fold into case 2) or not (then fold into case 1).
\end{enumerate}
\end{proof}

\begin{lemma} \label{lem:above-decrease}
${\xstar_{i+1}}' < \xstar_{i+1}$.
\end{lemma}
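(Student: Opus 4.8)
The plan is to compare the two fixed points through the node~$i$ and exploit the monotonicity machinery already in place. By Lemma~\ref{lem:below-increase} we know ${\xstar_j}' > \xstar_j$ for all $j \le i$, and since the values $d_j$ for $j < i$ are unchanged, $w_{i-1}$ strictly increases: ${\wstar_{i-1}}' > \wstar_{i-1}$. Also ${\xstar_i}' > \xstar_i$, so ${\wstar_i}' = {\wstar_{i-1}}' {\xstar_i}' > \wstar_{i-1}\xstar_i = \wstar_i$, i.e. the revenue reaching node~$i$ goes up as well. The target inequality ${\xstar_{i+1}}' < \xstar_{i+1}$ will then follow from writing $x_{i+1}$ via the upward equation~(\ref{eq:upward}) at edge $e_i$ and tracking how each ingredient moves.

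The key step is to use the form $x_{i+1} = 1 - \frac{(1-x_i)w_{i-1}}{x_i w_{i-1} - d_i}$ from~(\ref{eq:upward}), evaluated at the two fixed points. At the primed fixed point, the numerator $(1-x_i)w_{i-1}$ is compared at $({\xstar_i}', {\wstar_{i-1}}')$ versus $(\xstar_i, \wstar_{i-1})$, and the denominator $x_i w_{i-1} - d_i = w_i - d_i$ is compared at $({\wstar_i}', d_i')$ versus $(\wstar_i, d_i)$. For the denominator: ${\wstar_i}' > \wstar_i$ but $d_i' > d_i$, so I would argue directly that ${\wstar_i}' - d_i' < \wstar_i - d_i$; this is exactly the place where the increase in $d_i$ ``wins'' over the induced increase in $w_i$, and I expect to establish it using the fact that the fixed point satisfies $w_i \ge d_i$ (strictly, $w_i > d_i$, from Lemma~\ref{lem:increasing} part~2) together with a quantitative bound on how much $w_i$ can grow --- concretely, since $\xstar_{i+1} < 1$, the balance equation~(\ref{eq:balance}) at $e_i$ forces $w_i - d_i$ to be tied to the downstream quantity $(1-\xstar_{i+1})^{-1}(1-\xstar_i)w_{i-1}$, and the fact that everything downstream of $i$ shifts (upward curve down, downward curve up, as noted in the proof of Lemma~\ref{lem:below-increase}) pins the sign. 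Alternatively, and perhaps more cleanly, I would use the downward equation~(\ref{eq:downward}) at $e_i$: $\xstar_i = \frac{w_{i-1} + (1-x_{i+1})d_i}{(2 - x_{i+1})w_{i-1}}$, solve for the dependence, and observe that holding ${\xstar_i}' > \xstar_i$ and ${\wstar_{i-1}}' > \wstar_{i-1}$ fixed forces $x_{i+1}$ downward once $d_i$ goes up — treating~(\ref{eq:downward}) as an identity relating $x_i, x_{i+1}, w_{i-1}, d_i$ and checking the partial derivatives have the right signs.

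The cleanest route I would actually carry out: rearrange the balance equation~(\ref{eq:balance}) at $e_i$ into $(1 - x_{i+1}) = \dfrac{(1-x_i)w_{i-1}}{x_i w_{i-1} - d_i}$, so that $1 - x_{i+1}$ is an explicit function $g(x_i, w_{i-1}, d_i)$. I would then check that $g$ is strictly increasing in $x_i$ (holding $w_{i-1}, d_i$ fixed — differentiate; the sign is the same computation as in the proof of Lemma~\ref{lem:increasing} giving $\frac{\partial x_{i+1}}{\partial x_i} > 0$, hence $\frac{\partial (1-x_{i+1})}{\partial x_i} < 0$... so I must be careful with signs), strictly monotone in $w_{i-1}$, and — crucially — strictly increasing in $d_i$. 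Then from ${\xstar_i}' > \xstar_i$, ${\wstar_{i-1}}' > \wstar_{i-1}$, $d_i' > d_i$, I combine the monotonicities to conclude $1 - {\xstar_{i+1}}' > 1 - \xstar_{i+1}$, i.e. ${\xstar_{i+1}}' < \xstar_{i+1}$. The main obstacle is that ${\xstar_i}'$ and ${\wstar_{i-1}}'$ move in the ``wrong'' direction for some of these partials, so the argument cannot be a naive sign-chase; I will need the fact that all three of $x_i, w_{i-1}$ and the fixed-point constraint move together consistently, and in particular re-derive the comparison ${\wstar_i}' - d_i' < \wstar_i - d_i$ as an intermediate claim (using that the whole downstream system shifts monotonically, exactly as exploited in cases~1--3 of the proof of Lemma~\ref{lem:below-increase}), after which the desired inequality is forced. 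Once that intermediate claim is in hand, plugging into $x_{i+1} = 1 - \frac{(1-x_i)w_{i-1}}{w_i - d_i}$ gives the result: the numerator's net effect combined with a smaller denominator pushes $x_{i+1}$ down.
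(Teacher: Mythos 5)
Your approach diverges from the paper's and has a genuine gap. You correctly identify the obstacle: both the numerator $(1-x_i)w_{i-1}$ and the denominator $w_i - d_i$ of $1-x_{i+1}$ have ingredients moving in opposite directions, so a naive sign-chase fails. But your proposed fix does not close the argument. You defer the intermediate claim ${\wstar_i}' - d_i' < \wstar_i - d_i$ to a vague invocation of ``the whole downstream system shifts monotonically,'' which is itself circular --- the downstream shift is exactly what you are trying to establish. And even if that intermediate claim were granted, a smaller denominator does not force $1-x_{i+1}$ to increase unless you also control the numerator $(1-{\xstar_i}'){\wstar_{i-1}}'$ versus $(1-\xstar_i)\wstar_{i-1}$; since ${\xstar_i}'>\xstar_i$ pushes the numerator down while ${\wstar_{i-1}}'>\wstar_{i-1}$ pushes it up, its net change is undetermined, and a decrease there could overwhelm the smaller denominator.

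The paper avoids this tangle entirely by exploiting an optimal-substructure consequence of the uniqueness theorem: prune the tree at node $i$, replace it with a fictitious maximum bid equal to $\wstar_i$ (resp. ${\wstar_i}'$), and observe by uniqueness that the fixed-point shares on $e_j$ for $j>i$ are unchanged by this pruning. Since Lemma~\ref{lem:below-increase} gives ${\wstar_i}'>\wstar_i$, the question then reduces to a cleaner monotonicity statement: in a path bargaining problem, raising only the top bid $d_0$ (with all other $d_j$ fixed) causes the first share $x_1$ to decrease. That reduction keeps all competing $d_j$ constant, so the subsequent induction and the upward/downward-curve shift argument go through without any ambiguity of sign. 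You should look for this uniqueness-based substitution rather than trying to track the three-way interaction at edge $e_i$ directly.
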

\begin{proof}
We make use of the optimal sub-structure property of the fixed point solution:
i.e. suppose we fix the shares $x_j = \xstar_j$ for $j \leq i$, and recompute
the fixed point shares for the remaining tree. Then, the uniqueness property
implies that we recover the original fixed point shares. More precisely,
consider a bargaining problem on a smaller tree which is obtained by pruning
the original tree at node $i$, and setting the value at node $i$ to be
$\wstar_i$. The values $d_{i+1}, d_{i+2}, \ldots, d_n$ stay the same. Then, the
uniqueness of the fixed point implies that the fixed point shares on edges
$e_j$, for $j > i$, are precisely $\xstar_j$, because setting the shares to
$\xstar_j$ yields a fixed point solution to the new bargaining problem.

Note that by Lemma~\ref{lem:below-increase}, we have ${\wstar_i}' > \wstar_i$.
We exploit the optimal sub-structure property as follows. If we prune the
subtree rooted at $i$, then a fixed point to the bargaining problem where the
maximum bid, at node $i$, is $\wstar_i$ has the same shares on the edges above
$i$ as in the original instance. If we increase this maximum bid to
${\wstar_i}'$, then the new fixed point has the same shares as the fixed point
on the edges above $i$ as in instance where $d_i$ is increased to $d_i'$. We
now want to show that $x_{i+1}' < x_{i+1}$.

Mapping this reduction back to our standard notation for the path bargaining
problem, we now need to prove the following: in a path bargaining problem,
suppose the maximum bid value $d_0$ is increased to some value greater than its
original value. Then, in the new fixed point solution, $x_1$ decreases. We now
prove such a claim.

First, we prove that if $x_1 \in [0,1]$ is feasible for the original set of values, then it is feasible when the value $d_0$ is increased to $d_0'$, for any $d_0' > d_0$.
Let the new share values computed from $x_1$ be $x_i'$, and the corresponding
revenues $w_i'$. We prove by induction on $i$ that these values are feasible,
and furthermore, for all $i$, we have $x_i' \geq x_i$ and $w_i' > w_i$.

For $i = 1$, we have $x_1' = x_1 \in [0, 1]$, and $w_1' = x_1'w_0' = x_1d_0' > x_1d_0 = w_1$. Finally, $w_1' \geq w_1 > d_1$, since $x_1$ is feasible for the original set of values. Thus, the base case is established. Assume that the inductive hypothesis is true for some $i \geq 1$. Now we prove it for $i+1$.

We imagine the change from $x_{i+1}$ to $x_{i+1}'$ as happening in two steps.
First, we increase $w_{i-1}$ to $w_{i-1}'$, keeping the share on $e_i$ fixed to
$x_i$. This gives an intermediate share on $e_{i+1}$ of value $x_{i+1}'' = 1 -
\frac{w_{i-1}'(1 - x_i)}{w_{i-1}'x_i - d_i}$. Then, we increase the share on
$e_i$ from $x_i$ to $x_i'$, keeping the revenue at node $i$ fixed to
$w_{i-1}'$. We show that in each step, the share on $e_{i+1}$ increases and as
is bounded in $[0, 1]$, i.e. $x_{i+1} \leq x_{i+1}'' \leq x_{i+1}'$, and
$x_{i+1}'' \in [0,1]$ and $x_{i+1}' \in [0,1]$.

In the first step, we have
\begin{eqnarray*}
&&x_{i+1}'' - x_{i+1}'\ =\ \left[1 - \frac{w_{i-1}'(1 - x_i)}{w_{i-1}'x_i - d_i}\right] - \left[1 - \frac{w_{i-1}(1 - x_i)}{w_{i-1}x_i - d_i}\right]\ \\
&&=\ \frac{(1-x_i)d_i(w_{i-1}' - w_{i-1})}{(w_{i-1}'x_i - d_i)(w_{i-1}x_i - d_i)}\ \geq\ 0.
\end{eqnarray*}

Furthermore, since $\frac{w_{i-1}'(1 - x_i)}{w_{i-1}'x_i - d_i} \geq 0$ (because $w_{i-1}'x_i \geq w_{i-1}x_i = w_i > d_i$ by feasibility of $x_1$), we have $x_{i+1}'' = 1- \frac{w_{i-1}'(1 - x_i)}{w_{i-1}'x_i - d_i} \leq 1$. Thus, $x_{i+1}'' \in [0, 1]$.

Now, we prove the second step. We have

\begin{eqnarray*}
&&x_{i+1}' - x_{i+1}'' = \left[1 - \frac{w_{i-1}'(1 - x_i')}{w_{i-1}'x_i' - d_i}\right] - \left[1 - \frac{w_{i-1}'(1 - x_i)}{w_{i-1}'x_i - d_i}\right]\ \\
&&=\ \frac{w_{i-1}'(w_{i-1}' - d_i)(x_i' - x_i)}{(w_{i-1}'x_i' - d_i)(w_{i-1}'x_i - d_i)}\ \geq\ 0,
\end{eqnarray*}

since $w_{i-1}' \geq w_{i-1} > d_i$ by feasibility of $x_1$. Furthermore, since $\frac{w_{i-1}'(1 - x_i')}{w_{i-1}'x_i' - d_i} \geq 0$, we have $x_{i+1}' = 1- \frac{w_{i-1}'(1 - x_i')}{w_{i-1}'x_i' - d_i} \leq 1$. Thus, $x_{i+1}' \in [0, 1]$, as required.

As for $w_{i+1}'$: we have $w_{i+1}' = x_{i+1}'w_i' > x_{i+1}w_i = w_{i+1}$
where the strict inequality follows by the inductive hypothesis. Furthermore,
we have $w_{i+1}' > w_{i+1} > d_{i+1}$ by feasibility of $x_1$. Thus,
$x_{i+1}'$ is a feasible share value, and the induction is complete.

Now, we argue as in Lemma~\ref{lem:below-increase}. When the max value $d_0$
increases to $d_0' > d_0$, we conclude from the analysis above that the upward
curve for the share on $e_n$, viz. $x_n' = U'(x_1)$ shifts upwards from the
original upward curve $x_n = U(x_1)$. Also, the downward curve $x_n' = D'(x_1)
= \frac{1}{2} + \frac{d_n}{2w_{n-1}'}$ shifts {\em strictly} downwards from the
original downward curve $x_n = D(x_1)$ since $w_n' > w_n$. Now consider
$\xstar_1$. We have $U'(\xstar_1) \geq U(\xstar_1) = \xstar_n$, and
$D'(\xstar_1) < D(\xstar_1) = \xstar_n$. So $U'(\xstar_1) > D'(\xstar_1)$, and
since $U'$ and $D'$ are increasing and decreasing respectively, their
intersection point must satisfy ${\xstar_1}' < \xstar_1$.
\end{proof}

\section{Computability of Fixed Point}
\label{app:computability}

Before going into algorithms, we need to first get a quantitative version of
the arguments of Section \ref{sec:fixedpoint} in order to understand what
tolerance to use in our algorithms.

\begin{lemma} \label{lem:lowerbounds}
For any $i$, we have $x_i^\star \geq \frac{1}{2 - x_{i+1}}$. In particular, $x_i^\star \geq \frac{n-i+1}{n-i+2}$, and $w_i^\star \geq \frac{n-i+1}{n+1}$.
\end{lemma}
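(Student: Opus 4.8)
The plan is to get the per-edge bound $x_i^\star \geq \frac{1}{2-x_{i+1}}$ directly from the downward equation, then bootstrap it into the explicit bound by a downward induction on $i$ (from $i=n$ to $i=1$), and finally turn that into the bound on $w_i^\star$ by a telescoping product. None of the steps is an obstacle; the only points requiring a moment's care are (a) that the fixed point satisfies the downward equations~(\ref{eq:downward}) with all $x_i < 1$ (established in the proof of Theorem~\ref{lem:existence}), (b) the monotonicity of $t \mapsto \frac{1}{2-t}$ on $[0,1]$, and (c) that under the normalization $d_0 = 1$ of Section~\ref{sec:compute} we have $w_0^\star = d_0 = 1$.

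First I would recall that a fixed point satisfies, for every edge $e_i$,
$$x_i^\star = \frac{w_{i-1}^\star + (1 - x_{i+1}^\star)d_i}{(2 - x_{i+1}^\star)w_{i-1}^\star}.$$
Since $d_i \geq 0$, $1 - x_{i+1}^\star \geq 0$, and $w_{i-1}^\star > 0$, dropping the nonnegative term $(1-x_{i+1}^\star)d_i$ in the numerator gives
$$x_i^\star \;\geq\; \frac{w_{i-1}^\star}{(2 - x_{i+1}^\star)w_{i-1}^\star} \;=\; \frac{1}{2 - x_{i+1}^\star},$$
which is the first assertion.

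Next I would prove $x_i^\star \geq \frac{n-i+1}{n-i+2}$ by downward induction on $i$. For the base case $i = n$, using the convention $x_{n+1} = 0$ we get $x_n^\star \geq \frac{1}{2-0} = \frac12 = \frac{n-n+1}{n-n+2}$. For the inductive step, assume $x_{i+1}^\star \geq \frac{n-i}{n-i+1}$. Since $t\mapsto \frac{1}{2-t}$ is increasing on $[0,1)$, the per-edge bound gives
$$x_i^\star \;\geq\; \frac{1}{2 - x_{i+1}^\star} \;\geq\; \frac{1}{\,2 - \frac{n-i}{n-i+1}\,} \;=\; \frac{1}{\,\frac{n-i+2}{n-i+1}\,} \;=\; \frac{n-i+1}{n-i+2},$$
completing the induction.

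Finally, since $d_0$ is normalized to $1$ we have $w_i^\star = \prod_{j=1}^{i} x_j^\star$, and plugging in the bound just proved yields the telescoping product
$$w_i^\star \;\geq\; \prod_{j=1}^{i} \frac{n-j+1}{n-j+2} \;=\; \frac{n}{n+1}\cdot\frac{n-1}{n}\cdots\frac{n-i+1}{n-i+2} \;=\; \frac{n-i+1}{n+1},$$
which is the last assertion. (For the original, unnormalized bids this bound simply scales by $d_0$.)
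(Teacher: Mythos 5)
Your proof is correct and follows essentially the same route as the paper: derive the per-edge bound $x_i^\star \geq \frac{1}{2-x_{i+1}^\star}$ from the downward equation by dropping the nonnegative $(1-x_{i+1}^\star)d_i$ term, then use downward induction starting from $x_{n+1}=0$ for the explicit bound on $x_i^\star$, and finally telescope the product for $w_i^\star$. You simply spell out the inductive and telescoping steps that the paper compresses into "the other parts follow by induction."
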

\begin{proof}
This follows immediately from the downward equations:
$$x_i\ =\ \frac{w_{i-1} + (1 - x_{i+1})d_i}{(2 - x_{i+1})w_{i-1}}\ \geq\ \frac{1}{2 - x_{i+1}}.$$ The other parts follow by induction, starting with $x_{n+1}^\star = 0$.
\end{proof}

Recall that $\gamma = \min\{1 - \max_{i > 0} \{d_i\}, \frac{1}{n+2}\}$.

\begin{lemma}\label{lem:comp1}
For all $i$ we have $x_i^\star \leq 1 - \gamma^{4n}$ and $w_i^\star \geq d_i + \gamma^{4n+1}$.
\end{lemma}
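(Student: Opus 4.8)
The plan is to reduce both inequalities of the lemma to a single quantitative lower bound on $\delta_1:=1-x_1^\star$. Throughout write $\delta_j:=1-x_j^\star$ (so $\delta_{n+1}=1$) and $\rho_i:=w_i^\star-d_i$; recall that $\rho_i>0$ at any fixed point by Lemma~\ref{lem:increasing}, that $d_0=1$ (we are in the normalized setting of Section~\ref{sec:compute}), that $d_i\le 1-\gamma$ for $i>0$, and that $\gamma\le\frac1{n+2}$. From the downward equation~(\ref{eq:downward}) together with $w_i^\star=x_i^\star w_{i-1}^\star$ one computes the identities
\[
\rho_i \;=\; \frac{w_{i-1}^\star-d_i}{1+\delta_{i+1}},\qquad \delta_i\;=\;\frac{\delta_{i+1}\,\rho_i}{w_{i-1}^\star},\qquad\text{hence}\qquad \delta_1\;=\;\prod_{i=1}^{n}\frac{\rho_i}{w_{i-1}^\star}\;\ge\;\prod_{i=1}^{n}\rho_i,
\]
the telescoping being legitimate as all $w_j^\star>0$, and the last step using $w_{i-1}^\star\le w_0^\star=d_0=1$. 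Since $w_{i-1}^\star\ge w_i^\star\ge\rho_i$, the identity $\delta_{i+1}=w_{i-1}^\star\delta_i/\rho_i$ shows that $(\delta_i)$ is nondecreasing, so it suffices to prove $\delta_1\ge\gamma^{4n}$, which then gives $x_i^\star=1-\delta_i\le 1-\delta_1\le 1-\gamma^{4n}$ for every $i$. I would also record, using $\delta_{i+1}\le 1$ and Lemma~\ref{lem:lowerbounds}, the bound $\rho_i\ge w_{i-1}^\star\delta_i\ge\frac{n-i+2}{n+1}\delta_i\ge\frac{2}{n+1}\delta_i\ge\gamma\,\delta_i$ (the last step because $\frac2{n+1}>\frac1{n+2}\ge\gamma$).

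The crux is to show $\prod_{i=1}^n\rho_i\ge\gamma^{4n}$, and the obstacle is that plugging the crude bound $\rho_i\ge\gamma\delta_i\ge\gamma\delta_1$ into the product is circular (it yields only the vacuous $1\ge\gamma^n\delta_2\cdots\delta_n$). The fix is a second lower bound on $\rho_i$ that is strong precisely when $\delta_i$ is small: from $\rho_i=\frac{w_{i-1}^\star-d_i}{1+\delta_{i+1}}\ge\frac{w_{i-1}^\star-d_i}{2}$, from $d_i\le 1-\gamma$, and from $w_{i-1}^\star=\prod_{l=1}^{i-1}(1-\delta_l)\ge 1-\sum_{l=1}^{i-1}\delta_l\ge 1-(i-1)\delta_{i-1}$ (monotonicity of $(\delta_l)$; put $\delta_0:=0$), one gets $\rho_i\ge\tfrac12\big(\gamma-(i-1)\delta_{i-1}\big)$. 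Now fix the threshold $\tau:=\frac{\gamma}{2n}$ and let $k$ be the largest index with $\delta_k\le\tau$ (well defined since $\delta_0=0$). For $i\le k$ we have $\delta_{i-1}\le\delta_k\le\tau$ and $i-1\le n$, so the second bound gives $\rho_i\ge\tfrac12(\gamma-n\tau)=\frac{\gamma}{4}$; for $i>k$ we have $\delta_i>\tau$, so the first bound gives $\rho_i\ge\frac{2}{n+1}\delta_i>\frac{2}{n+1}\tau=\frac{\gamma}{n(n+1)}$. Hence $\delta_1\ge\prod_i\rho_i\ge\left(\frac{\gamma}{4}\right)^{k}\left(\frac{\gamma}{n(n+1)}\right)^{n-k}$. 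Since $\gamma^{-1}\ge n+2\ge 3$, we have $\gamma^3\le(n+2)^{-3}\le\min\{\tfrac14,\tfrac1{n(n+1)}\}$, so each base is at least $\gamma^4$ and $\delta_1\ge(\gamma^4)^{k}(\gamma^4)^{n-k}=\gamma^{4n}$, which proves the first inequality.

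The second inequality is then immediate: $w_i^\star-d_i=\rho_i\ge\gamma\,\delta_i\ge\gamma\,\delta_1\ge\gamma\cdot\gamma^{4n}=\gamma^{4n+1}$. The step I expect to be the real obstacle is the middle paragraph — namely isolating the two complementary estimates for $\rho_i$ (the ``$\delta_i$ small'' regime, controlled by $w_{i-1}^\star$ being close to $1$; the ``$\delta_i$ not small'' regime, controlled by $\rho_i\ge w_{i-1}^\star\delta_i$) and choosing $\tau$ so that the split product collapses cleanly to $\gamma^{4n}$; everything else is bookkeeping with identities obtained directly from~(\ref{eq:upward})--(\ref{eq:downward}). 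A minor but necessary point is that all the manipulations are legitimate only because a fixed point exists with $x_i^\star<1$ and $\rho_i>0$, which we already have from Theorem~\ref{lem:existence} and Lemma~\ref{lem:increasing}.
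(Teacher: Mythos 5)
Your proof is correct, and it takes a genuinely different route from the paper's. The paper argues by contradiction: it assumes $x_i^\star > 1-\gamma^{4n}$ for some $i$, propagates this bound downward to show every $x_j^\star$, $j\le i$, is also close to $1$, deduces $w_{i-1}^\star > 1 - n\gamma^{4n}$, then chains the upward equation to force $x_{n-1}^\star > 1-\gamma^{2n}$, and finally gets a contradiction with the downward equation for $x_n^\star$ (which cannot exceed $\tfrac12 + \tfrac{d_n}{2w_{n-1}^\star}$). The bound $w_i^\star \ge d_i + \gamma^{4n+1}$ is then derived by a second, short contradiction argument using Lemma~\ref{lem:lowerbounds}.

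You instead give a direct proof: the exact telescoping identity $\delta_1 = \prod_{i=1}^n \rho_i / w_{i-1}^\star$ (which follows from rewriting the upward equation as $\delta_i = \delta_{i+1}\rho_i/w_{i-1}^\star$ and using $\delta_{n+1}=1$), followed by a two-regime lower bound on each $\rho_i$ --- one bound tight when $\delta_i$ is small (via $w_{i-1}^\star \ge 1 - (i-1)\delta_{i-1}$ and $d_i \le 1-\gamma$), the other when $\delta_i$ is not small (via $\rho_i \ge w_{i-1}^\star\delta_i$ and Lemma~\ref{lem:lowerbounds}), with the threshold $\tau=\gamma/(2n)$ chosen so that both bases dominate $\gamma^4$. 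This collapses cleanly to $\delta_1 \ge \gamma^{4n}$, and the monotonicity $\delta_i \ge \delta_1$ and the inequality $\rho_i \ge \gamma\delta_i$ then give both conclusions at once. I checked the identities and the arithmetic (including $\gamma^3 \le \min\{1/4, 1/(n(n+1))\}$ for $n\ge 1$) and they hold. Your approach is arguably more illuminating: the product formula makes explicit exactly which quantities govern how close $x_1^\star$ can get to $1$, whereas the paper's argument obscures this behind an iterated contradiction. The paper's version is somewhat shorter, but yours delivers the same polynomial exponent with a cleaner structural picture, and in particular makes the second inequality fall out immediately rather than requiring a separate contradiction.
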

\begin{proof}
For notational convenience, we drop the $\star$ superscript. Suppose $x_i > 1 - \gamma^{4n}$ for some $i$. Then we get a contradiction as follows. First, we have
$$x_{i-1} = \frac{w_{i-1} + (1 - x_{i})d_{i-1}}{(2 - x_{i})w_{i-1}} > \frac{1}{2 - x_i} > \frac{1}{1 + \gamma^{4n}} > 1 - \gamma^{4n}.$$
Inductively, we get that for all $j \leq i$, $x_j > 1 - \gamma^{4n}$.

Hence, $w_{i-1} = \prod_{j=1}^{i-1} x_j > 1 - n\gamma^{4n}$. Then,
$$1 - x_{i+1}\ =\ \frac{(1 - x_i)w_{i-1}}{x_iw_{i-1} - d_i}\ <\ \frac{\gamma^{4n}}{(1 - n\gamma^{4n}) - (1 - \gamma)}\ \leq\ \gamma^{4n - 2}$$
since $\gamma \leq \frac{1}{n+2}$. Hence, we get that $x_{i+1} > 1 - \gamma^{4n-2}$. Continuing inductively, we get that
$x_{n-1} > 1 - \gamma^{(4n - 2(n-1-i))} > 1 - \gamma^{2n}$. Then, we have $w_{n-1} > 1 - n\gamma^{2n}$. But then using the downward equation for $x_n$, we get
\begin{eqnarray*}
&&1 - \gamma^{2n} < x_n = \frac{1}{2} + \frac{d_n}{2w_{n-1}} < \frac{1}{2} + \frac{1 - \gamma}{2(1 - n\gamma^{2n})} \\
&& < 1 - \frac{1}{2}\gamma + \frac{1}{2}n\gamma^{2n}
\quad \Longrightarrow \quad \frac{1}{n+2} < \gamma^{2n-1} \leq \gamma,
\end{eqnarray*}
which is a contradiction.

So we have established that all $x_i \leq 1 - \gamma^{4n}$. Now suppose for some $i$, we have $w_i < d_i + \gamma^{4n+1}$. From Lemma~\ref{lem:lowerbounds}, we have $w_{i-1} \geq \frac{1}{n+1} > \gamma$. So, we get from the upward equation for $x_{i+1}$:
$$1\ \geq\ 1 - x_{i+1}\ =\ \frac{(1 - x_i)w_{i-1}}{x_iw_{i-1} - d_i}\ > \frac{\gamma^{4n}\cdot \gamma}{\gamma^{4n+1}}\ =\ 1,$$
a contradiction.
\end{proof}

Now we can give an algorithm to approximate $x_1^\star$ very accurately:
\begin{lemma} \label{lem:computation}
Given any $\epsilon > 0$, there is an algorithm that runs in $\text{poly}(n, \log(1/\epsilon))$ time which computes a value $x_1$ such that $|x_1 - x_1^\star| < \epsilon$.
\end{lemma}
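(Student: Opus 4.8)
The plan is to exploit the two-curve characterization from Section~\ref{sec:fixedpoint}: $x_1^\star$ is the unique point in the feasible interval at which the upward curve $U(x_1)$ (the value of $x_n$ obtained by propagating $x_1$ through the upward equations~(\ref{eq:upward})) equals the downward curve $D(x_1) = \tfrac12 + \tfrac{d_n}{2w_{n-1}}$. By Lemma~\ref{lem:curves}, $U$ is strictly increasing and $D$ strictly decreasing on the feasible region, so the sign of $U(x_1) - D(x_1)$ is negative for feasible $x_1 < x_1^\star$ and positive for $x_1 > x_1^\star$. This sign test is the comparison oracle for a binary search. First I would set up the binary search on $x_1 \in [0,1]$: given a candidate $x_1$, run the upward recursion to compute $x_2, \ldots, x_n$ and $w_1, \ldots, w_{n-1}$; if at any stage the recursion leaves $[0,1]$ or produces $w_i \le d_i$ (i.e. $x_1$ is infeasible, hence $x_1 < x_1^\circ \le x_1^\star$), declare $x_1$ too small; otherwise compare $U(x_1) = x_n$ against $D(x_1)$ and move the bracket accordingly. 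After $k$ iterations the bracket has width $2^{-k}$, so $k = \lceil \log_2(1/\epsilon)\rceil$ iterations suffice to locate $x_1^\star$ within $\epsilon$, and each iteration does $O(n)$ arithmetic operations, giving the claimed $\text{poly}(n, \log(1/\epsilon))$ running time.

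The one subtlety is that binary search only converges geometrically in the \emph{bracket width}, not automatically in the \emph{error of the answer}; since we are searching directly for $x_1^\star$ (a coordinate, not solving to a residual tolerance), the bracket width \emph{is} an upper bound on $|x_1 - x_1^\star|$ once $x_1^\star$ is known to lie in the bracket, so this concern is vacuous here — the reduction to a residual bound is not needed for this particular lemma. What does require care is exact bit-precision bookkeeping: the upward recursion involves divisions, and near the endpoints of the feasible interval the denominators $w_i x_i - d_i = w_i - d_i$ can be small. Here Lemma~\ref{lem:comp1} is exactly what I would invoke: it guarantees $w_i^\star \ge d_i + \gamma^{4n+1}$ and $x_i^\star \le 1 - \gamma^{4n}$, so in a neighborhood of $x_1^\star$ all intermediate quantities are bounded away from their degenerate values by $\text{poly}(\gamma^n)$, whose logarithm is $\text{poly}(n, \log(1/\gamma))$ bits — this is presumably where the $\gamma$ dependence of Theorem~\ref{thm:eps-fixed point} enters, though the clean statement of \emph{this} lemma mentions only $\epsilon$, so I would keep the precision analysis light and defer the full $\gamma$-dependent accounting to the proof of Theorem~\ref{thm:eps-fixed point}.

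I expect the main obstacle to be arguing that the binary search's comparison oracle is well-defined and correct at \emph{every} queried point, not just near $x_1^\star$: a queried $x_1$ might be feasible yet have some $w_i$ extremely close to $d_i$, making the sign of $U(x_1) - D(x_1)$ expensive to certify in fixed precision. The clean way around this is to observe that we never need the exact sign at a degenerate point — if a query lands within $\epsilon$ of $x_1^\star$ we can simply halt, and Lemma~\ref{lem:comp1} ensures that outside an $O(\text{poly}(\gamma^n))$-neighborhood of $x_1^\star$ the quantity $U(x_1) - D(x_1)$ is bounded away from zero by $\text{poly}(\gamma^n)$ (again via the monotonicity derivatives computed in Lemma~\ref{lem:increasing}), so $O(n\log(1/\gamma))$ bits of precision always suffice to get the sign right. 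Thus every step of the argument reduces to quantitative versions of lemmas already proved; the remaining work is purely the bookkeeping of constants, which I would state but not grind through.
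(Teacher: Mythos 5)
Your proposal is correct and follows the same approach as the paper: binary search on $x_1\in[0,1]$, with infeasibility (a recursion step producing $x_i\notin[0,1]$ or $w_i\le d_i$) treated as a ``too small'' signal and the sign of $U(x_1)-D(x_1)$ as the comparison oracle in the feasible region — this is exactly the paper's Algorithm~\ref{alg:fixedpoint}. You are in fact more careful than the paper's own proof about numerical precision: the paper simply says ``the running time follows from the properties of binary search,'' implicitly working in an exact-arithmetic model, while you correctly observe that the denominators $w_i-d_i$ can be small and that Lemma~\ref{lem:comp1}'s $\gamma^{O(n)}$ lower bounds are what would control the bit complexity, and you correctly note that this $\gamma$-dependence belongs to Theorem~\ref{thm:eps-fixed point} rather than to the clean $\text{poly}(n,\log(1/\epsilon))$ statement of this lemma.
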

\begin{proof}
The algorithm basically runs binary search using the upward and downward curves for $x_n$ to find an approximation to $x_1^\star$. Care needs to be taken to make sure we are in the feasible region $[x_1^\circ, 1]$ for $x_1$. This can be incorporated in the binary search for $x_1^\star$ by moving to the right half of the current interval whenever the current value of $x_1$ is infeasible (detected by a violation of the conditions $x_i \in [0, 1]$ and $w_i > d_i$.

Once we are in the feasible region, we know that the upward and downward curves for $x_n$ intersect, and hence binary search can proceed by comparing the values of the two curves; moving left if the upward curve is higher than the downward curve, and right otherwise. The pseudocode is given in Algorithm~\ref{alg:fixedpoint}. The running time follows from the properties of binary search.
\end{proof}

\begin{algorithm}[!ht] 
\caption{FixedPoint$(d, \epsilon)$} 
	\begin{algorithmic}[1]
        \STATE Set $\ell = 0$, $h = 1$.

        \WHILE{$|h - \ell| > \epsilon$}

        \STATE $x_1 = \frac{\ell + h}{2}.$

        \STATE Compute $x_2, \ldots, x_n$ using the upward equations (\ref{eq:upward}). Also compute $w_1, w_2, \ldots, w_n$.

        \IF{any $x_i \notin [0, 1]$ or any $w_i \leq d_i$}
            \STATE $\ell = x_1$
        \ELSE
            \STATE Compute $x_n' = \frac{1}{2} + \frac{d_n}{2w_n}$.
            \IF{$x_n > x_n'$}
                \STATE Set $h = x_1$
            \ELSE
                \STATE Set $\ell = x_1$
            \ENDIF
        \ENDIF
        \ENDWHILE
        \STATE Set $x_1 = h$, and compute other $x_i$ using this value of $x_1$. Return this solution.
	\end{algorithmic}
	\label{alg:fixedpoint} 
\end{algorithm}

We show now that an $\epsilon$-fixed point can be computed efficiently as well. This proves Theorem~\ref{thm:eps-fixed point}, which we restate here for convenience:
\begin{theorem}
Given any $\epsilon > 0$, there is an algorithm that runs in $\text{poly}(n, \log(1/\gamma\epsilon))$ time which computes an $\epsilon$-fixed point to the bargaining equations (\ref{eq:balance}).
\end{theorem}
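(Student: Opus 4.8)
The plan is to combine Lemma~\ref{lem:computation}, which produces an $\epsilon'$-accurate estimate of $x_1^\star$ in $\text{poly}(n,\log(1/\epsilon'))$ time, with the quantitative bounds of Lemmas~\ref{lem:lowerbounds} and~\ref{lem:comp1}, in order to control how the error in $x_1$ propagates through the upward equations to the error in the bargaining equations. First I would run the binary search of Algorithm~\ref{alg:fixedpoint} with an internal tolerance $\epsilon'$ (to be chosen as a function of $n$, $\gamma$, and the target $\epsilon$), obtaining a value $x_1$ with $|x_1 - x_1^\star| < \epsilon'$, and then compute $x_2,\ldots,x_n$ from $x_1$ via the upward equations~(\ref{eq:upward}), along with the induced revenues $w_1,\ldots,w_n$. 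The output is this vector $\vec x$; the claim is that for $\epsilon'$ chosen appropriately small (but still only $\text{poly}(n,\log(1/\gamma\epsilon))$ bits), every bargaining equation~(\ref{eq:balance}) is satisfied to within additive $\epsilon$.

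The key step is an error-propagation (Lipschitz/perturbation) estimate: I would show that the map $x_1 \mapsto (x_2,\ldots,x_n)$ defined by iterating~(\ref{eq:upward}) has derivative bounded by some quantity $L$ on the feasible region, so that $|x_i - x_i^\star| \le L^{\,i-1}\epsilon' \le L^{n}\epsilon'$ for all $i$, and similarly $|w_i - w_i^\star|$ is controlled. The crucial inputs here are Lemma~\ref{lem:comp1}, which guarantees $x_i^\star \le 1 - \gamma^{4n}$ and $w_i^\star - d_i \ge \gamma^{4n+1}$ (so the denominators $w_i - d_i$ appearing in~(\ref{eq:upward}) are bounded away from $0$ by roughly $\gamma^{4n+1}$ once $\epsilon'$ is small compared to $\gamma^{4n+1}$), and Lemma~\ref{lem:lowerbounds}, which gives $w_{i-1}^\star \ge \frac{1}{n+1}$. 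From these one reads off that the partial derivatives $\frac{\partial x_{i+1}}{\partial x_i}$ and $\frac{\partial x_{i+1}}{\partial w_{i-1}}$ (written out explicitly in the proof of Lemma~\ref{lem:increasing}) are bounded by something like $\text{poly}(n)\cdot\gamma^{-O(n)}$, so that $L^{n}$ is at most $\gamma^{-O(n^2)}$. Since edges $e_1,\ldots,e_{n-1}$ have their upward equations satisfied exactly by construction once $x_1$ is fixed, the only equation with nonzero residual is the one for $e_n$; and the residual there is a Lipschitz function of $(x_{n-1},x_n,w_{n-1})$, hence bounded by $\gamma^{-O(n^2)}\epsilon'$. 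Choosing $\epsilon' = \epsilon\cdot\gamma^{O(n^2)}$ makes every residual at most $\epsilon$, and $\log(1/\epsilon') = O(n^2\log(1/\gamma) + \log(1/\epsilon)) = \text{poly}(n,\log(1/\gamma\epsilon))$, so the running time of the binary search is $\text{poly}(n,\log(1/\gamma\epsilon))$ as claimed.

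One point requiring care is that the binary search only guarantees $|x_1 - x_1^\star| < \epsilon'$ once it has entered the feasible interval $[x_1^\circ,1]$; I would note that the returned value $x_1 = h$ is feasible by construction (the algorithm only moves the left endpoint on infeasibility, so $h$ always remains feasible and $\ge x_1^\circ$), and that Lemma~\ref{lem:feasible-region} together with Lemma~\ref{lem:comp1} ensures the true fixed point $x_1^\star$ lies strictly inside this interval, so the comparison of the upward and downward curves is meaningful throughout. I expect the main obstacle to be bookkeeping the exponent in the $\gamma^{-O(n^2)}$ Lipschitz bound cleanly — i.e., verifying that the compounding of the per-step blow-up over $n$ iterations, combined with the $\gamma^{4n+1}$-scale lower bound on the denominators $w_i-d_i$, indeed yields a bound whose logarithm is polynomial in $n$ and $\log(1/\gamma)$; everything else is a routine propagation-of-errors argument built on the already-established monotonicity and quantitative lemmas.
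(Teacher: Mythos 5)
Your proposal follows the paper's own proof essentially step for step: run the binary search of Lemma~\ref{lem:computation} to internal tolerance $\epsilon'$, propagate the error through the upward equations using the $\gamma^{4n+1}$ lower bound on $w_i^\star - d_i$ from Lemma~\ref{lem:comp1} to get a per-step blow-up of $\gamma^{-O(n)}$ (hence $\gamma^{-O(n^2)}$ total), and set $\epsilon' = \gamma^{O(n^2)}\epsilon$, matching the paper's choice $\epsilon_1 = \gamma^{10n^2}\epsilon$. The paper frames the error propagation as a coupled recurrence $\eta_{i+1}\le\eta_i/\gamma^{10n}$ and bounds all $n$ residuals by $O(\epsilon)$ from the $\epsilon$-closeness of every $x_i, w_i$ to the fixed point, whereas you phrase it as a Lipschitz bound and note (slightly more cleanly) that only $e_n$'s balance equation has a nonzero residual since $x_2,\ldots,x_n$ satisfy the upward equations exactly by construction — but this is a cosmetic difference, not a different route.
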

\begin{proof}
The idea is to compute $x_1^\star$ accurately enough so that for all $i$, we have $|x_i - x_i^\star| < \epsilon$, and $|w_i - w_i^\star| < \epsilon$. Since all these quantities are in $[0, 1]$, such an approximation implies that the balance condition is satisfied up to an additive error of $10\epsilon$ (the constant $10$ here is a crude estimate).

Suppose we estimate $x_1^\star$ to accuracy $\epsilon_1$. Then we compute bounds on the accuracy $\epsilon_i$ to which $x_i^\star$, is computed using the upward equations. To do this, we also need bounds $\delta_i$ on the accuracy to which $w_{i-1}^\star$ is computed. From the upward equations, using the fact that the denominator $w_i^\star - d_i > \gamma^{4n+1}$, we get the following recurrence relation:
$$\epsilon_{i+1}\ \leq\ \frac{c(\epsilon_i + \delta_i)}{\gamma^{8n+2}}\ \leq\ \frac{(\epsilon_i + \delta_i)}{\gamma^{10n}} $$
for some constant $c$ obtained by looking at the number of arithmetic operations performed. The last inequality follows by choosing $\gamma < 1/c$.

As for $\delta_{i+1}$, we get the following recurrence relation:
$$\delta_{i+1}\ \leq\ c(\epsilon_i + \delta_i)\ \leq\ \frac{(\epsilon_i + \delta_i)}{\gamma^{10n}},$$
where for convenience of notation we use the same constant $c$. Putting these together, and setting $\eta_i = \epsilon_i + \delta_i$, we get
$$\eta_{i+1}\ \leq\ \frac{1}{\gamma^{10n}}\eta_i,$$
and so for all $i$, we get
$$\eta_i\ \leq\ \frac{1}{\gamma^{10n^2}}\eta_1.$$ Note that $\eta_1 = \epsilon_1 + \delta_1 = \epsilon_1$. So by choosing $\epsilon_1 = \gamma^{10n^2}\epsilon$ we get that $\eta_i \leq \epsilon$, and then the solution computed is a $10\epsilon$-fixed point. The running time for the binary search algorithm
is  $\text{poly}(n, \log(1/\epsilon_1)) = \text{poly}(n,
\log(1/\gamma\epsilon))$.
\end{proof}

\end{document}